\crefname{figure}{Figure}{Figures}%
\crefname{lemma}{Lemma}{Lemmas}%
\crefname{remark}{Remark}{Remarks}%
\crefname{theorem}{Theorem}{Theorems}%
\crefname{appendix}{Appendix}{Appendices}%
\crefname{equation}{}{}%
\theoremstyle{plain}%
\newtheorem{theorem}{Theorem}[section]%
\newtheorem{lemma}[theorem]{Lemma}%
\theoremstyle{definition}%
\theoremstyle{remark}%
\newtheorem{remark}[theorem]{Remark}%
\theoremstyle{example}%
\numberwithin{equation}{section}%
\definecolor{bg}{rgb}{0.95,0.95,0.95}%
\DeclareMathOperator{\E}{\mathbf{E}}%
\DeclareMathOperator{\var}{Var}%
\DeclareMathOperator{\cov}{Cov}%
\newcommand{\dd}{\mathrm{d}}%
\DeclareMathOperator{\heaviside}{Heaviside}%
\newcommand{\rset}{\textbf{R}}%
\newcommand{\given}{\mid}%
\newcommand{\defeq}{\mathrel{\mathop:}=}%
\newcommand{\eqdef}{=\mathrel{\mathop:}}%
\newcommand{\Err}{\mathrm{Err}}%
\newcommand{\bY}{\boldsymbol{Y}\!}%
\newcommand{\bZ}{\boldsymbol{Z}}%
\newcommand{\bz}{\boldsymbol{z}}%
\newcommand{\by}{\boldsymbol{y}}%
\newcommand{\bxi}{\boldsymbol{\xi}}%
\newcommand{\bmu}{\boldsymbol{\mu}}%
\newcommand{\bSigma}{\boldsymbol{\Sigma}}%
\newcommand{\DeltaSYSY}{\mathcal{S}(\bY_{\cdot})^2_{t_i, s}}%
\newcommand{\DeltaSY}{\mathcal{S}(\bY_{\cdot})_{t_i, s}}%
\newcommand{\DeltaSYk}{\mathcal{S}(\bY_{\cdot})^k_{t_i, s}}%
\newcommand{\vertiii}[1]{{\left\vert\kern-0.25ex\left\vert\kern-0.25ex\left\vert
          #1
        \right\vert\kern-0.25ex\right\vert\kern-0.25ex\right\vert}}%
\newcommand{\abs}[1]{\left \lvert #1 \right \rvert}
\newcommand{\norm}[1]{\left \lVert #1 \right \rVert}
\newsavebox\myboxA
\newsavebox\myboxB
\newlength\mylenA
\newcommand*\xoverline[2][0.75]{%
    \sbox{\myboxA}{$\m@th#2$}%
    \setbox\myboxB\null%
    \ht\myboxB=\ht\myboxA%
    \dp\myboxB=\dp\myboxA%
    \wd\myboxB=#1\wd\myboxA%
    \sbox\myboxB{$\m@th\overline{\copy\myboxB}$}%
    \setlength\mylenA{\the\wd\myboxA}%
    \addtolength\mylenA{-\the\wd\myboxB}%
    \ifdim\wd\myboxB<\wd\myboxA%
       \rlap{\hskip 0.5\mylenA\usebox\myboxB}{\usebox\myboxA}%
    \else
        \hskip -0.5\mylenA\rlap{\usebox\myboxA}{\hskip 0.5\mylenA\usebox\myboxB}%
    \fi}
\renewcommand{\bar}{\xoverline}%
\title[Weak error rates linear rough volatility]{Weak error rates for option
  pricing under linear rough volatility}%
\author{Christian Bayer$^1$}%
\address{$^{1)}$WIAS, Mohrenstr.~39, 10117 Berlin, Germany.}%
\author{Eric Joseph Hall$^{2}$}%
\address{$^{2)}$University of Dundee, School of Science and Engineering, Mathematics Division, Dundee DD1 4HR, UK.}%
\author{Ra\'ul Tempone$^{3,4}$}%
\address{$^{3)}$RWTH Aachen University, Chair of Mathematics for
  Uncertainty Quantification, Pontdriesch 14-16, 52062 Aachen, Germany.}%
\address{$^{4)}$King Abdullah University of Science and Technology
  (KAUST), Computer, Electrical and Mathematical Sciences \&
  Engineering Division (CEMSE), Thuwal 23955-6900, Saudi Arabia.}%
\email{christian.bayer@wias-berlin.de, ehall001@dundee.ac.uk,
  tempone@uq.rwth-aachen.de}%
\keywords{rough volatility, option pricing, weak error,
  Euler--Maruyama, non-Markovian dynamics, rough Stein--Stein model.}%
\subjclass[2010]{91G60, 91G20, 65C20}%
\begin{document}%

\begin{abstract}
In quantitative finance, modeling the volatility structure of underlying assets is vital to pricing options. Rough stochastic volatility models, such as the rough Bergomi model [Bayer, Friz, Gatheral, Quantitative Finance 16(6), 887-904, 2016], seek to fit observed market data based on the observation that the log-realized variance behaves like a fractional Brownian motion with small Hurst parameter, $H < 1/2$, over reasonable timescales. Both time series of asset prices and option-derived price data indicate that $H$ often takes values close to $0.1$ or less, i.e., rougher than Brownian motion. This change improves the fit to both  option prices and time series of underlying asset prices while maintaining parsimoniousness. However, the non-Markovian nature of the driving fractional Brownian motion in rough volatility models poses severe challenges for theoretical and numerical analyses and for computational practice. While the explicit Euler method is known to converge to the solution of the rough Bergomi and similar models, its strong rate of convergence is only $H$. We prove rate $H + 1/2$ for the weak convergence of the Euler method for the rough Stein--Stein model, which treats the volatility as a linear function of the driving fractional Brownian motion, and, surprisingly, we prove rate one for the case of quadratic payoff functions. Indeed, the problem of weak convergence for rough volatility models is very subtle; we provide examples demonstrating the rate of convergence for payoff functions that are well approximated by second-order polynomials, as weighted by the law of the fractional Brownian motion, may be hard to distinguish from rate one empirically. Our proof uses Talay--Tubaro expansions and an affine Markovian representation of the underlying and is further supported by numerical experiments. These convergence results provide a first step toward deriving weak rates for the rough Bergomi model, which treats the volatility as a nonlinear function of the driving fractional Brownian motion.
\end{abstract}

\maketitle%

\section{Introduction}
\label{sec:introduction}

\emph{Rough} stochastic volatility models form an increasingly popular
paradigm in quantitative finance, as they simultaneously address two
empirical challenges. Firstly, time series of realized variance
indicate that variance is rough in the sense of having Hölder
regularity $H \ll 1/2$, see
\cite{GatheralEtAl:2018vr,BennedsenLundePakkanen:2016sv,FukasawaTakabatakeWestphal:2019aa}.
Secondly, rough volatility models recover the power-law explosion of
the at the money %
implied volatility skew of the form $\tau^{-\gamma}$ for
$\gamma \sim 1/2$ as time to maturity $\tau \to 0$. In fact, these two
constants are linked by $\gamma = 1/2 - H$, giving further evidence of
regularity $H$ being small, say around $0.1$. We refer to
\cite{BayerFrizGatheral:2016rv} for the pricing perspective.

To fix notation, we consider a rough stochastic volatility model for
an asset price process $S_t$ of the form
\begin{equation*}
  \label{eq:roughVol_asset}
  \dd S_t = \sqrt{v_t} S_t \dd Z_t\,,
\end{equation*}
where $Z$ is a Brownian motion (Bm). There are two classes of rough
volatility models which differ in the specification of the
instantaneous variance component $v_t$. The \emph{rough Heston model}
(\cite{ElEuchRosenbaum:2019rh}) is an example of one kind, with $v_t$
given as a solution to a Volterra stochastic differential equation
(SDE) with a power law kernel $K(r) \sim r^{H-1/2}$, $r>0$. This paper
will consider an alternative where the variance process is an explicit
function of a fractional Brownian motion (fBm) $W^H_t$, which does not
need to be the classical fBm. For instance, the \emph{rough Bergomi
  model} (\cite{BayerFrizGatheral:2016rv}) is specified by the choice
\begin{equation}
  \label{eq:roughBergomi_v}
  v_t \defeq \xi(t) \exp\left( \eta W^H_t - \frac{1}{2} \eta^2 t^{2H} \right),
\end{equation}
where $\xi(t)$ denotes the \emph{forward variance} and $W^H_t$ denotes
the \emph{Riemann--Liouville} fBm given by
\begin{equation}
  \label{eq:fBm}
  W^H_t \defeq \int_0^t K(t-s) \dd W_s,
  \quad K(r) \defeq \sqrt{2H} r^{H-1/2},
\end{equation}
for a Bm $W$ with correlation $\rho$ with $Z$. A related
  model where the variance process is an explicit function of the fBm
  is the \emph{fractional} or \emph{rough Stein--Stein model}
  (\cite{AbiJaber:2021ss}), given by
\begin{align}
  &\dd S_t = v_t S_t \dd Z_t\,, \notag \\
  &v_t = v_0(t) + \int_{0}^{T} K(t,s) \kappa v_s \dd{s}
    + \int_{0}^{T} K(t,s) \eta \dd W_s\,, \label{eq:fSS-vol}
\end{align}
for a Volterra kernel $K$ and for
arbitrary correlation $\rho$ between $Z$ and $W$. This extends the classic Stein--Stein model (\cite{SteinStein:1991fm}) and its generalization (\cite{SchoebelZhu:1999sv}). For the particular choice $\kappa = 0$
and $K(t,s) = K(t-s)$ from \cref{eq:fBm}, the volatility term
\cref{eq:fSS-vol} is a linear function of the fBm,
\begin{equation*}
  \label{eq:fSS-vol-reduc}
  v_t = v_0(t) + \eta W_t^H\,.
\end{equation*}
For the later volatility, the rough Stein--Stein model can be
viewed as a simplified fractional SABR model that enables explicit
computations of certain quantities of interest (\cite{gatheral-bbq-slides}).

The modelling advantages gained by capturing these two empirical
challenges, i.e., low H\"older regularity ($H \ll 1/2$) and the
power-law explosion, using a rough stochastic volatility model are
paid for both on the theoretical and the numerical side. Indeed, rough
stochastic volatility models are neither semi-martingales nor Markov
processes. Despite the former, rough volatility models do not violate
the no-arbitrage-condition, as the asset price process itself is a
martingale. On the other hand, the difficulties caused by the lack of
Markov property are more severe. In particular, there is no finite
dimensional pricing PDE anymore (although we refer to
\cite{JacquierOumgari:2019sv,bayer2020pricing} for implementations of
an infinite-dimensional pricing PDE based on machine learning). For
some rough volatility models of affine Volterra type, for instance,
the rough Heston model, there is still a semi-explicit formula for the
asset price's characteristic function in terms of a deterministic
fractional ODE. Otherwise, the rough stochastic volatility approach
necessitates simulation-based methods.

On the numerical side, $W^H_{t_1}, \ldots, W^H_{t_N}$ can be exactly
sampled at discrete-time points as $W^H$ is a Gaussian process with
known covariance function. (The hybrid scheme of
\cite{BennedsenLundePakkanen:2017hs} is a popular alternative to exact
simulation, sacrificing accuracy for speed.) However, simulation of
$S_t$ requires discretization of a stochastic integral,
even in the case of the rough Bergomi and rough Stein--Stein models. 
As we shall see in further detail
later, we essentially need to compute stochastic integrals of the form
\begin{equation}
  \label{eq:stoch_int_intro} \int_0^T \psi(t, W^H_t) \dd W_t,
\end{equation}
for some deterministic, `nice' function $\psi$. In particular, note
that the integrand is adapted and square-integrable (under appropriate
conditions). Hence, the stochastic integral exists in the classical
It\={o} sense, and strong convergence of the numerical scheme
\begin{equation}
  \label{eq:stoch_int_intro_approx}
  \sum_{i=0}^{n-1} \psi(t_i, W^H_{t_i}) (W_{t_{i+1}} -W_{t_i})
\end{equation}
is also classical. The speed of convergence is considerably less
clear. Indeed, Neuenkirch and Shalaiko~\cite{Neuenkirch:2016ob} proved
strong convergence with rate $H$ for a very similar problem,
i.e.,~phrased in terms of classical fBm, and strong rate $H$ is widely
expected to hold also for the approximation scheme
\cref{eq:stoch_int_intro_approx} to \cref{eq:stoch_int_intro}. Using
techniques from regularity structures, in particular, renormalization
by an exploding constant, \cite{BayerEtAl:2020rs} proved essentially
the same strong rate for a Wong--Zakai type approximation of
\cref{eq:stoch_int_intro}.

Combining our observations---that volatility is rough
($H \approx 0.1$) and typical schemes converge with strong rate $H$---
we run into problems, as the rate of convergence is so small as to
make it indistinguishable from lack of convergence in many cases of
practical importance. Indeed, suppose that $H = 0.1$ and we need $n$
time steps to reach an error tolerance $\epsilon$. If we now decrease
our tolerance by a factor ten, i.e., we require one additional
significant digit, then the number of time-steps needed is increased
by a factor $10^{10}$ in the asymptotic regime.

For most applications we really require weak as opposed to strong
convergence of the numerical scheme. For instance, the price of a
European option with payoff $\varphi$ is $\E[\varphi(S_T)]$, and its
computation relies on weak convergence of the scheme. %
Weak approximation of stochastic integrals is often much faster than
strong approximation. Consider the Euler scheme for standard SDEs (the
case $H = 1/2$). Generically, i.e., when the problem is sufficiently
`nice', the weak rate of convergence is one, whereas the strong rate
is $1/2$. This poses the interesting question about the relation
between the Hölder regularity ($H = 1/2$), the weak rate of
convergence ($1$) and the strong rate of convergence ($1/2$). Indeed,
\cite{Neuenkirch:2016ob} showed us that the strong rate is equal to
the Hölder regularity $H$, but there are several plausible candidates
for the weak rate: $2H$, $H+1/2$, and $1$ (independent of
$H$).\footnote{Anecdotally, we asked several experts on stochastic
  numerics in early stages of working on this problem, and all three
  possibilities were put forward.} We stress that only the last two
alternatives allow for feasible numerical simulations in the truly
rough regime. Bluntly put, if the true weak error only decays
proportionally to $n^{-2H}$ in the number of time steps $n$, then
simulation methods are not viable numerical methods for option pricing
in rough volatility models.

Despite the importance of the problem of determining the weak rate,
only little work has been done. Horvath, Jacquier and Muguruza
\cite{HorvathJacquierMuguruza:2019aa} study a Donsker theorem for a
rough volatility model, which translates into a week tree-type
approximation. The rate of convergence of their method is $H$ in the
number of time-steps. At this stage, we should note that the trees are
non-recombining, implying that the memory load increases exponentially
in the number of time-steps. To the best of our knowledge, this work
provides the only rigorous weak convergence result in the literature
of rough volatility models. Indeed, it is worth pointing out that
standard proof techniques for diffusions, see \cite{TalayTubaro:1990},
strongly rely on the Markov property, and are, hence, not applicable
in this setting.

At the same time, discretization-based simulation methods are often
used in the literature, with great success. While convergence is
rarely considered (not even empirically), we would expect to see
difficulties emerge in the very rough cases $H \approx 0.1$ if the
convergence rates were truly as bad as only $H$ or $2H$. In fact, the
few available empirical studies (for instance,
\cite{BayerEtAl:2020rb}) indicate a much larger weak rate of
convergence. In fact, the authors of \cite{BayerEtAl:2020rb} observe a
weak rate of one which is stable enough to allow accelerated
convergence by Richardson extrapolation.

In this paper we prove novel weak rates for the convergence of the
left-hand rule \cref{eq:stoch_int_intro_approx} to
\cref{eq:stoch_int_intro}:
\begin{theorem}
  \label{thr:main}
  The left-point approximation \cref{eq:stoch_int_intro_approx} to the
  rough stochastic integral \cref{eq:stoch_int_intro} converges with
  weak rate $H+1/2$ for $\psi(t,W^H_t) = W^H_t$ -- i.e., in the rough Stein--Stein model. For the case that the
  payoff $\varphi$ is a quadratic polynomial the convergence is with
  weak rate one.
\end{theorem}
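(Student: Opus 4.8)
The plan is to recast the weak-error analysis for the left-point rule as a Markovian problem by enlarging the state space, then apply a Talay–Tubaro expansion to the resulting (degenerate, but finite-dimensional) diffusion.

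The plan is to combine a Talay--Tubaro telescoping with the affine Markovian lift of the Riemann--Liouville fBm, exploiting that for the linear integrand the one-step discretisation error is conditionally centred. Writing the kernel as a Laplace transform, $K(r)=\int_0^\infty e^{-xr}\,\gamma(\dd x)$ with $\gamma(\dd x)=\tfrac{\sqrt{2H}}{\Gamma(1/2-H)}\,x^{-H-1/2}\,\dd x$, one has $W^H_t=\int_0^\infty Y^x_t\,\gamma(\dd x)$, where $\dd Y^x_t=-xY^x_t\,\dd t+\dd W_t$, $Y^x_0=0$. Then $\bigl(X_t,(Y^x_t)_x\bigr)$, with $X_t=\int_0^t W^H_s\,\dd W_s$, is affine Markov, the value function $v(t,x,\mathbf y)\defeq\E[\varphi(X_T)\mid X_t=x,\ Y_t=\mathbf y]$ is well defined, and --- because the only nonlinearity is the terminal payoff and the $Y$-dynamics are linear --- $v$ and its $x$- and $\mathbf y$-derivatives admit probabilistic (Bismut--Malliavin) representations with $L^p$-norms controlled in terms of the derivatives and polynomial growth of $\varphi$. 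The quadratic case is immediate from this picture: since $s\mapsto X_s$ is an It\^o integral one has $\E[X_T-X_t\mid\mathcal F_t]=0$, hence $v(t,x,\mathbf y)=\alpha x^2+\beta x+\bigl(\alpha\, g(t,\mathbf y)+\gamma\bigr)$ with $g(t,\mathbf y)=\E[(X_T-X_t)^2\mid Y_t=\mathbf y]$, so $\partial_{xx}v\equiv2\alpha$ and $\partial_{xxx}v\equiv0$; and because the Brownian increments are independent of the past, $\E[(X^n_T)^2]=\Delta t\sum_i t_i^{2H}$ while $\E[X^2_T]=\int_0^T t^{2H}\,\dd t$ and $\E[X^n_T]=\E[X_T]=0$. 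The weak error is thus exactly $\alpha$ times the left-rectangle quadrature error of $\int_0^T t^{2H}\,\dd t$, which Euler--Maclaurin (using $2H+1>1$) evaluates to $\tfrac12\alpha\,T^{2H+1}\,n^{-1}+O(n^{-(2-2H)})$ --- weak rate one.

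For general smooth $\varphi$ of polynomial growth I would telescope, $\E[\varphi(X^n_T)]-\E[\varphi(X_T)]=\sum_i\E[\mathcal L_i]$, where $\mathcal L_i=\E\bigl[v(t_{i+1},X_{t_i}+a,Y_{t_{i+1}})-v(t_{i+1},X_{t_i}+b,Y_{t_{i+1}})\mid\mathcal F_{t_i}\bigr]$ is the one-step error, $a=W^H_{t_i}\Delta W_i$, $b=\int_{t_i}^{t_{i+1}}W^H_t\,\dd W_t$, $a-b=-\xi_i$, $\xi_i\defeq\int_{t_i}^{t_{i+1}}\bigl(W^H_t-W^H_{t_i}\bigr)\dd W_t$ (the one-step errors being evaluated along the numerical trajectory, whose moments are controlled uniformly in $n$). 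Expand $v$ in $x$ around $X_{t_i}$ and in $\mathbf y$ around the $\mathcal F_{t_i}$-measurable drift part $Y^{\mathrm{drift}}_{t_i}$ of $Y_{t_{i+1}}$. Since $\xi_i$ is an It\^o integral of an adapted integrand, $\E[\xi_i\mid\mathcal F_{t_i}]=0$, so the leading term $-\E\bigl[\partial_x v(t_{i+1},X_{t_i},Y^{\mathrm{drift}}_{t_i})\,\xi_i\mid\mathcal F_{t_i}\bigr]$ vanishes identically, and what survives couples $\xi_i$ --- through $\nabla_{\mathbf y}\partial_x v$ and the fresh increments $\Delta Y^x_i\defeq\int_{t_i}^{t_{i+1}}e^{-x(t_{i+1}-s)}\dd W_s$, and through the memory term $\delta^{\mathrm{past}}_t\defeq\int_0^{t_i}\bigl(K(t-u)-K(t_i-u)\bigr)\dd W_u$ --- only to quantities which, after the It\^o isometry on $[t_i,t_{i+1}]$ and a Malliavin integration by parts, reduce to the kernel integrals $\int_{t_i}^{t}K(t-u)\,\dd u$ and $\int_0^{t_i}\lvert K(t-u)-K(t_i-u)\rvert\,\dd u$, each of which is $O\bigl((t-t_i)^{H+1/2}\bigr)$ --- a half power better than the $O\bigl((t-t_i)^H\bigr)$ from Cauchy--Schwarz --- so that $\E[\mathcal L_i]=O(\Delta t^{\,H+3/2})$ and $\sum_i\E[\mathcal L_i]=O(n^{-(H+1/2)})$. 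The quadratic-in-$(a,b)$ correction is handled through the identity $\E[a^2-b^2\mid\mathcal F_{t_i}]=-\int_{t_i}^{t_{i+1}}\bigl((t^{2H}-t_i^{2H})+\widetilde B_t\bigr)\dd t$: its deterministic part reassembles (weighted by $\E[\partial_{xx}v]$) the $O(n^{-1})$ quadrature error of the quadratic case, while the mean-zero second-chaos remainder $\widetilde B_t$ again contributes $O\bigl((t-t_i)^{H+1/2}\bigr)$ after a double Malliavin integration by parts; the higher Taylor terms are analogous, the conditional moments $\E[a^m-b^m\mid\mathcal F_{t_i}]$, $m\ge3$, being of order $\Delta t^{\,H+3/2}$ rather than the naive $\Delta t^{m/2}$ because $b$ is an It\^o integral of a slowly varying integrand. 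Taking the Taylor order $k>1+\tfrac1{2H}$ makes the remainder $O\bigl(\lVert X^n_T-X_T\rVert_{L^k}^k\bigr)=O(n^{-kH})=o(n^{-(H+1/2)})$ by hypercontractivity on the second Wiener chaos, and collecting terms yields the rate $H+1/2$.

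The main obstacle is the half-power gain just used: a priori each of the $n$ local errors is only $O(\Delta t^{\,1+H})$, which merely reproduces the strong rate $n^{-H}$, and the extra factor $n^{-1/2}$ comes jointly from the conditional centring $\E[\xi_i\mid\mathcal F_{t_i}]=0$ --- which is special to the linear integrand $\psi(t,w)=w$ and is precisely what is lost for the nonlinear (rough Bergomi) volatility --- and from the elementary fact that $\int_0^h K(r)\,\dd r\asymp h^{H+1/2}\gg h^H$. Making this rigorous requires, first, controlling $v$ and its mixed $x,\mathbf y$-derivatives uniformly over the grid in the infinite-dimensional and (for $H<1/2$) measure-theoretically singular lifted state space, which I would do via Malliavin representations rather than a pricing PDE, and, second, the bookkeeping needed to separate the genuinely $O(n^{-1})$ ``deterministic quadrature'' part of the error from the $O(n^{-(H+1/2)})$ ``fluctuation'' part and to check that the higher Taylor terms affect neither.
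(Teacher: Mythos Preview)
Your proposal follows the same strategy as the paper: affine Markovian lift of $W^H$, Talay--Tubaro telescoping via a value function on the extended state space, and extraction of the local rate $\Delta t^{H+3/2}$ from the kernel integral $\int_0^h K(r)\,\dd r\asymp h^{H+1/2}$; your quadratic argument via the It\^o isometry is exactly the paper's Lemma~4.2 (attributed there to Neuenkirch). The technical execution differs in two respects: the paper discretises the spectral variable to a \emph{finite} system $(Y^1,\dots,Y^{N_L})$ and then proves all constants are uniform in $N_L,L$ --- the bulk of Section~5.1 is a four-region partition of the $(\theta_k,\theta_l)$-plane to establish $\theta$-summability --- whereas you propose to work directly on the infinite-dimensional lift; and the paper computes the derivatives $D^\beta u$ explicitly via the flow (Lemma~3.4) rather than via Malliavin representations. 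Both routes are viable; the paper's is more elementary but heavier on bookkeeping, while yours would need care with the singular measure $\gamma$ on the lifted space.

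One correction: the conditional centring $\E[\xi_i\mid\mathcal F_{t_i}]=0$ is \emph{not} special to the linear integrand --- it holds for any adapted $\psi$, since $a-b$ is always an It\^o integral on $[t_i,t_{i+1}]$. What linearity actually buys is that all second and higher derivatives of $\widehat X^{t,x}_s$ with respect to the initial data vanish, so the fluxes take the clean product form $D^\beta u = c_H^{|\beta|-\beta_1}\E\bigl[\varphi^{(|\beta|)}(\widehat X_T)\prod_j(\Delta\theta_j M^j_{t,T})^{\beta_{j+1}}\bigm|\cdot\bigr]$ with no Fa\`a di Bruno cascades; this is what keeps the second-round Taylor expansions (in the auxiliary functions $f^{kl}$) tractable and is precisely the obstacle the paper flags for extending to rough Bergomi. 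The misattribution does not break your argument for the linear case, but it would mislead an attempt to generalise.
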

We refer to \cref{thm:weak-rate-gen,thm:weak-rate-quad} for more
precise statements. Some remarks are in order:
\begin{itemize}
\item The problem of weak convergence in this setting is very subtle;
  if we restrict ourselves to quadratic polynomials as payoff
  functions, then the weak rate of convergence is actually one, see
  \cref{lem:weak-rate-quad-simple}. This implies the rate of
  convergence for payoff functions $\varphi$ that can be well
  approximated by quadratic polynomials, as seen from the law of the
  solution, may be hard to distinguish from rate one empirically, due
  to prevalence of higher order terms (see \cref{fig:shifted-cubic}).
  Note that the result \ref{lem:weak-rate-quad-simple} and its proof
  were communicated to us by Andreas Neuenkirch \cite{neuenkirch-comm}
  prior to starting this work; \cref{lem:weak-rate-quad-simple}
  indicates rate one for quadratic payoffs $\varphi$ for a more general class
  of $\psi$ (i.e., including rough Bergomi) but it is
  unclear how to generalize this result to a broader class of payoffs.
\item We do not have a lower bound establishing that the weak rate of
  convergence cannot be better than $H+1/2$ in the generic case. We do
  offer numerical evidence for this assertion, though, see
  \cref{fig:conv-general-payoff,fig:conv-rate-one,fig:shifted-cubic}.
\item We do not doubt that the proof extends to the general case of
  non-linear $\psi$, which includes the rough Bergomi model. Indeed,
  the present paper is partly motivated to expose a possible proof
  strategy for the general case. Extending the method of proof using
  Fa\`{a} di Bruno's formula poses some technical challenges, mainly
  due to the needed to control more complicated formulas.
\end{itemize}

Our proof for \cref{thr:main} relies on deriving Taylor expansions for
the weak error using an affine Markovian representation of the
underlying. The basic flavor of this approach, i.e., obtaining a
Markovian extended variable system to facilitate analysis, is a
strategy utilized in other non-Markovian stochastic dynamical systems
such as the Generalized Langevin equation (see,
e.g.,~\cite{Halletal:2016uq,DidierEtAl:2012sc}) and open Hamiltonian
systems (\cite{Rey-Bellet:2006oc}). In the context of rough volatility
models, Markovian approximations were also used
in~\cite{abi2019multifactor}.

\subsection*{Outline of the paper}

In \cref{sec:main-result} we provide the setting and the main result
and discuss the general strategy of the proof.
\Cref{sec:extended-state-space} introduces auxiliary, Markovian
approximations to both~\eqref{eq:stoch_int_intro} and
\cref{eq:stoch_int_intro_approx} based on \cite{CarmonaCoutin:1998bm}.
This high dimensional Markovian problem will serve as a surrogate
problem for most of the convergence analysis.
\Cref{sec:weak-rate-quad} considers the special case of quadratic
payoff functions, for which the general proof strategy simplifies
considerably. We contrast this with a specific proof only applicable
to quadratic payoffs, which also works for general non-linear $\psi$.
The proof of \cref{thr:main} (and \cref{thm:weak-rate-gen}) is then
carried out in \cref{sec:weak-rate-gen}.
  
\section{Problem setting: weak rate of convergence for Euler scheme
  is $H+1/2$}%
\label{sec:main-result}%

We consider a smooth, bounded payoff function $\varphi(X_T)$ for an
underlying
\begin{equation}
  \label{eq:underlying}
  X_t \defeq \int_0^t \psi(s, W^H_s) \dd W_s\,,
\end{equation}
where $W^H_t$ is a Riemann--Liouville fBm given by \cref{eq:fBm} with
Hurst parameter $H \in (0, 1/2)$.
A simplified model of rough stochastic volatility,
\cref{eq:underlying} retains keys features of the rough Bergomi model \cref{eq:roughBergomi_v} and the rough Stein-Stein model \cref{eq:fSS-vol}. Namely, the $X_t$ in
\cref{eq:underlying} is non-Markovian as $W^H_t$, and hence
$\psi(t, W^H_t)$, depends on the full history of $(W_s)_{s \in [0,t]}$
(cf.\ $\psi$ to the instantaneous variance $v_t$ in
\cref{eq:roughBergomi_v}). In fact, for the purposes of European
option, the rough Bergomi model can be reduced to~\cref{eq:underlying}
in the following way (often attributed to \cite{RomanoTouzi:1997sv}).
First, It\={o}'s formula implies that
\begin{equation*}
  \label{eq:RT-derivation-1}
  S_T = S_0 \exp\left( - \frac{1}{2} \int_0^T v_s \dd s + \int_0^T \sqrt{v_s}
    \dd Z_s \right).
\end{equation*}
We can now replace the Bm $Z$ by $\rho W + \sqrt{1-\rho^2} W^{\perp}$ for an
independent Bm $W^{\perp}$. \emph{Conditionally on $W$}, $S_T$ has a
log-normal distribution with parameters
\begin{equation*}
  \label{eq:RT-derivation-2}
  \mu \defeq \log S_0 - \frac{1}{2} \int_0^T v_s \dd s + \rho \int_0^T \sqrt{v_s}
  \dd W_s, \quad \sigma^2 \defeq  (1 - \rho^2) \int_0^T v_s \dd s.
\end{equation*}
If we denote the Black--Scholes price for the payoff function
$\varphi$ at maturity $T$ by $C_{BS}(S_0, \sigma_{BS}^2T, \varphi)$,
for interest rate $r = 0$ and volatility $\sigma_{BS}$, then we get
\begin{equation}
  \label{eq:RT-derivation-3}
  \E[ \varphi(S_T) ] = \E\left[ C_{BS}\left( S_0 \exp\left[  -
        \frac{\rho^2}{2} \int_0^T v_s \dd s + \rho \int_0^T \sqrt{v_s} \dd W_s
      \right], \, (1 - \rho^2) \int_0^T v_s \dd s, \, \varphi \right)
  \right].
\end{equation}
Computation of the right hand side of~\cref{eq:RT-derivation-3}
requires simulation of the Lebesgue integral $\int_0^T v_s \dd s$ as
well as simulation of
\begin{equation}
  \label{eq:RT-derivation-4}
  \int_0^T \sqrt{v_s} \dd W_s = \int_0^T \sqrt{\xi(s)} \exp\left(
    \frac{\eta}{2} W^H_s - \frac{\eta^2}{4} s^{2H} \right) \dd W_s,
\end{equation}
which is of the form~\cref{eq:underlying}.

Presently, we derive weak rates of convergence,
\begin{equation}
  \label{eq:weak-rate-formal}
  \bigl| \E[\varphi(X_T) - \varphi(\bar{X}_{T}^{\Delta t})] \bigr|
  = O(\Delta t^\gamma)\,,
\end{equation}
for the left-hand scheme \cref{eq:stoch_int_intro_approx} with
step-size $\Delta t$ such that $n \Delta t = T$. Restricting to the
the rough Stein--Stein model $\psi(s, W^H_s) = W^H_s$,
the main finding of this work, in \cref{thm:weak-rate-gen} (and
implying the first statement in \cref{thr:main}), is that the weak
rate is $\gamma = H + 1/2$ for the Hurst parameter $H$.

\begin{theorem}[Weak rate]
  \label{thm:weak-rate-gen}
  For general $\varphi \in C_b^\eta$, for integer
    $\eta = \lceil \tfrac{1}{H} \rceil$, and the rough Stein--Stein model $\psi(s, W^H_s) = W^H_s$, we have
  \begin{equation*}
    \label{eq:weak-rate-gen}
    \left|\Err(T, \Delta t) \right|
    = \bigl| \E [\varphi(X_T) - \varphi(\bar{X}^{\Delta t}_{T})] \bigr|
    = O(\Delta t^{H+1/2})\,,
  \end{equation*}
  i.e.~the Euler method is weak rate $H+1/2$.
\end{theorem}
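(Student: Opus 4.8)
The plan is to replace the non-Markovian problem by a finite-dimensional affine Markovian one and then run a Talay--Tubaro style expansion on it. Following \cite{CarmonaCoutin:1998bm}, I would first write the Riemann--Liouville kernel $K(r) = \sqrt{2H}\,r^{H-1/2}$ as a Laplace transform, $K(r) = \int_0^\infty e^{-r x}\,\mu(\dd x)$ for a suitable (infinite) measure $\mu$, so that $W^H_t = \int_0^\infty Y^x_t\,\mu(\dd x)$ where each $Y^x$ is an Ornstein--Uhlenbeck process $\dd Y^x_t = -x Y^x_t\,\dd t + \dd W_t$, $Y^x_0 = 0$. Discretizing the $x$-variable over a grid yields a finite system $(Y^{x_1},\dots,Y^{x_m})$ which is a genuine Markov diffusion driven by the single Brownian motion $W$, and $X_t$ becomes (approximately) a linear functional of this augmented state. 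Crucially, in the rough Stein--Stein case $\psi(s,W^H_s) = W^H_s$ the whole augmented system $(X,Y^{x_1},\dots,Y^{x_m})$ is an affine (in fact linear) SDE with additive-type noise, so its transition semigroup is explicit and all the derivative bounds needed below are uniform and computable. One has to be careful that the number of nodes $m$ and the grid must be coupled to $\Delta t$ so that the quadrature error in $\mu$ does not pollute the $\Delta t^{H+1/2}$ rate; this is a standard but slightly delicate bookkeeping step, and I would state the required growth of $m = m(\Delta t)$ explicitly.

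Next, on the finite Markovian surrogate I would apply the classical Talay--Tubaro machinery: write $u(t,\cdot) \defeq \E[\varphi(X_T)\mid \mathcal F_t]$ as a function of the augmented state, note that $u$ solves the backward Kolmogorov PDE for the affine generator $\mathcal L$, and expand the one-step Euler error
\begin{equation*}
  \E\bigl[u(t_{i+1}, \text{exact step}) - u(t_{i+1}, \text{Euler step})\bigr]
\end{equation*}
by Itô/Taylor. For a standard SDE this gives local error $O(\Delta t^2)$ and hence global weak rate one; here the loss of one half-order comes entirely from the singularity of $K$ at $0$: the coefficient coupling $W^H$ into the dynamics involves $\int_0^{\Delta t} K(r)\,\dd r \sim \Delta t^{H+1/2}$ rather than $\Delta t$, so the leading local error term is $O(\Delta t^{H+3/2})$, which sums to $O(\Delta t^{H+1/2})$ over the $n = T/\Delta t$ steps. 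I would isolate this mechanism cleanly by tracking which terms in the Taylor expansion carry a factor $K$ integrated over a subinterval. The smoothness hypothesis $\varphi \in C_b^\eta$ with $\eta = \lceil 1/H\rceil$ is exactly what is needed to carry enough terms in the expansion so that the remainder is genuinely higher order than $H+1/2$; I would make the counting of derivatives match this threshold.

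The main obstacle I anticipate is twofold. First, controlling the interchange of limits: the augmented system is only an approximation to $X$, so I must show that the Euler scheme applied to $X$ and the Euler scheme applied to the $m$-node surrogate differ by $o(\Delta t^{H+1/2})$ uniformly, and likewise for the exact processes; this requires quantitative estimates on the tail of $\mu$ and on the Laplace-transform quadrature, and is where most of the technical work lives. Second, the remainder term in the Talay--Tubaro expansion involves derivatives of $u$ integrated against powers of the singular kernel near the diagonal, and one must verify these are integrable and sum correctly — in particular that no single step contributes worse than $\Delta t^{H+1/2}$ and that the number of steps $n \sim \Delta t^{-1}$ does not conspire with a hidden $\log$ or with the $m(\Delta t)$ dependence to spoil the rate. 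I expect the affine structure to make the derivative bounds on $u$ completely explicit (Gaussian-type bounds with constants depending only on $T$, $H$, and $\|\varphi\|_{C_b^\eta}$), so the real fight is the singular-kernel bookkeeping rather than any PDE regularity theory.
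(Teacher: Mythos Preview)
Your proposal is the paper's approach: a Carmona--Coutin affine Markovian surrogate for $W^H$ via a family of OU processes, followed by a Talay--Tubaro telescoping expansion of the value function along the Euler scheme, with the $H+1/2$ rate emerging from the kernel integral $\int_0^{\Delta t} K(r)\,\dd r \sim \Delta t^{H+1/2}$. Two tactical points distinguish your plan from the paper's execution and are worth knowing before you start.

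First, you propose to couple the number of OU nodes $m = m(\Delta t)$ to the step size and worry that this dependence might spoil the rate. The paper sidesteps this entirely: it shows (i) that the first and third terms of the splitting $\varphi(X_T) - \varphi(\widehat X_T)$ and $\varphi(\bar X_T^{\Delta t}(\widehat W^H)) - \varphi(\bar X_T^{\Delta t}(W^H))$ vanish as $m \to \infty$ \emph{uniformly in $\Delta t$}, and (ii) that every constant appearing in the Talay--Tubaro local-error bounds on the surrogate is \emph{uniform in $m$} (this is the summability-in-$\theta$ checking that pervades Section~5). One can therefore take $m \to \infty$ first, and no coupling is needed; your anticipated obstacle is dissolved rather than confronted.

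Second, the ``singular-kernel bookkeeping'' is more intricate than your sketch suggests. The leading local-error term (the paper's $J_0$) actually contributes $O(\Delta t^2)$ at first pass---this is exactly why quadratic payoffs get rate one---and the $\Delta t^{H+3/2}$ contribution only emerges after a \emph{second} round of Taylor expansion, now of auxiliary conditional-expectation functions $f^{kl}(Y^k_{t_i},Y^l_{t_i}) = \E[\nu(\bar{\bZ}_{t_i},s)\mid Y^k_{t_i},Y^l_{t_i}]$ in the OU components. Establishing uniformity in $m$ for these second-round terms requires a case analysis partitioning the OU speeds according to whether $\theta_k^p + \theta_l^p$ is $O(1)$, $O(\Delta t^{-\alpha})$, or $\ge \Delta t^{-1}$; the critical region is the last, and that is where $H+1/2$ is actually pinned down. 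So the real fight is not simply ``which terms carry a factor $K$'' but controlling double sums over OU mean-reversion speeds against the singular kernel, and this is where most of the paper's Section~5 lives.
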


The proof of \cref{thm:weak-rate-gen} is presented in
\cref{sec:weak-rate-gen}. Before diving into the machinery needed for
the proof, we first consider some numerical evidence that supports the
rates in \cref{thr:main} and the accompanying remarks. Details of the
implementation are outlined in \cref{sec:numerics}.

The first group of numerical experiments, in
\cref{fig:conv-general-payoff}, provide support for rate $H+1/2$ in
\cref{thm:weak-rate-gen}. In \cref{fig:conv-general-payoff}, the weak
error rate is observed to depend on $H$ for the general
(i.e.~non-quadratic) payoff functions $\varphi(x) = x^3$ and
$\varphi(x) = \heaviside(x)$. Indeed, the best fits (least squares) of
the weak error to $\Delta t$, as well as the extremes suggested by the
upper and lower 95\% confidence interval for the mean based on
$M=3\times 10^6$ samples, is consistent with the rate $H+1/2$.
Comparing \cref{fig:conv-general-payoff:H.05} to
\cref{fig:conv-general-payoff:H.15}, the rate increases (and by
approximately $0.1$) as $H$ increases from $H=0.05$ to $0.15$.
Although the function $\varphi(x) = \heaviside(x)$ is not continuous
and therefore does not fit precisely into our theory, the consistency
of the observed rates in \cref{fig:conv-general-payoff} hint at the
generality of the findings in \cref{thm:weak-rate-gen} to, e.g.,
digital call options.

\begin{figure}[bhtp]
  \centering
  \begin{subfigure}[t]{0.49\textwidth} \includegraphics[width=1\textwidth]{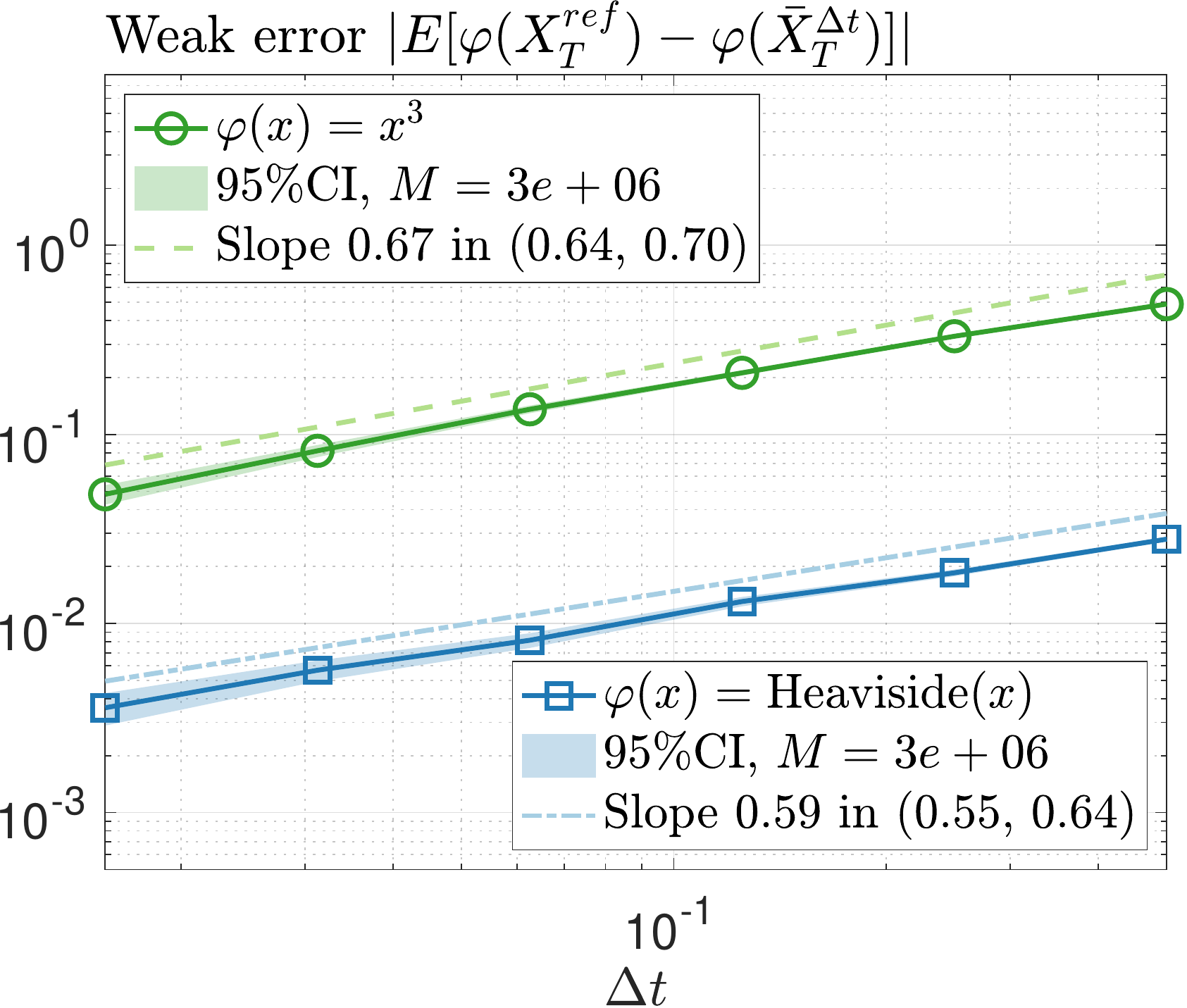}
\caption{$H=0.05$}
   \label{fig:conv-general-payoff:H.05}
    \end{subfigure}%
    \hfill
    \begin{subfigure}[t]{0.49\textwidth}
      \includegraphics[width=1\textwidth]{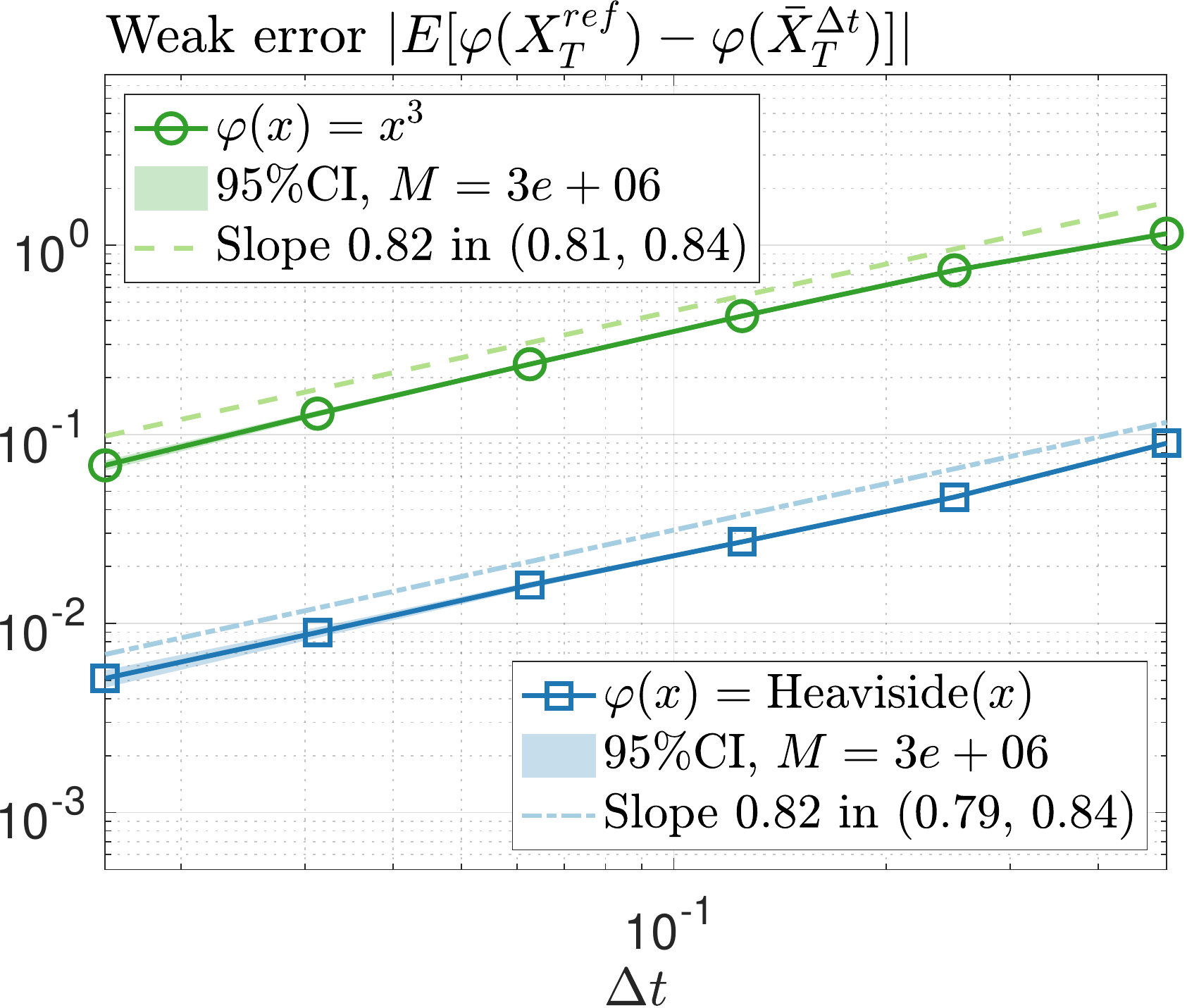}
        \caption{$H=0.15$}
  \label{fig:conv-general-payoff:H.15}
      \end{subfigure}
      \caption{For small Hurst parameters,
        (\subref{fig:conv-general-payoff:H.05}) $H=0.05$ and
        (\subref{fig:conv-general-payoff:H.15}) $H=0.15$, the best fit
        slope for the weak error for scheme
        \cref{eq:stoch_int_intro_approx}, together with extremes
        suggested by the $95\%$ CI based on $M$ observations, are
        consistent with the rate $H+1/2$ obtained in
        \cref{thm:weak-rate-gen} for general payoff functions
        $\varphi$. In particular, the rate holds for the discontinuous
        $\varphi(x)=\heaviside(x)$ suggesting our findings are robust.
        Here $\Delta t \in [2^{-6}, 2^{-1}]$ and the reference mesh is
        $\Delta t^{ref} = 2^{-12}$.}
  \label{fig:conv-general-payoff}
\end{figure}

In \cref{fig:conv-Bm}, we observe that for $H=1/2$, i.e.~standard
Brownian motion, the best fit of the weak error rate is consistent
with the known weak rate one for general payoff functions. However, in
contrast to the rates observed in \cref{fig:conv-general-payoff}, the
behavior of quadratic payoffs looks decidedly different. We observe in
\cref{fig:conv-quadratic-payoff} that the weak rate for quadratic
$\varphi(x) = x^2$ appears to be $\gamma=1$ even for small $H=0.05$
and $H=0.15$. Weak rate one for quadratic payoff functions is recorded
in \cref{thm:weak-rate-quad} and \cref{lem:weak-rate-quad-simple} in
\cref{sec:weak-rate-quad}; this surprising finding, that the rate
depends on the payoff function, will be readily explained using the
asymptotic expansions that are at the center of our approach.

Finally, in \cref{fig:shifted-cubic}, we observe that the best fit of
weak rate to $\Delta t$ for the shifted-cubic $\varphi(x) = (x+1.5)^3$
is consistent with rate $1$ even for small $H=0.05$ and $H=0.15$
(cf.~compare the rates in \cref{fig:shifted-cubic} to those for the
cubic payoff $\varphi(x) = x^3$ in
\cref{fig:conv-general-payoff:H.05,fig:conv-general-payoff:H.15}). As
seen from the law of the solution, the shifted cubic is better
approximated by quadratic polynomials and therefore its rate of
convergence is much harder to distinguish from rate one. This
numerical experiment not only drives home the subtlety of the problem
of deriving weak rates for rough stochastic volatility models, but
also leads us to be optimistic that efficient numerical methods can be
obtained for a wide array of real-world problems where the
\emph{effective} rate of convergence is not as bad as the theoretical
rate.

\begin{figure}[bhtp]
  \centering
    \begin{subfigure}[t]{0.49\textwidth}
      \includegraphics[width=1\textwidth]{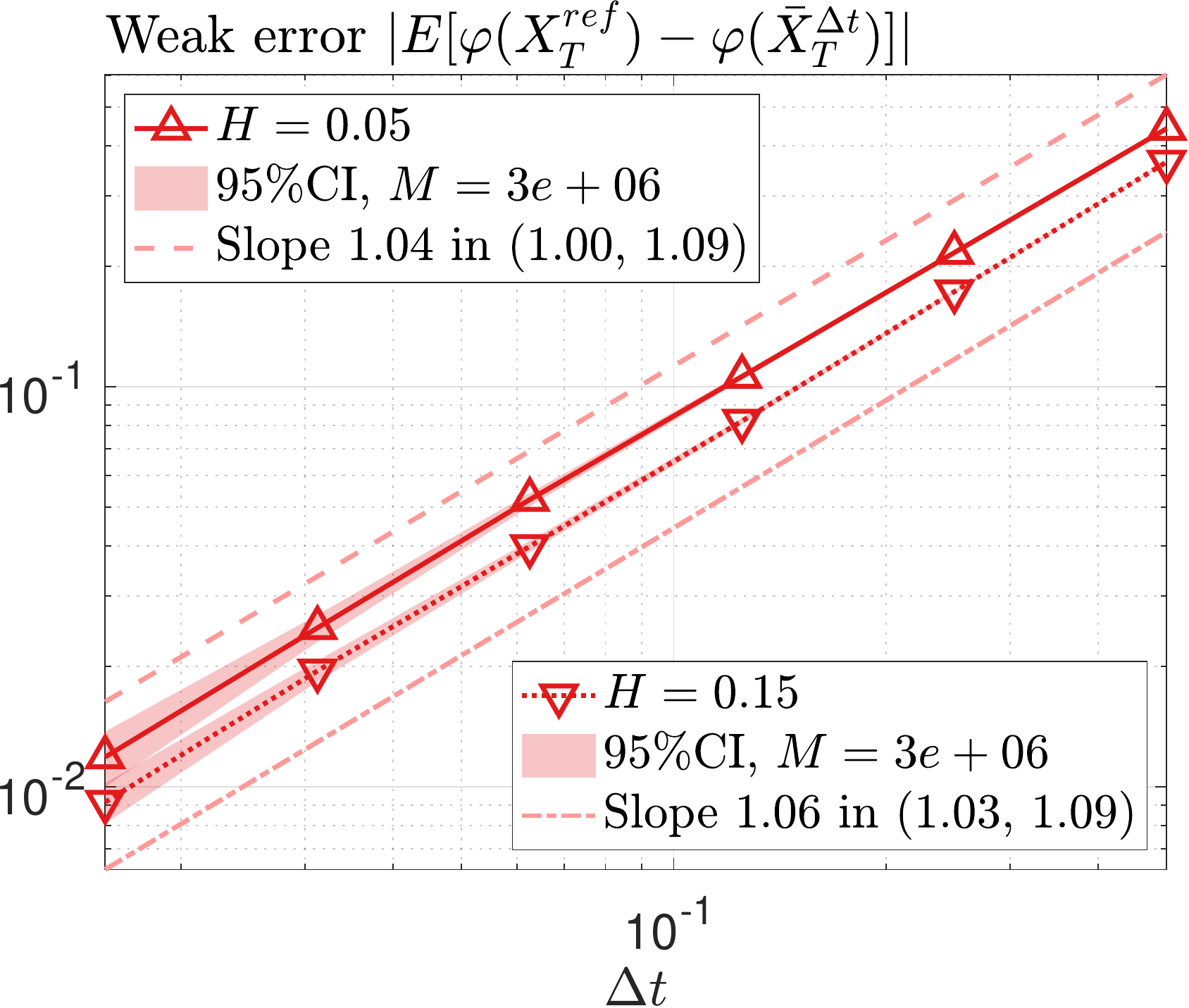}
      \caption{Quadratic $\varphi$, small $H$}
  \label{fig:conv-quadratic-payoff}
\end{subfigure}
\hfill
 \begin{subfigure}[t]{0.49\textwidth}
   \includegraphics[width=1\textwidth]{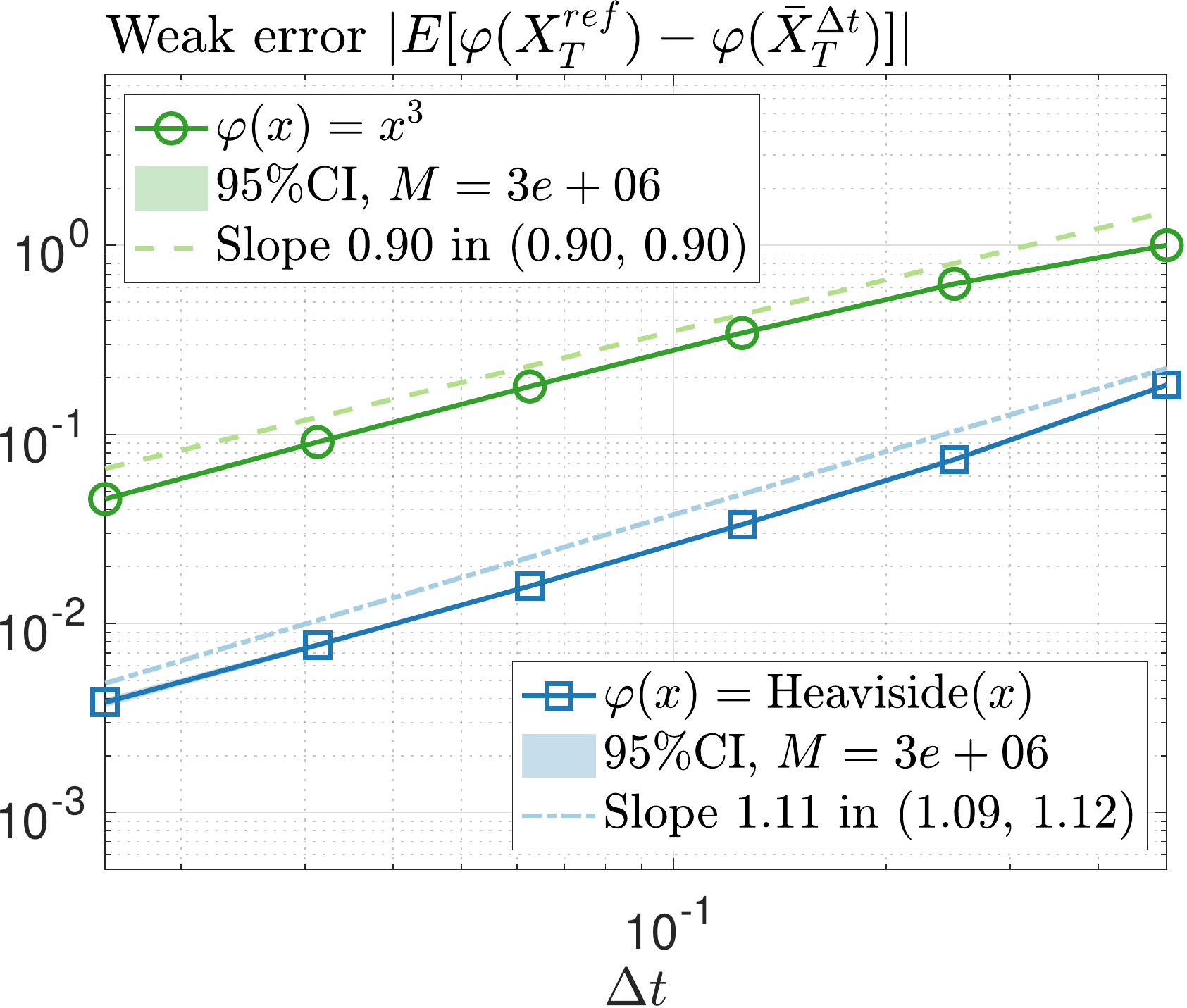} 
   \caption{General (non-quadratic) $\varphi$, $H=0.5$}
   \label{fig:conv-Bm} 
    \end{subfigure}%
    \caption{(\subref{fig:conv-quadratic-payoff}) Surprisingly, the
      best fit line for the weak error for scheme
      \cref{eq:stoch_int_intro_approx} for the quadratic payoff
      $\varphi(x)=x^2$ is consistent with weak rate one even for small
      $H$, as found in \cref{thm:weak-rate-quad}.
      (\subref{fig:conv-Bm}) For Hurst parameter $H=1/2$, the weak
      rate in \cref{thm:weak-rate-gen} for scheme
      \cref{eq:stoch_int_intro_approx} is consistent with the expected
      rate one (for standard Bm), as illustrated by the best fit slope
      for the weak error for $\varphi(x) = x^3$ and
      $\varphi(x) = \heaviside(x)$ (cf.~weak rate $H+1/2$ observed in
      \cref{fig:conv-general-payoff} for small $H$). Here $\Delta t \in [2^{-6}, 2^{-1}]$ and the reference mesh is
        $\Delta t^{ref} = 2^{-12}$.}
      \label{fig:conv-rate-one}
\end{figure}

\begin{figure}[bhtp]
  \centering   \includegraphics[width=0.49\textwidth]{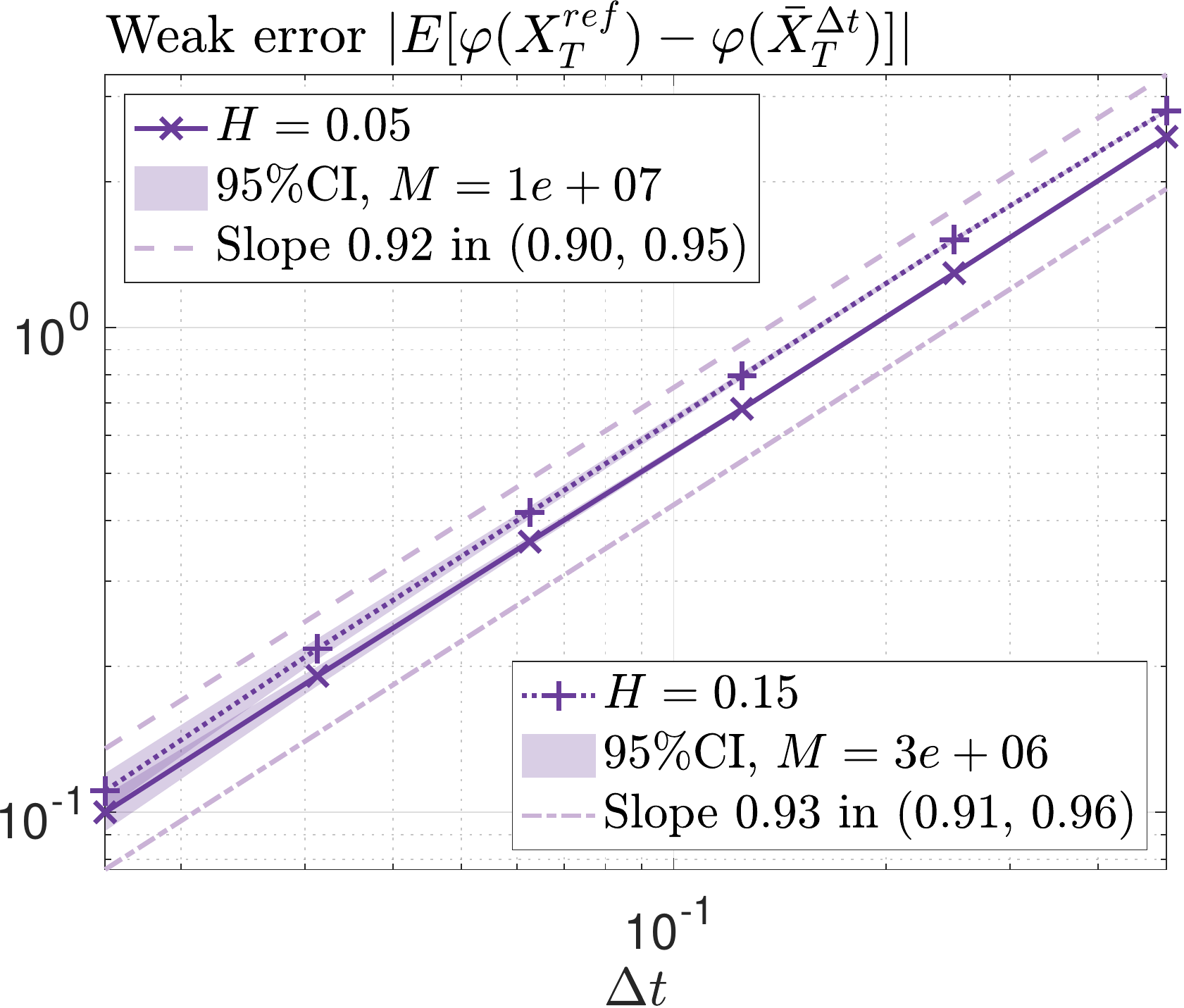}
  \caption{The weak error for scheme \cref{eq:stoch_int_intro_approx}
    for the shifted cubic payoff $\varphi(x) = (x+1.5)^3$ achieves a
    higher rate than $\varphi(x) = x^3$ as the shifted cubic is better
    approximated by a quadratic in the support of the distribution for
    the underlying (cf.~ \cref{fig:conv-general-payoff}). Here
    $\Delta t \in [2^{-6}, 2^{-1}]$ and the reference mesh is
    $\Delta t^{ref} = 2^{-12}$. }
  \label{fig:shifted-cubic}
\end{figure}

\begin{remark}[Financial applications]
  \label{rem:assumptions-main-theorem}
  Although the assumptions of \cref{thm:weak-rate-gen} seem extremely
  strong, they do reflect meaningful financial situations. In
  particular, note that~\cref{eq:RT-derivation-3} allows us to replace
  the (generally non-smooth) payoff functions of European options by
  their smooth Black--Scholes prices. Additionally, put-call-parity
  may allow us to assume bounded payoffs. Linearity of $\psi$ is,
  admittedly, a very strong assumption, which should be seen as the
  first stepping stone to the general result. We conjecture that
  \cref{thm:weak-rate-gen} holds in the setting of the rough Bergomi
  model, i.e., for non-linear $\psi$ as given
  in~\cref{eq:RT-derivation-4}.
\end{remark}

\begin{remark}[Scheme]
  For the simple model problem \cref{eq:stoch_int_intro} the numerical
  integration scheme \cref{eq:stoch_int_intro_approx} is the
  left-point approximation. If the problem were not trivialized to a
  stochastic integral, then in general $\bar{X}^{\Delta t}_{T}$ would
  correspond to the Euler--Maruyama approximation for the underlying
  SDE and we will refer to the scheme interchangeably as both.
\end{remark}

In the next section, we introduce the notation and concepts that will
be used to derive asymptotic expansions for the weak error in powers
of $\Delta t$. In particular, we first use these expansions to derive
weak rate one for quadratic payoffs, see \cref{thm:weak-rate-quad}, in
\cref{sec:weak-rate-quad}. Finally in \cref{sec:weak-rate-gen}, a
proof, following the approach used for \cref{thm:weak-rate-quad} as a
guide, is given for \cref{thm:weak-rate-gen} obtaining weak rate
$H+1/2$ for general payoff functions. Taken together, the statements
of \cref{thm:weak-rate-gen,thm:weak-rate-quad} imply \cref{thr:main}.

\section{Markovian extended state space formulation}
\label{sec:extended-state-space}

We first consider a well-known affine representation for the driving
fBm. Discretizing this affine representation yields an extended state
space for the dynamics of the underlying. A novelty of our approach is
to utilize this formulation to obtain asymptotic expansions for the
weak error. In particular, we utilize the Markovian structure of the
extended state space to show that \cref{eq:weak-error-euler} admits a
Taylor expansion in $\Delta t$ where the coefficients can be
controlled independently of the choice of parameters used to obtain
the extended state space formulation.

\subsection{Affine representations for small Hurst index}
\label{sec:affine-rep-small-H}

Over the Hurst parameter regime of interest, the fBm \cref{eq:fBm}
admits an affine representation as a linear functional of an
infinite-dimensional family of Ornstein--Uhlenbeck (OU) processes
(\cite{CarmonaCoutin:1998bm}).

\begin{lemma}[Affine representation]
  \label{lem:affine-rep}
  For $0 < H < 1/2$,
  \begin{equation}
    \label{eq:affine-rep}
    W^H_t = \widetilde{c}_H
    \int_0^\infty \widetilde{Y}_t(\theta) \theta^{-(H+\frac{1}{2})} \dd \theta \,,
  \end{equation}
  where
  \begin{equation*}
    \widetilde{Y}_{t}(\theta) = \int_{0}^{t} e^{-\theta (t-s)} \dd{W_s} 
  \end{equation*}
  and $\widetilde{c}_H$ is a positive and finite constant depending on $H$.
\end{lemma}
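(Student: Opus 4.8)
The plan is to reduce \cref{eq:affine-rep} to the classical Laplace-transform representation of the power-law kernel together with a stochastic Fubini argument. First I would record the elementary identity, valid for $r > 0$ and $0 < H < 1/2$,
\begin{equation*}
  r^{H-1/2} = \frac{1}{\Gamma(1/2 - H)} \int_0^\infty e^{-r\theta}\, \theta^{-(H+1/2)}\, \dd\theta,
\end{equation*}
which follows from the definition of the Gamma function via the substitution $u = r\theta$; the integral converges at $0$ because $H + 1/2 < 1$ and at $\infty$ because of the exponential decay. Multiplying by $\sqrt{2H}$ gives $K(r) = \widetilde{c}_H \int_0^\infty e^{-r\theta}\, \theta^{-(H+1/2)}\, \dd\theta$ with $\widetilde{c}_H \defeq \sqrt{2H}/\Gamma(1/2 - H) \in (0,\infty)$, which already identifies the constant and establishes that it is positive and finite precisely on the regime $0 < H < 1/2$.

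Next I would substitute this representation of $K$ into the definition \cref{eq:fBm} of $W^H_t$ and interchange the (deterministic) $\dd\theta$-integral with the (stochastic) $\dd W_s$-integral:
\begin{equation*}
  W^H_t = \int_0^t K(t-s)\, \dd W_s = \widetilde{c}_H \int_0^t \left( \int_0^\infty e^{-(t-s)\theta}\, \theta^{-(H+1/2)}\, \dd\theta \right) \dd W_s = \widetilde{c}_H \int_0^\infty \left( \int_0^t e^{-(t-s)\theta}\, \dd W_s \right) \theta^{-(H+1/2)}\, \dd\theta,
\end{equation*}
and the inner stochastic integral is exactly $\widetilde{Y}_t(\theta)$, which yields \cref{eq:affine-rep}.

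The one step that needs care — and the main obstacle — is justifying the stochastic Fubini interchange. I would invoke a standard version of the stochastic Fubini theorem, whose hypothesis in this Gaussian setting reduces, by the Itô isometry, to verifying
\begin{equation*}
  \int_0^\infty \left( \int_0^t e^{-2(t-s)\theta}\, \dd s \right)^{1/2} \theta^{-(H+1/2)}\, \dd\theta < \infty.
\end{equation*}
Since $\int_0^t e^{-2(t-s)\theta}\, \dd s = (1 - e^{-2t\theta})/(2\theta)$, the integrand behaves like $O(\theta^{-(H+1/2)})$ as $\theta \to 0^+$, which is integrable because $H + 1/2 < 1$, and like $O(\theta^{-(H+1)})$ as $\theta \to \infty$, which is integrable because $H + 1 > 1$; hence the condition holds for every $t \ge 0$ and the interchange is licit. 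This completes the proof. (Alternatively one can simply cite \cite{CarmonaCoutin:1998bm}; spelling out the Laplace identity, however, has the advantage of making the constant $\widetilde{c}_H$ explicit and the stochastic Fubini hypothesis transparent, which is convenient for the discretized version used later.)
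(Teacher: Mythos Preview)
Your proof is correct and follows essentially the same approach as the paper: write the power-law kernel as a Laplace transform, identify $\widetilde{c}_H = \sqrt{2H}/\Gamma(1/2-H)$, and apply stochastic Fubini. Your version is in fact more complete, since you explicitly verify the integrability hypothesis needed for the stochastic Fubini interchange, whereas the paper simply invokes it and defers the full justification to \cite{CarmonaCoutin:1998bm,HarmsStefanovits:2019fp}.
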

 
Although this statement is well-known we provide key details of the
proof that will be referenced later for the convenience of the reader.
The full proof can be found in,
e.g.,~\cite{CarmonaCoutin:1998bm,HarmsStefanovits:2019fp} (see also
\cite{CarmonaCoutinMontseny:2000gp,Muravlev:2011bm,Harms:2019rb} where
\cite{CarmonaCoutinMontseny:2000gp} gives a Markovian representation
for $H>1/2$, \cite{Muravlev:2011bm} a time-homogeneous Markovian
representation that is also defined for $t \in (-\infty, 0)$, and
\cite{Harms:2019rb} gives bounds on tails and derivatives of the
affine representation).

\begin{proof}
  Writing the kernel appearing in
  \cref{eq:fBm} as a Laplace transform,%
  \begin{equation*}
    \label{eq:kernel-explicit}
    (t-s)^{H-\frac{1}{2}} = \frac{1}{\Gamma(\frac{1}{2} - H)} \int_0^\infty 
    \theta^{-(H+\frac{1}{2})} e^{-\theta(t-s)} \dd \theta\,,
  \end{equation*}
  and then using stochastic Fubini one obtains the desired result,
  \begin{align*}
    W_t^H
    &= \int_0^t \frac{\sqrt{2H}}{\Gamma(\frac{1}{2} - H)}
      \int_0^\infty  \theta^{-(H+\frac{1}{2})} e^{-\theta(t-s)} \dd \theta \dd W_s \\
    &= \int_0^\infty \widetilde{c}_H \int_0^t e^{-\theta (t-s)} \dd W_s
      \theta^{-(H+\frac{1}{2})} \dd \theta \\
    &= \widetilde{c}_H \int_0^\infty \widetilde{Y}_t(\theta) \theta^{-(H+\frac{1}{2})} \dd \theta\,,
  \end{align*}
  where
  $\widetilde{c}_H \defeq \sqrt{2H}/ \Gamma(\frac{1}{2}-H) < \infty$. %
\end{proof}

A key tool in our proof of the weak rates will be to utilize the
Markovian structure of a projection of the fBm obtained by
discretizing the affine representation \cref{lem:affine-rep}. We
observe that the integral in \cref{eq:affine-rep} has a singularity at
$\theta=0$, but behaves essentially like $\theta^{-(H+\frac{1}{2})}$
before $\widetilde{Y}_t(\theta)$ vanishes in the limit of $\theta$.
To make \cref{eq:affine-rep} more amenable to quadrature we remove the
singularity by introducing the change of variable,
\begin{equation*}
  \label{eq:change-vars}
  \vartheta = \theta^{-(H+\frac{1}{2}-1)} = \theta^{\frac{1}{2} - H}\,,
\end{equation*}
thereby obtaining the representation
\begin{equation}
  \label{eq:affine-rep-singularity-removed}
  W_t^H
  = c_H \int_0^\infty \widetilde{Y}_t(\vartheta^{2/(1-2H)}) \dd \vartheta
  = c_H \int_0^\infty Y_t(\theta) \dd \theta \,, 
\end{equation}
where the constant,
\begin{equation*}
  c_H \defeq \frac{\widetilde{c}_H}{\tfrac{1}{2} - H}
  = \frac{\sqrt{2H}}{\Gamma(\tfrac{3}{2} - H)}\,,
\end{equation*}
is an increasing function of $H \in (0,\frac{1}{2})$ such that
$0 < c_H < 1$. In \cref{eq:affine-rep-singularity-removed},
\begin{equation}
  \label{eq:ou-extended-var}
  Y_t(\theta) = \int_{0}^{t} e^{-(t-s)\theta^{p}} \dd{W_s} 
\end{equation}
is an OU process with speed of mean-reversion given by $\theta^{p}$
with a positive power
\begin{equation}
  \label{eq:rev-pow}
  p \defeq 2/(1-2H) > 2 \,.
\end{equation}
One realization of $Y_t(\theta)$ is plotted in \cref{fig:Y-t-theta}
together with an envelope illustrating plus/minus two standard
deviations of $Y_t(\theta)$, computed using the formula for the
covariance, i.e.~
\begin{equation*}
  \label{eq:ou-covar}
  \cov(Y_t(\theta), Y_t(\eta))
  = \frac{1}{\theta^p + \eta^p}(1-e^{-(\theta^p+\eta^p)t})\,.
\end{equation*}
Replacing the integral in \cref{eq:affine-rep-singularity-removed}
with a quadrature rule in the parameter $\theta$ yields a projection
of the fBm onto a finite state space.

\begin{figure}[h]
  \centering  
  \includegraphics[width=\textwidth]{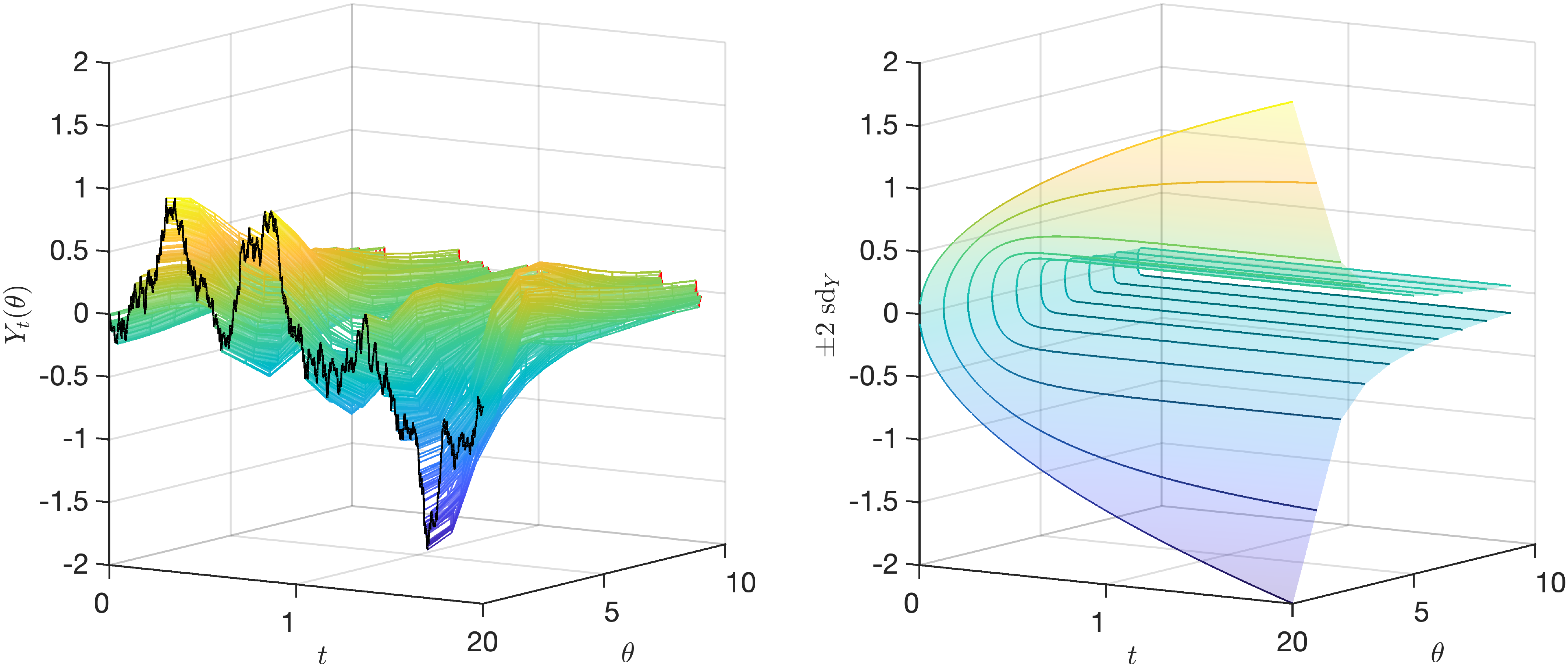}
  \caption{A sample of $Y_t(\theta)$ in \cref{eq:ou-extended-var}, at
    left, with speed of mean reversion $\theta^{2/(1-2H)}$ for
    $H=0.07$ plotted together with an envelope demonstrating
    plus/minus two standard deviations, at right, (cf.~time series
    data of asset prices and option derived price data indicate that
    $H$ often takes values close to $0.1$ or even smaller
    \cite{GatheralEtAl:2018vr}). $Y_t(\theta)$ is a smooth analytic
    function of $\theta$ and discretizing in $\theta$ yields an
    extended variable state space which we utilize in our analysis.}
  \label{fig:Y-t-theta}
\end{figure}

\begin{lemma}[Approximate affine representation]
  \label{lem:approx-affine}
  For $0 < H <1/2$, let
\begin{equation}
  \label{eq:affine-approx}
  \widehat{W}^H_t
  = c_H \sum_{l=1}^{N_L} Y^l_t \Delta \theta_l \eqdef \mathcal{S}(\bY_t) \,,
\end{equation}
depend on $N_L$ degrees of freedom $\bY_t = (Y^1_t, \dots, Y^{N_L}_t)$
where $Y^l_t \defeq Y_t(\theta_l)$ are OU process in
\cref{eq:ou-extended-var} with speed of mean-reversion $\theta^{p}_l$
for $p$ in \cref{eq:rev-pow}. Then $\widehat{W}^H$ converges to $W^H$
as $L,N_L \to \infty$ in $L^2\left(\Omega; C([0,T]) \right)$.
\end{lemma}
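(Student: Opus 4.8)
The plan is to split the approximation error into a \emph{truncation} part and a \emph{quadrature} part, to note that each of these is a continuous centred Gaussian process vanishing at $t=0$ and driven by a deterministic Volterra kernel, and then to upgrade the routine pointwise-in-$t$ convergence in $L^2(\Omega)$ to uniform-in-$t$ convergence by an entropy argument. Throughout I would work from the exact representation $W^H_t = c_H\int_0^\infty Y_t(\theta)\,\dd\theta$ in \cref{eq:affine-rep-singularity-removed}, together with the companion kernel identity $K(r) = c_H\int_0^\infty e^{-r\theta^{p}}\,\dd\theta$ with $p$ as in \cref{eq:rev-pow}; the latter is just the change of variables from the proof of \cref{lem:affine-rep} read backwards, and combining it with the stochastic Fubini theorem is what produces the representation in the first place.

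Introduce the truncation $W^{H,L}_t \defeq c_H\int_0^L Y_t(\theta)\,\dd\theta = \int_0^t K_{\le L}(t-s)\,\dd W_s$, where $K_{\le L}(r)\defeq c_H\int_0^L e^{-r\theta^{p}}\,\dd\theta$, and write $W^H_t-\widehat W^H_t = U^L_t + R^{L,N_L}_t$ with $U^L_t\defeq W^H_t-W^{H,L}_t$ and $R^{L,N_L}_t\defeq W^{H,L}_t-\widehat W^H_t$. Then $U^L_t=\int_0^t K_{>L}(t-s)\,\dd W_s$ for $K_{>L}(r)\defeq (K-K_{\le L})(r)=c_H\int_L^\infty e^{-r\theta^{p}}\,\dd\theta$, and, since $\widehat W^H_t=\int_0^t\widehat K(t-s)\,\dd W_s$ with $\widehat K(r)\defeq c_H\sum_{l=1}^{N_L}e^{-r\theta_l^{p}}\Delta\theta_l$ (cf.\ \cref{eq:affine-approx,eq:ou-extended-var}), we have $R^{L,N_L}_t=\int_0^t(K_{\le L}-\widehat K)(t-s)\,\dd W_s$, where $\widehat K$ is the quadrature approximation of $K_{\le L}$ for a partition of $(0,L]$ with nodes $\theta_l$ and cell widths $\Delta\theta_l$.

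The abstract ingredient I would isolate as a short lemma: if $(V^n)_n$ are continuous centred Gaussian processes on $[0,T]$ with $V^n_0=0$, with $D_n\defeq\sup_{t\le T}(\E\abs{V^n_t}^2)^{1/2}\to 0$ and a modulus bound $(\E\abs{V^n_t-V^n_u}^2)^{1/2}\le C\abs{t-u}^{\alpha}$ for some fixed $\alpha>0$ and $C$ not depending on $n$, then $\E\sup_{t\le T}\abs{V^n_t}^2\to 0$. The point is that the canonical pseudometric $d_n$ of $V^n$ is then dominated by $C\abs{\cdot}^{\alpha}$, so $[0,T]$ has $d_n$-diameter at most $2D_n$ and $d_n$-covering numbers $O(\varepsilon^{-1/\alpha})$; Dudley's entropy bound gives $\E\sup_{t\le T}\abs{V^n_t}\lesssim\int_0^{2D_n}\sqrt{\log(1/\varepsilon)}\,\dd\varepsilon\to 0$, and combining this with $\var(\sup_{t\le T}\abs{V^n_t})\le\sup_{t\le T}\var V^n_t=D_n^2$ and the Borell--TIS inequality upgrades it to the $L^2$ statement. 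I would then check the two hypotheses for each piece. For $U^L$ (as $L\to\infty$): $\E\abs{U^L_t}^2=\int_0^t K_{>L}(r)^2\,\dd r\le\int_0^T K_{>L}(r)^2\,\dd r$, which tends to $0$ by dominated convergence since $0\le K_{>L}\downarrow 0$ pointwise and $K_{>L}\le K\in L^2([0,T])$; the uniform modulus follows because $K_{>L}$ is decreasing with $\abs{K_{>L}'}\le\abs{K'}$, so a termwise comparison gives $\E\abs{U^L_t-U^L_u}^2\le\E\abs{W^H_t-W^H_u}^2\le C\abs{t-u}^{2H}$ by the standard Riemann--Liouville estimate. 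For $R^{L,N_L}$ (with $L$ fixed, $N_L\to\infty$): taking the quadrature with mesh $\max_l\Delta\theta_l\to 0$, Cauchy--Schwarz together with the fact that $\theta\mapsto Y_t(\theta)$ is $C^1$ on $[0,\infty)$ with $\sup_{t\le T}\E\abs{\partial_\theta Y_t(\theta)}^2\lesssim\min(\theta^{2p-2},\theta^{-p-2})$ (integrable on $(0,\infty)$) yields $\sup_{t\le T}\E\abs{R^{L,N_L}_t}^2\to 0$, while the uniform-in-mesh boundedness and Lipschitz bounds for $K_{\le L}$ and $\widehat K$ on $[0,T]$ (with constants depending on $L$ only) give $\E\abs{R^{L,N_L}_t-R^{L,N_L}_u}^2\le C_L\abs{t-u}$.

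Finally, a triangle inequality in $L^2(\Omega;C([0,T]))$, choosing $N_L$ for each $L$ large enough that $\norm{R^{L,N_L}}_{L^2(\Omega;C([0,T]))}\le 1/L$, yields $\E\sup_{t\le T}\abs{W^H_t-\widehat W^H_t}^2\to 0$ as $L,N_L\to\infty$, which is the asserted convergence. I expect the main obstacle to be exactly the passage from pointwise $L^2(\Omega)$-convergence to uniform convergence: the two convergence statements in isolation are routine (dominated convergence; a Riemann-sum bound using the smoothness of $Y_t(\cdot)$), but one has to produce modulus-of-continuity estimates that are uniform in the truncation level $L$ and, crucially, in the quadrature mesh, so that the entropy/Borell--TIS mechanism applies.
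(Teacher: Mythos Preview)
Your proposal is correct and uses the same truncation-plus-quadrature decomposition as the paper, but it is considerably more self-contained. The paper splits $W^H - \widehat W^H$ into the tail $R_L = c_H\int_L^\infty Y_t(\theta)\,\dd\theta$ and the quadrature remainder $R_{N_L}$, exactly as you do, but then essentially defers the convergence in $L^2(\Omega;C([0,T]))$ to the strong rates proved in \cite{Harms:2019rb}: it quotes the bound $\sup_{L\ge 1}L^\delta\bigl\|\sup_{t\le T}|R_L(Y_t)|\bigr\|_{L^p(\Omega)}<\infty$ from that reference, displays the variance decay $\var R_L \lesssim L^{2-p}$, and for the quadrature part simply invokes the analyticity of $\theta\mapsto Y_t(\theta)$. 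Your argument instead supplies the mechanism explicitly, via the Dudley entropy integral together with Borell--TIS, and verifies the two required ingredients (pointwise $L^2$ decay plus a uniform H\"older modulus) for each piece by direct kernel estimates. What the paper's route buys is brevity and immediate access to convergence \emph{rates} in $L$ and $N_L$ from the cited reference; what your route buys is independence from that reference and a clean reusable lemma. Your identification of the key obstacle---that the passage from pointwise to uniform convergence requires modulus estimates uniform in the truncation level and in the mesh---is exactly the content the paper outsources.
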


\begin{proof}
We obtain an approximate affine representation of the fBm by
discretizing \cref{eq:affine-rep-singularity-removed} in two steps.
First, we divide the integral in
\cref{eq:affine-rep-singularity-removed} into two parts,
\begin{equation*}
  \label{eq:affine-rep-RL}
  W^H_t =   c_H \int_0^L Y_t(\theta) \dd \theta %
  + \underbrace{c_H \int_L^\infty Y_t(\theta) \dd \theta}_{\eqdef R_L(Y_t)}
  \,,
\end{equation*}
where $R_L$ denotes the error in restricting the integral to a fixed
computational domain $L > 1$. Second, we consider a quadrature rule
\begin{equation*}
  W^H_t = c_H \sum_{l=1}^{N_L} Y_t(\theta_l) \Delta \theta_l
  + R_{N_L}(Y_t) + R_L(Y_t)\,,
\end{equation*}
with points 
$0 \leq \theta_1 < \dots < \theta_{N_L} \leq L$ and weights
$\Delta \theta_l = \theta_{l+1} - \theta_l$ where $R_{N_L}$ denotes
the quadrature truncation error.

That $\widehat{W}^H$ converges to $W^H$ in the limit of $L, N_L$
essentially follows from the `strong rates' in \cite{Harms:2019rb}.
The $R_{N_L}$ can be made arbitrarily small as $Y_t(\theta)$ is a
smooth bounded (even analytic) function of $\theta$ (e.g.~see
\cref{fig:Y-t-theta}), i.e.~the regularity in $\theta$ allows one to
approximate efficiently using arbitrarily higher order quadrature
rules if desired (%
\cite{Harms:2019rb}). The $R_L(Y_t)$ is a mean zero Gaussian process,
since for all $\delta \in [0, H)$,
\begin{equation*}
  \sup_{L\in[1,\infty]} L^\delta \Bigl\|\sup_{t\in[0,T]}|R_L(Y_t)|\Bigr\|_{L^p(\Omega)}
  < \infty \,,
\end{equation*}
guarantees integrability (\cite[Lemma 1(b)]{Harms:2019rb}). Then $R_L$
can be made arbitrarily small for sufficiently large $L$ by observing
that the variance,
\begin{equation*}
  \label{eq:estimate-var-RL}
    \var[R_L(Y_t(\theta))] = c_H^2 \int_L^\infty \!\!\!\int_L^\infty
    \cov(Y_t(\theta),
    Y_t(\eta)) \dd \theta \dd \eta
    \leq c_H^2 \frac{2\pi}{4} \int_L^\infty \theta^{1-p} \dd \theta
    = c_H^2 \frac{\pi}{2} \frac{L^{2-p}}{p-2}\,,
\end{equation*}
decays in $L$ since $p>2$.
\end{proof}

We split the weak error \cref{eq:weak-rate-formal} using the
representations in \cref{lem:affine-rep,lem:approx-affine},
\begin{equation}
  \label{eq:weak-error-quad-and-euler}
  \begin{split}
    \E \big[ \varphi(X_T(W^H)) -
    \varphi(X_T(\widehat{W}^H))\big] +  \E \big[\varphi(X_T(\widehat{W}^H)) -
    \varphi(\bar{X}_T^{\Delta 
      t}(\widehat{W}^H))\big]\\
    +  \E \big[
    \varphi(\bar{X}_T^{\Delta t}(\widehat{W}^H))
    - \varphi(\bar{X}_T^{\Delta t}(W^H))\big]
  \end{split}
\end{equation}
where we emphasize the dependence of the underlying on the driving
process. The first and third terms both correspond to the error in
approximating $W^H$ with $\widehat{W}^H$ and therefore vanish by
\cref{lem:approx-affine}. Indeed, we have the following result.
\begin{lemma}
  \label{lem:weak-convergence-terms13}
  Assume that $\varphi$ and $\psi$ are Lipschitz, the latter uniformly in
  time, with Lipschitz constants $K_\varphi$ and $K_\psi$, respectively. Then
  the first and the third term of~\eqref{eq:weak-error-quad-and-euler}
  converge to zero as $N_L, L \to \infty$. Regarding the third term, the
  convergence is uniform with respect to $\Delta t$.
\end{lemma}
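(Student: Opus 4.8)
The plan is to exploit the Lipschitz assumptions on $\varphi$ and $\psi$ to reduce each of the first and third terms of \eqref{eq:weak-error-quad-and-euler} to a strong-$L^2$ estimate on $X_T(W^H) - X_T(\widehat W^H)$ (respectively $\bar X_T^{\Delta t}(W^H) - \bar X_T^{\Delta t}(\widehat W^H)$), and then to bound that strong error by the $L^2\bigl(\Omega;C([0,T])\bigr)$ convergence $\widehat W^H \to W^H$ furnished by \cref{lem:approx-affine}. Concretely, for the first term I would write, using $|\varphi(x) - \varphi(y)| \le K_\varphi |x-y|$ and Jensen,
\begin{equation*}
  \bigl| \E\bigl[\varphi(X_T(W^H)) - \varphi(X_T(\widehat W^H))\bigr] \bigr|
  \le K_\varphi\, \E\bigl| X_T(W^H) - X_T(\widehat W^H) \bigr|
  \le K_\varphi\, \bigl\| X_T(W^H) - X_T(\widehat W^H) \bigr\|_{L^2(\Omega)}.
\end{equation*}
Since both $X_T(W^H)$ and $X_T(\widehat W^H)$ are It\^o integrals $\int_0^T \psi(s, \cdot_s)\,\dd W_s$ against the \emph{same} Brownian motion $W$, the It\^o isometry gives
\begin{equation*}
  \bigl\| X_T(W^H) - X_T(\widehat W^H) \bigr\|_{L^2(\Omega)}^2
  = \int_0^T \E\bigl| \psi(s, W^H_s) - \psi(s, \widehat W^H_s) \bigr|^2 \,\dd s
  \le K_\psi^2 \int_0^T \E\bigl| W^H_s - \widehat W^H_s \bigr|^2 \,\dd s
  \le T\, K_\psi^2\, \E\Bigl[\sup_{t\in[0,T]} |W^H_t - \widehat W^H_t|^2\Bigr],
\end{equation*}
which tends to zero as $N_L, L \to \infty$ by \cref{lem:approx-affine}. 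This settles the first term.

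For the third term the argument is the same after one observes that the Euler/left-point scheme is itself an It\^o integral against $W$ with a piecewise-constant integrand: with $\tau(s)\defeq t_i$ for $s\in[t_i,t_{i+1})$,
\begin{equation*}
  \bar X_T^{\Delta t}(W^H) - \bar X_T^{\Delta t}(\widehat W^H)
  = \int_0^T \bigl( \psi(\tau(s), W^H_{\tau(s)}) - \psi(\tau(s), \widehat W^H_{\tau(s)}) \bigr)\,\dd W_s,
\end{equation*}
so the It\^o isometry and the Lipschitz bound on $\psi$ again yield
\begin{equation*}
  \bigl\| \bar X_T^{\Delta t}(W^H) - \bar X_T^{\Delta t}(\widehat W^H) \bigr\|_{L^2(\Omega)}^2
  \le K_\psi^2 \sum_{i=0}^{n-1} \E\bigl| W^H_{t_i} - \widehat W^H_{t_i} \bigr|^2 \Delta t
  \le T\, K_\psi^2\, \E\Bigl[\sup_{t\in[0,T]} |W^H_t - \widehat W^H_t|^2\Bigr].
\end{equation*}
The crucial point is that the final bound is the \emph{same} quantity $T K_\psi^2 \E[\sup_t |W^H_t - \widehat W^H_t|^2]$ regardless of $n$ (equivalently, of $\Delta t$), because both $W^H$ and $\widehat W^H$ are only ever evaluated at the mesh points, which are a subset of $[0,T]$; hence the convergence in $N_L, L$ is uniform in $\Delta t$, as claimed. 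Combining the two displays with $K_\varphi$ completes the proof.

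The only mild subtlety — and the step I would be most careful about — is the measurability/adaptedness needed to invoke the It\^o isometry: one must note that $\widehat W^H_t = \mathcal{S}(\bY_t)$ is adapted to the filtration of $W$ (each $Y^l$ is an It\^o integral against $W$), so the integrands above are genuinely progressively measurable and square-integrable, the latter following from the Lipschitz (hence linear) growth of $\psi$ together with $\sup_t \E|W^H_t|^2 < \infty$ and $\sup_{t,L,N_L}\E|\widehat W^H_t|^2 < \infty$. Everything else is a routine application of the It\^o isometry and \cref{lem:approx-affine}; there is no genuine obstacle here, which is appropriate since this lemma is only a bookkeeping step letting us discard the two outer terms of \eqref{eq:weak-error-quad-and-euler} and concentrate the analysis on the middle, Markovian term.
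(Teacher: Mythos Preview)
Your proposal is correct and follows essentially the same approach as the paper: Lipschitz bound on $\varphi$ to pass to an $L^2$ difference, It\^o isometry (the paper phrases this via orthogonality of Brownian increments in the discrete case) together with the Lipschitz bound on $\psi$, and then the uniform-in-$\Delta t$ sup bound $T\,K_\psi^2\,\E[\sup_t|W^H_t-\widehat W^H_t|^2]$ from \cref{lem:approx-affine}. Your explicit remark on adaptedness/integrability of the integrands is a welcome addition that the paper leaves implicit.
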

\begin{proof}
  We consider the third term first. By basic probabilistic estimates using the
  Lipschitz property of $\varphi$ and $\psi$, we have
  \begin{align*}
    \abs{\E \big[
    \varphi(\bar{X}_T^{\Delta t}(\widehat{W}^H))
    - \varphi(\bar{X}_T^{\Delta t}(W^H))\big]} &\le K_\varphi \E\left[ \left(
    \bar{X}_T^{\Delta t}(\widehat{W}^H) - \bar{X}_T^{\Delta t}(W^H) \right)^2
                                                 \right]^{1/2}\\
    &= K_\varphi \E\left[ \left( \sum_{i=0}^{n-1} \left( \psi(s_i,
      \widehat{W}^H_{s_i}) - \psi(s_i, W^H_{s_i}) \right) W_{s_i,s_{i+1}}
      \right)^2 \right]^{1/2} \\
    &= K_\varphi \left( \sum_{i=0}^{n-1} \E \left[  \left( \psi(s_i,
      \widehat{W}^H_{s_i}) - \psi(s_i, W^H_{s_i}) \right)^2 \right] (s_{i+1} - s_i)
      \right)^{1/2}\\
    &\le K_\varphi K_\psi \left( \sum_{i=0}^{n-1} \E \left[  \left(
      \widehat{W}^H_{s_i} - W^H_{s_i} \right)^2 \right] (s_{i+1} - s_i) 
      \right)^{1/2}\\
    &\le K_\varphi K_\psi \norm{\widehat{W}^H - W^H}_{L^2(\Omega; C([0,T])}
      T^{1/2} \to 0
  \end{align*}
  as $N_L, L \to \infty$ by Lemma~\ref{lem:approx-affine}. The result for the
  first term follows in the same manner.
\end{proof}

\begin{remark}[Convergence rates in $N_L$, $L$]
  \label{rem:waek-con-term13}
  Following \cite[Theorem 1]{Harms:2019rb}, convergence rates in $N_L$ and
  $L$ for the first and third terms of~\eqref{eq:weak-error-quad-and-euler}
  could undoubtedly be established. Keep in mind, however, that we only use
  the scheme $\bar{X}^{\Delta t}_T(\widehat{W}^H)$ as a tool for the analysis of
  the scheme $\bar{X}^{\Delta t}_T(W^H)$, i.e., with exact simulation of
  $W^H$. Consequently, rates of the convergence in $N_L$ and $L$ are not
  required to get rates of convergence of $\bar{X}^{\Delta t}_T(W^H)$ in terms
  of $\Delta t$. Indeed, with respect to the actual scheme $\bar{X}^{\Delta
    t}_T(W^H)$ analyzed in this paper, the error contributions from the first
  and third terms vanish.
\end{remark}

The sole remaining term in~\eqref{eq:weak-error-quad-and-euler},
\begin{equation}
  \label{eq:weak-error-euler}
  \Err(T,\Delta t) \defeq \E[ \varphi(X_T(\widehat{W}^H))] -
  \E [\varphi(\bar{X}^{\Delta t}_T(\widehat{W}^H))]\,,
\end{equation}
that gives the weak error in the Euler scheme, depends on the
approximate affine representation in \cref{lem:approx-affine}. Indeed,
suppose that we are given an error tolerance $\varepsilon$. By
Lemma~\ref{lem:weak-convergence-terms13}, we can find
$L = L(\varepsilon, H, T, \varphi, \psi)$ and
$N_L = N_L(\varepsilon, H, T, \varphi, \psi)$ such that the first and
third terms of~\eqref{eq:weak-error-quad-and-euler} are bounded by
$\varepsilon/3$ each, the third one irrespectively of $\Delta t$. Our
task is now to choose time steps $\Delta t$ such that also the second
term is bounded by $\varepsilon/3$ for the given $L, N_L$. In the next
section, we will obtain an extended variable system for the dynamics
of the underlying that we will use to obtain an asymptotic expansions
for \cref{eq:weak-error-euler}.

\begin{remark}[Quadrature]
  In the interest of keeping our arguments constructive, we first fixed a
  computational domain $L$ and then introduced a quadrature based on $N_L$
  points without specifying the precise rule. One could also choose, e.g., a
  Gauss--Laguerre quadrature suitable for the half-line thereby reducing the
  number of parameters to one (see also \cite{Harms:2019rb}). The splitting
  \cref{eq:weak-error-euler} still holds.
\end{remark}

\subsection{Forward Euler scheme for extended variable system}
\label{sec:euler-scheme}

Substituting \cref{eq:affine-approx} into the underlying
\cref{eq:underlying}, yields
\begin{equation}
  \label{eq:underlying-approx}
  \widehat{X}_t \defeq
   \int_0^t \psi\bigl(s,  \widehat{W}^H_s \bigr) \dd W_s
  = \int_0^t \psi\bigl(s,  \mathcal{S}(\bY_s) \bigr) \dd W_s \,,
\end{equation}
a finite dimensional Markovian approximation
$\widehat{X}_t = X_t(\widehat{W}^H)$ of the underlying $X_t$ that
appears in the weak error \cref{eq:weak-error-euler}. The dynamics of
\cref{eq:underlying-approx} are described by
\begin{equation*}
  \bZ = (\widehat{X}, Y^1, \dots, Y^{N_L})\,,
\end{equation*}
a $d$-dimensional extended variable state space ($d = {N_L} + 1$),
solving the system
\begin{equation}
  \label{eq:extended-system}
  \dd \bZ_t = - b(\bZ_t, t) \dd t + \sigma(\bZ_t, t) \dd W_t\,,
  \qquad \bZ_0 = 0\,,
\end{equation}
with $d$-vectors $b$ and $\sigma$ given by,
\begin{align*}
  b(\bZ_t, t)
  &\defeq  (0, Y^1_t \theta_1^{p}, \dots, Y^{N_L}_t
    \theta_{N_L}^{p})\,,\\%
  \sigma(\bZ_t, t)
  &\defeq  \left(\psi\bigl(t, \mathcal{S}(\bY_t)\bigr), 1, \dots, 1 \right)\,, %
\end{align*}
where there is a single driving Brownian motion, i.e.~
\cref{eq:extended-system} is a degenerate system.

For the interval $[0,T]$, we define the uniform time grid
discretization $t_i \defeq i \Delta t$ for $i = 0, \dots, n-1$, where
$n \defeq T/\Delta t$, and consider the Euler--Maruyama scheme for the
underlying,
\begin{equation}
  \label{eq:euler-scheme}
  \begin{aligned}
    \bar{X}_{t_{i+1}} 
    &=\bar{X}_{t_i} + \psi\bigl(t_i, \mathcal{S}(\bY_{t_i})\bigr)
    \Delta W_{t_i}\,,%
    \quad \bar{X}_{t_0} = 0\,,
  \end{aligned}
\end{equation}
where
\begin{equation*}
  \Delta W_{t_i} \defeq W_{t_i, t_{i+1}} = W_{t_{i+1}}-W_{t_i}
\end{equation*}
are the increments of the driving Wiener process and where at each
time step the vector of extended variables
$\bY_{\cdot} = (Y_{\cdot}^l)_{l=1,\dots, {N_L}}$ is sampled exactly.
That is, for the Euler update at $t_{i+1}$, one can form the
joint distribution
\begin{equation}
  \label{eq:direct-sampling-joint}
  ( \bY_\tau, \Delta W_\tau)_{\tau = t_0, \dots, t_i}\,,
\end{equation}
an $(N_L+1)\times (i+1)$-dimensional Gaussian. The variance-covariance
matrix for \cref{eq:direct-sampling-joint} can be obtained using the
known covariances $\cov(Y^k_{t_i}, Y^l_{t_j})$,
$\cov(Y^k_{t_i}, \Delta W_{t_j})$, and
$\cov(\Delta W_{t_i}, \Delta W_{t_j})$, and then the target variables
$\bY_{t_i}$ required in \cref{eq:euler-scheme} can be sampled exactly
from the joint distribution, e.g., using the Cholesky decomposition.
We extend $\bar{X}$ in \cref{eq:euler-scheme} to all $t \in [0,T]$ by
the interpolation,
\begin{equation}
  \label{eq:euler-interp}
  \begin{aligned}
    \bar{X}(t) =
    \int_{0}^{t}\psi\bigl(\kappa_s, \widehat{W}^H_{\kappa_s} \bigr) \dd{W_s} 
    = \int_0^t \psi\Bigl(\kappa_s, c_H \sum_{l=1}^{N_L}
    \Delta\theta_l
    Y^l_{\kappa_s} \Bigr) \dd W_s \,,
  \end{aligned}
\end{equation}
where $\kappa_s = t_i$ if $s \in [t_i, t_{i+1})$ for each
$i = 0, \dots, n-1$.

Coupling the interpolant for the Euler scheme with the exact dynamics
of the OU variables leads us to define the `discretized' extended
variable system $\bar{\bZ} = (\bar{X}, Y^1, \dots, Y^{N_L})$ embedded
in the SDE
\begin{equation}
  \label{eq:euler-sde}
  \dd \bar{\bZ}_s = - \bar{b}(\bar{\bZ}_s) \dd{s} 
  + \bar{\sigma}(\bar{\bZ}_s)\dd{W_s}\,, \quad s \in [0,T]\,,
  \qquad \bar{\bZ}_0 = 0\,,
\end{equation}
with coefficients
\begin{equation*}
  \label{eq:coeff-euler-sde}
  \begin{split}
    \bar{b}(\bar{\bZ}_s) &= (0, Y^1_{s} \theta_1^{p}, \dots,
    Y^{N_L}_{s}\theta_{N_L}^{p}) = b(\bZ_{s},
    s)\,,\\%
    \bar{\sigma}(\bar{\bZ}_s) &= \left(\psi\bigl(\kappa_s, 
      \mathcal{S}(\bY_{\kappa_s})\bigr), 1, \dots, 1 \right) =
    \sigma(\bZ_{\kappa_s}, \kappa_s)\,.%
  \end{split}
\end{equation*}
For \cref{eq:euler-sde,eq:coeff-euler-sde}, we are able to formulate a
corresponding Kolmogorov backward equation. For a smooth and bounded
payoff $\varphi(\bZ_T) = \varphi(Z^1_T) =
\varphi(\widehat{X}_T)$, %
we consider the value function,
\begin{equation}
  \label{eq:value-function}
  u(\bz,t) \defeq \E[\varphi(\bZ_T) \given \bZ_t = \bz]
  = \E[\varphi(\widehat{X}_T) \given \bZ_t = \bz]\,,
\end{equation}
that is the conditional expected value of the payoff at time $t < T$
given the starting value $\bZ_t = \bz$ for
$\bz = (z_1, \dots, z_d) = (x, y_1, \dots, y_{N_L})$. The associated
Kolmogorov backward equation is given by
\begin{equation}
  \label{eq:Kolmogorov-backward-eqn}
  \begin{split}
  &\partial_t u(\bz,t) - b^j(\bz, t) D_j u(\bz,t)
  + \frac{1}{2} A^{jk}(\bz, t) D_{jk} u(\bz,t) = 0\,, \quad t < T \,, \quad \bz \in \rset^d\,,\\
  &u(\bz,T) = \varphi(z_1) \,,
\end{split}
\end{equation}
where repeated indices indicate summation (over $1, \dots, d$), 
$b^j = b^j(\bz, t)$ is the $j$th component of the $d$-vector,
\begin{equation*}
  \label{eq:extended-var-drift}
  b(\bz, t) \defeq (0, z_{2} \theta_1^{p}, \dots, 
  z_{N_L+1} \theta_{N_L}^{p})\,,
\end{equation*}
and $A^{jk} = A^{jk}(\bz, t)$ are elements of the $d \times d$-matrix
$A = (\sigma \sigma^*)$,
\begin{equation}
  \label{eq:extended-var-diffusion}
  A^{11} = \psi(t, \mathcal{S}(\bz))^2\,,\qquad
  A^{1j}= A^{j1}  = \psi(t, \mathcal{S}(\bz))\,, \; j > 1\,, \qquad
  A^{jk} = 1\,, \; j,k > 1\,,
\end{equation}
i.e.~that contains ones except along the first row
and column, where we slightly abuse notation,
\begin{equation*}
  \label{eq:sum-extended-vars}
  \mathcal{S}(\bz)  \defeq 0\cdot z_1 + c_H \sum_{j=2}^{d} z_{j}
  \Delta \theta_{j+1} = c_H \sum_{l=1}^{N_L} y_{l}
  \Delta \theta_l\,.
\end{equation*}

\begin{remark}[Kolmogorov backward equation]
  In the presentation of the Kolmogorov backward equation, we assume
  necessary regularity conditions on $\varphi$, i.e.~smoothness and
  boundedness, as a matter of convenience. From the context of the
  problem this is not a strong assumption, see
  \cref{rem:assumptions-main-theorem}.
\end{remark}

\subsection{Local weak error representation}
\label{sec:local-weak-error}

Throughout the remainder of this work we consider the case when
$\psi(s, W^H_s) = W^H_s$ in \cref{eq:underlying}.
Returning to \cref{eq:weak-error-euler}, we obtain a representation
for the weak error in terms of local errors. First, we write the
discretization error as a telescoping sequence in the value function
\cref{eq:value-function},
\begin{align}
  \Err(T, \Delta t)
  &= \E \bigl[\varphi(\widehat{X}_T) - \varphi(\bar{X}_{t_n})\bigr]
    \notag \\
  &= - \left( \E u(\bar{\bZ}_{t_n}, T) - \E u(\bar{\bZ}_{t_0}, 0)\right)
    \notag\\
  &= - \sum_{i=0}^{n-1} \E \left[ u(\bar{\bZ}_{t_{i+1}}, t_{i+1})
    - u(\bar{\bZ}_{t_i}, t_i)\right]\,,
    \label{eq:telescoping-sum-value-fcn}
\end{align}
using that
\begin{equation*}
  \E\varphi(\bar{X}_{t_n}) = \E\big[\E[\varphi(Z^1_{T})
  \given \bZ_T = \bar{\bZ}_{t_n}]\big]
  = \E u(\bar{\bZ}_{t_n}, T) 
\end{equation*}
and
\begin{equation*}
  \E\varphi(\widehat{X}_T)
  = \E\big[\E[\varphi(Z^1_T)
  \given \bZ_{0} = \bar{\bZ}_{t_0}]\big]
  = \E u(\bar{\bZ}_{t_0}, 0) \,.
\end{equation*}
We then represent each difference appearing in
\cref{eq:telescoping-sum-value-fcn} as a stochastic differential over
a small time increment. %
Over the interval $[t_i, t_{i+1})$, we have that
\begin{equation*}
  \begin{aligned}
    \E[u(\bar{\bZ}_{t_{i+1}}, t_{i+1}) - &u(\bar{\bZ}_{t_i}, t_i)]
    = \E \int_{t_i}^{t_{i+1}} \dd u(\bar{\bZ}_s, s)\\
    &= \E\int_{t_i}^{t_{i+1}} \!\!\Big( \partial_t u (\bar{\bZ}_s, s)
    - b^j(\bar{\bZ}_s, s) D_j u(\bar{\bZ}_s, s)+ \frac{1}{2}
    A^{jk}(\bar{\bZ}_{t_i}, t_i) D_{jk} u(\bar{\bZ}_s, s) \Big)\dd s \,,
  \end{aligned}
\end{equation*}
using It\={o}'s formula applied to \cref{eq:euler-sde} %
where repeated indices indicate summation over $1, \dots, d$.
Subtracting off the Kolmogorov backward equation
\cref{eq:Kolmogorov-backward-eqn} evaluated at $(\bar{\bZ}_s, s)$ then
yields
\begin{equation}
  \label{eq:increment-value-func-A}
  \E[u(\bar{\bZ}_{t_{i+1}}, t_{i+1}) - u(\bar{\bZ}_{t_i}, t_i)] = %
  \frac{1}{2} \E \int_{t_i}^{t_{i+1}} \left(A^{jk}(\bar{\bZ}_{t_i}, t_i) -
    A^{jk}(\bar{\bZ}_s, s)\right) D_{jk}u(\bar{\bZ}_s, s) \dd s\,.
\end{equation}
We note that the non-zero terms correspond to differences along the
first row and column of $A$ in \cref{eq:extended-var-diffusion}, and
thus \cref{eq:increment-value-func-A} simplifies to the following
expression for the local weak error in the value function, 
\begin{equation}
  \label{eq:increment-value-func-A-simp}
  \begin{split}
    \E[u(\bar{\bZ}_{t_{i+1}}, t_{i+1}) - u(\bar{\bZ}_{t_{i}}, t_i)] 
    &= \frac{1}{2} \E \int_{t_i}^{t_{i+1}} \left(
      A^{11}(\bar{\bZ}_{t_i}, t_i)
      - A^{11}(\bar{\bZ}_s, s)\right)D_{11}u(\bar{\bZ}_s, s) \dd s\\
    &\qquad + \E \int_{t_i}^{t_{i+1}} \sum_{j=2}^d
    \left(A^{j1}(\bar{\bZ}_{t_{i}}, t_i) -
      A^{j1}(\bar{\bZ}_s, s)\right)D_{j1}u(\bar{\bZ}_s, s) \dd s\\
    &= - \frac{1}{2} \E \int_{t_i}^{t_{i+1}} \DeltaSYSY D_{11}u(\bar{\bZ}_s, s) \dd s\\
    &\qquad - \E \int_{t_i}^{t_{i+1}} \DeltaSY \sum_{j=2}^d
     D_{j1}u(\bar{\bZ}_s, s) \dd s\,,
  \end{split}
\end{equation}
where we express the differences in components of
$A$
in terms of the increments of the approximate fBm
\begin{equation*}
  \label{eq:increment-SY}
  \DeltaSYk \defeq  \mathcal{S}(\bY_s)^k - \mathcal{S}(\bY_{t_i})^k\,,
  \qquad k=1,2\,.
\end{equation*}
Observe that the derivatives of the value function appearing in
\cref{eq:increment-value-func-A-simp} can be further resolved by
directly computing fluxes. Here we consider the simplification
$\psi(s, \widehat{W}^H_s) = \widehat{W}^H_s$ in \cref{eq:underlying}
(i.e.~`linear $\psi$').

\begin{lemma}[Fluxes]
  \label{lem:fluxes}
  Let $\psi(s, W^H_s) = W^H_s$. For the value function $u(\bz, t)$
  defined in \cref{eq:value-function},
  \begin{equation}
    \label{eq:diff-u}
    D^\beta u (\bz, t)
    = c_H^{|\beta| - \beta_1}
    \E \Big[\varphi^{(|\beta|)}(\widehat{X}_T)
    \prod_{j=1}^{N_L} (\Delta\theta_{j} M^j_{t,T})^{\beta_{j+1}}
    \given \bZ_t = \bz \Big]\,,
  \end{equation}
  for a multi-index $\beta = (\beta_1, \dots, \beta_d)$ where
  \begin{equation*}
    M^j_{t,T} \defeq \int_{t}^{T}e^{-(r-t)\theta^p_j}\dd{W_r}\,,
    \quad j=1,\dots, N_L\,. 
  \end{equation*}
\end{lemma}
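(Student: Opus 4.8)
The plan is to combine the Markov property of the extended system \cref{eq:extended-system} with the observation that, because $\psi$ is the identity, $\widehat X_T$ depends \emph{affinely} on the initial datum $\bz$; the differentiation formula then falls out of the elementary chain rule for an affine inner map. First I would rewrite the conditional expectation defining $u$ in \cref{eq:value-function} as a deterministic functional of $\bz$ and of the future noise $(W_{t+r}-W_t)_{r\in[0,T-t]}$, which is independent of $\mathcal F_t\supseteq\sigma(\bZ_t)$. For $s\ge t$ the OU coordinate $Y^l$ started from $Y^l_t=z_{l+1}$ satisfies $Y^l_s=z_{l+1}e^{-(s-t)\theta_l^p}+\int_t^s e^{-(s-r)\theta_l^p}\dd W_r$, while $\widehat X_s=z_1+\int_t^s\mathcal S(\bY_u)\dd W_u$ by \cref{eq:underlying-approx}; substituting and recalling the definition of $M^j_{t,T}$ yields, on $\{\bZ_t=\bz\}$,
\begin{equation*}
  \widehat X_T = z_1 + c_H\sum_{l=1}^{N_L}\Delta\theta_l\, M^l_{t,T}\, z_{l+1}
  + c_H\sum_{l=1}^{N_L}\Delta\theta_l\int_t^T\!\!\int_t^u e^{-(u-r)\theta_l^p}\dd W_r\,\dd W_u\,,
\end{equation*}
so that $\widehat X_T$ is affine in $\bz=(z_1,\dots,z_d)$ with a constant term (a double Wiener--It\^o integral) that does not depend on $\bZ_t$.

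Next I would simply read off the coefficients: $\partial_{z_1}\widehat X_T=1$, $\partial_{z_{l+1}}\widehat X_T=c_H\Delta\theta_l M^l_{t,T}$ for $l=1,\dots,N_L$, and every partial derivative of $\widehat X_T$ of order at least two vanishes. Consequently, for any multi-index $\beta=(\beta_1,\dots,\beta_d)$, the chain rule -- the degenerate instance of Fa\`{a} di Bruno's formula in which all terms carrying a second- or higher-order derivative of the inner map are absent -- gives
\begin{equation*}
  D^\beta\bigl[\varphi(\widehat X_T)\bigr]
  = \varphi^{(|\beta|)}(\widehat X_T)\prod_{j=1}^d\bigl(\partial_{z_j}\widehat X_T\bigr)^{\beta_j}
  = c_H^{\,|\beta|-\beta_1}\,\varphi^{(|\beta|)}(\widehat X_T)\prod_{j=1}^{N_L}\bigl(\Delta\theta_j M^j_{t,T}\bigr)^{\beta_{j+1}}\,,
\end{equation*}
where I used $|\beta|-\beta_1=\sum_{j=1}^{N_L}\beta_{j+1}$ to collect the powers of $c_H$.

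It then remains to justify interchanging $D^\beta$ with $\E[\,\cdot\given\bZ_t=\bz]$ in \cref{eq:value-function}. This is the only genuinely analytic point, and it is routine: $\varphi$ is smooth with bounded derivatives, and each $M^j_{t,T}$ is a centered Gaussian with variance $(1-e^{-2(T-t)\theta_j^p})/(2\theta_j^p)$, hence with finite moments of all orders, so the derivative displayed above is dominated -- locally uniformly in $\bz$ -- by an integrable random variable; dominated convergence then licenses differentiation under the expectation, and moving the constant $c_H^{|\beta|-\beta_1}$ outside gives \cref{eq:diff-u}. I do not expect difficulty here; the real obstacle arises only if one tries to push the argument to non-linear $\psi$ (the rough Bergomi case), since then $\widehat X_T$ ceases to be affine in $\bz$, the suppressed higher-order Fa\`{a} di Bruno terms reappear, and keeping track of the resulting mixed moments of the $M^j_{t,T}$ against derivatives of $\psi$ becomes the crux.
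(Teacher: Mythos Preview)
Your proposal is correct and follows essentially the same approach as the paper: both proofs write the solution $\widehat X_T$ started from $\bZ_t=\bz$ explicitly, observe that it is affine in $\bz$ with $\partial_{z_1}\widehat X_T=1$ and $\partial_{z_{l+1}}\widehat X_T=c_H\Delta\theta_l M^l_{t,T}$, and note that all higher derivatives vanish so that only the single chain-rule term survives. Your version is in fact more explicit than the paper's---you spell out the double-integral constant term, the degenerate Fa\`a di Bruno reduction, and the dominated-convergence justification for swapping $D^\beta$ with the expectation---whereas the paper simply records the first derivatives of $\widehat X^{t,x}_s$ and states that ``the formula follows as all higher derivatives vanish.''
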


\begin{proof}
  Let $\bZ^{t, \bz}_s$ be the Markov process started at $\bZ_t = \bz$
  with components
  \begin{equation*}
    \bZ^{t, \bz}_s = (\widehat{X}^{t,x}_s, \bY^{\,t,\boldsymbol{y}}_s) 
    = (\widehat{X}^{t,x}_s, Y^{t, y_1}_s, \dots, Y^{t, y_{N_L}}_s)\,;
  \end{equation*}
  here we drop the index $Y^{t,y_l} = Y^{l; t, y_l}$ when the index is
  clear from the initial condition. We recall that $Y^{t,y_l}_s$
  started at the value $y_l$ at time $t$ is given by,
  \begin{equation}
    \begin{split}
      \label{eq:ou-started-at-y}
      Y^{t, y_l}_s
      = e^{-(s-t)\theta^p_l}y_l + \int_t^s e^{-(s-r)\theta^p_l} \dd
      W_r\,, \quad t < s\,,
    \end{split}
  \end{equation}
  and, similarly, that $\widehat{X}^{t,x}_s$ is given by,
  \begin{equation*}
    \label{eq:underlying-approx-started-at-x}
    \begin{split}
      \widehat{X}^{t,x}_s %
      = x + \int_{t}^{s} \mathcal{S}(\bY^{\,t,\boldsymbol{y}}_r)
      \dd{W_r} = x + \int_{t}^{s} c_H \sum_{l=1}^{N_L}Y^{t,y_l}_r
      \Delta \theta_l \dd W_r\,, \quad t<s\,. 
    \end{split}
  \end{equation*}
  Working directly with
  \cref{eq:ou-started-at-y,eq:underlying-approx-started-at-x},
  derivatives of the underlying with respect to the initial conditions
  are given by
  \begin{equation*}
    \label{eq:flux-X-dx}
    \frac{\partial \widehat{X}^{t,x}_s}{\partial x} = 1\,,
  \end{equation*}
  and, for $l=1,\dots,N_L$,
  \begin{equation*}
    \label{eq:flux-X-dy}
    \frac{\partial \widehat{X}^{t,x}_s}{\partial y_l} =
    c_H \Delta \theta_l \int_t^s e^{-(r-t)\theta^p_l} \dd W_r
    \eqdef c_H \Delta\theta_l M^l_{t,s}\,.
  \end{equation*}
  The formula \cref{eq:diff-u} follows as all higher derivatives of
  $\widehat{X}^{t,x}_s$ vanish.
\end{proof}

Returning to our expression for the local weak error in the value
function \cref{eq:increment-value-func-A-simp} we apply
\cref{eq:diff-u} thereby obtaining,
\begin{equation}
  \label{eq:increment-value-func-condexp}
  \begin{split}
    \E[u(\bar{\bZ}_{t_{i+1}}, t_{i+1}) - u(\bar{\bZ}_{t_{i}}, t_i)] =
    - \frac{1}{2} \E \int_{t_i}^{t_{i+1}} \DeltaSYSY
    \E[\varphi''(\widehat{X}_T) \given \bZ_s = \bar{\bZ}_{s}]\dd s \\
     - \sum_{l=1}^{N_L} \E \int_{t_i}^{t_{i+1}} \DeltaSY \E[
    \varphi''(\widehat{X}_T) c_H \Delta\theta_l M^l_{s,T}
    \given \bZ_s = \bar{\bZ}_{s}] \dd s\,.
  \end{split}
\end{equation}
We introduce deterministic functions of $\bz$,
\begin{equation*}
  \label{eq:def-nu}
  \nu(\bz, s) \defeq
  \E [\varphi''(\widehat{X}_T) \given \bZ_s = \bz]\,,
\end{equation*}
and
\begin{equation*}
  \label{eq:def-nu-tilde}
  \widetilde{\nu} (\bz, s) \defeq
  \E \Bigl[\varphi''(\widehat{X}_T)
  \bigl( c_H \textstyle \sum_{l}  M^l_{s,T} \Delta \theta_l \bigr)
  \given \bZ_s = \bz\Bigr]\,.
\end{equation*}
Rewriting \cref{eq:increment-value-func-condexp}
with this new notation leads to an expression for the local weak error in the value function,
\begin{equation}
  \label{eq:increment-value-func}
  \begin{split}
    \E[u(\bar{\bZ}_{t_{i+1}}, t_{i+1}) - u(\bar{\bZ}_{t_{i}}, t_i)] &=
    - \frac{1}{2} \E \int_{t_i}^{t_{i+1}} \!\!\!\DeltaSYSY \nu(\bar{\bZ}_s, s)
    \dd{s} - \E \int_{t_i}^{t_{i+1}}\!\!\!\DeltaSY \widetilde{\nu}(\bar{\bZ}_s,
    s)
    \dd{s} \\
    &\eqdef J + \widetilde{J} \,,
\end{split}
\end{equation}
that will serve as our starting point for the convergence
rates. 
Next we derive an expansion for \cref{eq:increment-value-func} in
powers of $\Delta t$ from which we obtain convergence rates.

\subsection{Taylor expansions and conditional independence}
\label{sec:taylor-expansion}

Starting with the local weak error \cref{eq:increment-value-func}, we
derive asymptotic expansions for $J$ (and $\widetilde{J}$) in in powers of
$\Delta t$ by Taylor expanding $\nu$ (and $\widetilde{\nu}$) at
$\bar{\bZ}_{t_i}$ and applying a conditioning argument.

We observe that $\bar{\bZ} = (\bar{X}, \bY\,)$ in \cref{eq:euler-sde},
i.e.~ the interpolation \cref{eq:euler-interp} together with the exact
dynamics of the OU extended variables, is linear with respect to the
increment over $[t_i, s]$,
\begin{equation}
  \label{eq:linearization-Z}
  \bar{\bZ}_s - \bar{\bZ}_{t_i}
  = \bigl(\mathcal{S}(\bY_{t_i}) W_{t_i, s}\,, Y^1_{t_i, s}\,,
  \dots\,, Y^{N_L}_{t_i, s}  \bigr) 
\end{equation}
where
\begin{equation*}
  W_{t_i, s} \defeq W_s - W_{t_i}  \quad \text{and}
  \quad Y^l_{t_i, s} \defeq Y^l_s - Y^l_{t_i} \,,
  \qquad \text{for } s \in [t_i, t_{i+1})\,,
\end{equation*}
are increments of the driving Brownian motion and extended variable OU
processes, respectively. Using the linearization
\cref{eq:linearization-Z}, the Taylor expansion of $\nu$ at
$\bar{\bZ}_{t_i}$ is,
\begin{equation}
  \label{eq:taylor-nu}
  \nu(\bar{\bZ}_s,s)  = \nu(\bar{\bZ}_{t_i}, s)
  + \sum_{\beta \in \mathcal{I}_{\kappa}} D^\beta \nu(\bar{\bZ}_{t_i}, s)
  \cdot (\mathcal{S}(\bY_{t_i}) W_{t_i,s} )^{\beta_1} (\bY_{t_i,s})^{\hat{\beta}} + \mathcal{R}_\kappa (\nu) \,,
\end{equation}
for sums over multiindices in the set
\begin{equation*}
  \mathcal{I}_{\kappa} = \bigl\{\beta = (\beta_1,\hat{\beta})
  = (\beta_1, \beta_2, \dots, \beta_d) \;:\;
  1 \leq |\beta| \leq \kappa-1 \bigr\}
  \label{eq:multiindex-set}
\end{equation*}
where
\begin{equation*}
  (\bY_{t_i,s})^{\hat{\beta}}
  = \prod_{l=1}^{N_L} (Y^{l}_{t_i,s} )^{\beta_{l+1}}
  = (Y^{1}_{t_i,s} )^{\beta_{2}} \cdots (Y^{N_L}_{t_i,s} )^{\beta_{d}}
\end{equation*}
and the remainder is given in integral form by,
\begin{equation}
  \label{eq:taylor-nu-remainder}
  \begin{split}
    \mathcal{R}_\kappa(\nu) = \frac{1}{\kappa !} \sum_{|\beta| = \kappa}
    (\mathcal{S}(\bY_{t_i}) W_{t_i,s})^{\beta_1}
    (\bY_{t_i,s})^{\hat{\beta}} \int_0^1 D^\beta \nu(\xi_\tau, s)
    \dd \tau \,,\\
    \xi_{\tau} \defeq \bar{\bZ}_{t_i} + \tau
    \bigl(\mathcal{S}(\bY_{t_i}) W_{t_i, s}\,, Y^1_{t_i,
      s}\,, \dots\,, Y^{N_L}_{t_i, s} \bigr)\,.
  \end{split}
\end{equation}
In \cref{eq:taylor-nu}, the terms $\nu$ and $D^\beta \nu$
are deterministic functions of the $\mathcal{F}_{t_i}$-measurable
random variable $\bZ_{t_i}$ and are therefore
$\mathcal{F}_{t_i}$-measurable. Analogous expressions hold for
\cref{eq:taylor-nu,eq:taylor-nu-remainder} with $\widetilde{\nu}$ in place
of $\nu$.

Plugging the $\nu$-expansion \cref{eq:taylor-nu} into
\cref{eq:increment-value-func}, we obtain an asymptotic expansion for
$J$,
\begin{equation}
  \label{eq:expansion-J}
  \begin{split}
    2J &= - \E \int_{t_i}^{t_{i+1}} \nu(\bar{\bZ}_{t_i}, s) \E\bigl[ \DeltaSYSY \given \mathcal{F}_{t_i} \bigr] \dd{s} \\
    &\quad - \sum_{\beta \in \mathcal{I}_\kappa} \E \int_{t_i}^{t_{i+1}}  D^\beta \nu(\bar{\bZ}_{t_i}, s) \mathcal{S}(\bY_{t_i})^{\beta_1} \E\bigl[ \DeltaSYSY (W_{t_i,s})^{\beta_1} (\bY_{t_i,s})^{\hat{\beta}} \given \mathcal{F}_{t_i}  \bigr] \dd{s} \\
    &\quad - \E \int_{t_i}^{t_{i+1}} \E\bigl[ \DeltaSYSY
    \mathcal{R}_\kappa(\nu) \given \mathcal{F}_{t_i} \bigr] \dd{s}\,,
  \end{split}
\end{equation}
by conditional independence. An expansion analogous to
\cref{eq:expansion-J} holds for $\widetilde{J}$ with $\widetilde{\nu}$ in
place of $\nu$. The only terms that depend on $\Delta t$ in
\cref{eq:expansion-J} (and in the analogously expansion for
$\widetilde{J}$) are the conditional expectations involving products of
the increments $\DeltaSYSY$ or $\DeltaSY$ together with powers of
$W_{t_i,s}$ and $\bY_{t_i,s}$. The key obtain weak error rates will be
to show after isolating the order in $\Delta t$ using these expansions
that the expansion coefficients, which depend on the extended state
space variables $\bY$, are controlled with respect to summation in the
parameter(s) $\theta$.

In the next section, we observe that for quadratic payoffs, the
expansions in $\nu$ and $\widetilde{\nu}$ truncate after the first term
since $\nu$ and $\widetilde{\nu}$ already depend on two derivatives of
$\varphi$. In this special case, we derive weak rate one in
\cref{thm:weak-rate-quad}. In \cref{sec:weak-rate-gen}, we prove that
in general the weak rate is $H+1/2$, as reported in
\cref{thm:weak-rate-gen}, also using the asymptotic expansions
approach.

\section{Weak rate one for quadratic payoffs}%
\label{sec:weak-rate-quad}

Using the preceding machinery, we will now derive rates of convergence
for the weak error via Taylor expansions in powers of $\Delta t$ such
that all terms stay integrable in $\theta$. For quadratic payoff
functions, we obtain that the weak error is $O(\Delta t)$, i.e.~rate
one in \cref{thm:weak-rate-quad} below, which is supported by
numerical evidence, recall \cref{fig:conv-quadratic-payoff}. The
mechanism by which rate one is achieved can be observed in the
expansions; the expansion coefficients depend on derivatives of the
payoff function and higher-order terms that reduce the rate vanish
when $\varphi$ is quadratic.

\subsection{Asymptotic expansion approach to weak rate one}
\label{sec:weak-rate-quad:one}

Returning to the increment of the value functional
\cref{eq:increment-value-func}, if $\varphi \in \mathcal{P}^2$, that
is, is a quadratic polynomial, then the derivatives of $\nu$ and
$\widetilde{\nu}$ (as defined in \cref{eq:def-nu,eq:def-nu-tilde}) vanish
and only the first terms in the expansion \cref{eq:taylor-nu} remain.
Then,
\begin{equation}
  \label{eq:splitting-quad}
  \begin{split}
    J + \widetilde{J} &= - \frac{1}{2} \E \int_{t_i}^{t_{i+1}}
    \nu(\bar{\bZ}_{t_i}, s) \E \bigl[ \DeltaSYSY \given
    \mathcal{F}_{t_i} \bigr] \dd{s} - \E
    \int_{t_i}^{t_{i+1}}\widetilde{\nu}(\bZ_{t_i}, s)
    \E\bigl[ \DeltaSY \given \mathcal{F}_{t_i} \bigr] \dd{s} \\
    &\eqdef \tfrac{1}{2} J_{0} + \widetilde{J}_{0}\,,
  \end{split}
\end{equation}
and estimating $J_{0}$ and $\widetilde{J}_{0}$ yields the weak rate
corresponding to quadratic payoff functions.

\begin{theorem}[Weak rate quadratic payoff]
  \label{thm:weak-rate-quad}
  Let $\varphi \in \mathcal{P}_2$ and let
  $\psi(s, \widehat{W}^H_s) = \widehat{W}^H_s$, then
  \begin{equation*}
    \label{eq:weak-rate-one-quad}
    \Err(T, \Delta t, \varphi)
    \defeq \E \bigl[\varphi(\widehat{X}_T) -  \varphi(\bar{X}_{t_n})\bigr]
    \lesssim O(\Delta t)\,,
  \end{equation*}
  i.e.~the Euler method is weak rate one.
\end{theorem}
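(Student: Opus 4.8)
The plan is to estimate the two remaining contributions $J_0$ and $\widetilde J_0$ from the splitting \cref{eq:splitting-quad}, summed over $i=0,\dots,n-1$, and show each sum is $O(\Delta t)$. The key simplification for quadratic $\varphi$ is that $\nu(\bz,s)=\E[\varphi''(\widehat X_T)\mid \bZ_s=\bz]=\varphi''\equiv\text{const}$ (since $\varphi''$ is a constant for a quadratic) and, likewise, $\widetilde\nu(\bz,s) = \varphi'' \cdot c_H\sum_l \Delta\theta_l\,\E[M^l_{s,T}\mid\bZ_s=\bz] = 0$ because each $M^l_{s,T}=\int_s^T e^{-(r-s)\theta_l^p}\dd W_r$ has mean zero and is independent of $\bZ_s$. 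Hence $\widetilde J_0 = 0$ identically, and only $J_0$ survives. So the whole problem reduces to bounding $\left|\sum_{i=0}^{n-1} \tfrac{1}{2} J_0\right| = \left|\tfrac{\varphi''}{4}\sum_{i=0}^{n-1}\E\int_{t_i}^{t_{i+1}} \E\bigl[\DeltaSYSY \mid \mathcal F_{t_i}\bigr]\dd s\right|$.

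Next I would compute the conditional expectation $\E[\DeltaSYSY\mid\mathcal F_{t_i}] = \E[(\mathcal S(\bY_s)-\mathcal S(\bY_{t_i}))^2\mid\mathcal F_{t_i}]$ explicitly. Writing $\mathcal S(\bY_s)-\mathcal S(\bY_{t_i}) = c_H\sum_l \Delta\theta_l\, Y^l_{t_i,s}$ and using \cref{eq:ou-started-at-y} to split $Y^l_s = e^{-(s-t_i)\theta_l^p}Y^l_{t_i} + M^l_{t_i,s}$, the increment becomes $c_H\sum_l \Delta\theta_l\bigl[(e^{-(s-t_i)\theta_l^p}-1)Y^l_{t_i} + M^l_{t_i,s}\bigr]$. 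The conditional second moment is then a deterministic quadratic form in the $\mathcal F_{t_i}$-measurable quantities $Y^l_{t_i}$ plus the variance of $c_H\sum_l\Delta\theta_l M^l_{t_i,s}$. The cross terms vanish in conditional expectation since the $M^l_{t_i,s}$ are independent of $\mathcal F_{t_i}$ with mean zero. This gives, schematically,
\begin{equation*}
  \E[\DeltaSYSY\mid\mathcal F_{t_i}] = c_H^2\Bigl(\sum_l \Delta\theta_l(e^{-(s-t_i)\theta_l^p}-1)Y^l_{t_i}\Bigr)^2 + \var\Bigl(c_H\sum_l\Delta\theta_l M^l_{t_i,s}\Bigr).
\end{equation*}
Since $\widehat W^H$ is a fixed finite-dimensional Gaussian surrogate and $|e^{-x}-1|\le x$, both terms are of order $(s-t_i)^2$ times constants depending on $L, N_L, \{\theta_l\}$ (uniformly bounded on $[t_i,t_{i+1}]$), so $\E[\DeltaSYSY\mid\mathcal F_{t_i}] \le C_{L,N_L}(s-t_i)^2 \le C_{L,N_L}\Delta t^2$; taking expectation and integrating over $[t_i,t_{i+1}]$ yields a bound $C_{L,N_L}\Delta t^3$ per step. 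Summing over $n = T/\Delta t$ steps gives $O(\Delta t^2)$ — in fact better than rate one — for the $\widehat W^H$-surrogate weak error. Crucially, one should double-check the order: it is plausible that the leading term is actually $O(s-t_i)$ rather than $O((s-t_i)^2)$, in which case each step contributes $\Delta t^2$ and the sum is $O(\Delta t)$, matching the claim. I would compute the variance term carefully using $\var(\sum_l \Delta\theta_l M^l_{t_i,s}) = \sum_{l,m}\Delta\theta_l\Delta\theta_m \frac{1-e^{-(\theta_l^p+\theta_m^p)(s-t_i)}}{\theta_l^p+\theta_m^p}$, which behaves like $(\sum_l\Delta\theta_l)^2(s-t_i) + O((s-t_i)^2)$, confirming the per-step $\Delta t^2$ and total $O(\Delta t)$.

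The main obstacle — and the reason this lemma is placed inside the larger argument — is not the estimate for the surrogate weak error $\Err(T,\Delta t,\varphi)$ with driving process $\widehat W^H$, which is elementary as above, but rather \emph{uniformity of the constant in $L$ and $N_L$}. Since the overall argument reconstructs the true weak error for $W^H$ via the three-term splitting \cref{eq:weak-error-quad-and-euler}, and the first and third terms are only controlled by sending $L,N_L\to\infty$ (Lemma~\ref{lem:weak-convergence-terms13}), the $O(\Delta t)$ bound on $J_0$ must hold with a constant independent of the discretization parameters of the affine representation. I would therefore track the $\theta$-dependence explicitly: the terms $\sum_l\Delta\theta_l(\cdots)$ are Riemann sums approximating convergent integrals in $\theta$ (this is exactly the integrability-in-$\theta$ heuristic stressed before \cref{sec:weak-rate-quad}), and using $c_H\int_0^\infty \dd\theta$-type bounds together with $\E[(Y^l_{t_i})^2]\le 1/(2\theta_l^p)$ one shows the quadratic-form term is bounded by $c_H^2 (s-t_i)^2 \bigl(\int_0^\infty \theta^p \cdot \tfrac{1}{\sqrt{2\theta^p}}\dd\theta\bigr)^2$-style quantities that are finite uniformly — or, more carefully, one invokes the $L^2(\Omega;C([0,T]))$ bounds from \cite{Harms:2019rb} on $\widehat W^H$ and its increments. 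Assembling: $\widetilde J_0 = 0$, $\sum_i \tfrac12 J_0 = O(\Delta t)$ with a constant independent of $L,N_L$, hence $\Err(T,\Delta t,\varphi) = O(\Delta t)$, which combined with Lemma~\ref{lem:weak-convergence-terms13} gives the claimed weak rate one for the scheme driven by $W^H$.
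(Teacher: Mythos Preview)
Your overall strategy is the paper's---estimate $J_0$ and $\widetilde J_0$ separately---and your argument that $\widetilde J_0=0$ via $\widetilde\nu\equiv 0$ (since $\varphi''$ is constant and each $M^l_{s,T}$ has mean zero independent of $\bZ_s$) is actually cleaner than what the paper writes. However, the $J_0$ estimate has two genuine gaps.

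First, a definitional slip: $\DeltaSYSY \defeq \mathcal S(\bY_s)^2 - \mathcal S(\bY_{t_i})^2$ is the increment of the \emph{square}, not the square of the increment $(\mathcal S(\bY_s)-\mathcal S(\bY_{t_i}))^2$ (see the definition of $\DeltaSYk$ just after \cref{eq:increment-value-func-A-simp}). The correct conditional expectation carries an additional cross term $2\,\mathcal S(\bY_{t_i})\cdot c_H\sum_l\Delta\theta_l(e^{-(s-t_i)\theta_l^p}-1)Y^l_{t_i}$, of order $(s-t_i)$, which your formula omits.

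Second---and this is the substantive gap---the uniform-in-$(L,N_L)$ bound does not follow from the crude estimates you sketch: your suggested integral $\int_0^\infty \theta^p\cdot(2\theta^p)^{-1/2}\,\dd\theta$ diverges, as does any absolute-value bound on $\sum_{k,l}\Delta\theta_k\Delta\theta_l\,(\theta_k^p+\theta_l^p)\,\E|Y^k_{t_i}Y^l_{t_i}|$, and your variance estimate $(\sum_l\Delta\theta_l)^2(s-t_i)\sim L^2(s-t_i)$ is manifestly not uniform in $L$. The paper's argument works because of an \emph{exact} cancellation that only appears after taking the full expectation (legitimate here precisely because $\nu=\varphi''$ is constant): one first computes $\E\bigl[\DeltaSYSY\bigm|\mathcal F_{t_i}\bigr]$ as in \cref{eq:J0-cond-expec}, and then uses the explicit OU covariance to get
\begin{equation*}
  \E\Bigl[Y^k_{t_i}Y^l_{t_i} - \tfrac{1}{\theta_k^p+\theta_l^p}\Bigr] = -\,\frac{e^{-t_i(\theta_k^p+\theta_l^p)}}{\theta_k^p+\theta_l^p}\,.
\end{equation*}
After the Lipschitz bound $1-e^{-x}\le x$ this leaves the factor $e^{-t_i(\theta_k^p+\theta_l^p)}$, which factorizes and is summable uniformly: $\sum_l\Delta\theta_l\,e^{-t_i\theta_l^p}\le\int_0^\infty e^{-t_i\theta^p}\dd\theta\propto t_i^{-1/p}$. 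The resulting power $t_i^{-2/p}=t_i^{2H-1}$ is integrable on $[0,T]$, which closes the sum over $i$. Without this cancellation the uniformity is simply unavailable; an appeal to $L^2$-bounds on $\widehat W^H$ from \cite{Harms:2019rb} does not substitute for it, since those control the process itself, not the $\theta$-weighted combinations $(\theta_k^p+\theta_l^p)Y^k_{t_i}Y^l_{t_i}$ that arise here.
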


\begin{proof}
  We estimate the terms $J_{0}$ and $\widetilde{J}_{0}$ in
  \cref{eq:splitting-quad} beginning with $\widetilde{J}_{0}$. Working
  directly with the increments of the extended variables,
  \begin{equation*}
    \E\bigl[ \DeltaSY \given \mathcal{F}_{t_i} \bigr]
    = - c_H \sum_{l=1}^{N_L} \E\bigl[ Y^l_{t_i,s} \given \mathcal{F}_{t_i} \bigr] \Delta \theta_l
    = 0\,,
  \end{equation*}
  and thus we conclude
  \begin{equation*}
    \widetilde{J}_{0} = - \E \int_{t_i}^{t_{i+1}} \widetilde{\nu}(\bar{\bZ}_{t_i}, s)
    \E\bigl[ \DeltaSY \given \mathcal{F}_{t_i} \bigr] = 0\,.
  \end{equation*}

  In the case of quadratic $\varphi$, we observe that
  $\|\nu\|_{\infty} = O(1)$ and deterministic. From the definition of
  $Y_s$ (e.g.~see \cref{eq:ou-started-at-y}), working again directly
  with the increments of the extended variables we have that
  \begin{equation*}
    \label{eq:DeltaSYSY-expand}
    \begin{split}
      \DeltaSYSY &= - c_H^2 \sum_{k,l=1}^{N_L}(Y^k_{t_i}Y^l_{t_i} -
      Y^k_{s}Y^l_{s})
      \Delta \theta_k \Delta \theta_l\\
      &= - c_H^2 \sum_{k,l=1}^{N_L} \Bigl\{
      (1-e^{-(s-t_i)(\theta^p_k+\theta^p_l)})Y^k_{t_i}Y^l_{t_i} -
      e^{-(s-t_{i})\theta^p_k}Y^k_{t_i}
      \int_{t_i}^{s}\!\!e^{-(s-r)\theta^p_l}\dd{W_r} \\
      &\qquad - e^{-(s-t_{i})\theta^p_l}Y^l_{t_i}
      \int_{t_i}^{s}\!\!e^{-(s-r)\theta^p_k}\dd{W_r} -
      \int_{t_i}^{s}\!\!e^{-(s-r)\theta^p_k} \dd{W_r}
      \int_{t_i}^{s}\!\!e^{-(s-r)\theta^p_l} \dd{W_r} \Bigr\} \Delta
      \theta_k \Delta \theta_l \,.
    \end{split}
  \end{equation*}
  From this the key conditional expectation term reduces to
  \begin{align}
    \E\bigl[ \DeltaSYSY \given \mathcal{F}_{t_i} \bigr]
      &= - c_H^2 \sum_{l,k=1}^{N_L}\Delta\theta_l\Delta\theta_k
        (1-e^{-(s-t_i)(\theta^p_l+\theta^p_k)})
        \bigl(Y^{l}_{t_i}Y^{k}_{t_i} - \tfrac{1}{\theta^p_l + \theta^p_k}\bigr)
        \label{eq:J0-cond-expec}\,.
  \end{align}
  Only the component $1 - e^{- \Delta t(\theta^p_l+\theta^p_k)}$ will
  contribute to the estimate for the weak rate, provided that the sums
  over $\theta$ in \cref{eq:J0-cond-expec} converge.

  Using \cref{eq:J0-cond-expec} we find
  \begin{equation}
    \label{eq:J0-first-term-naive}
    \begin{split} %
      J_{0} &= -2 c_H^2 \int_{t_i}^{t_{i+1}}  \sum_{l,k=1}^{N_L} \Delta
      \theta_l \Delta \theta_k \E [Y^l_{t_i}Y^k_{t_i} -
      \tfrac{1}{\theta^p_l + \theta^p_k}]
      (1-e^{-(s-t_i)(\theta^p_l+\theta^p_k)})  \dd s\\
      &= -2 c_H^2 \sum_{l,k=1}^{N_L} \Delta\theta_l \Delta\theta_k \E
      \bigl[ Y^l_{t_i}Y^k_{t_i} - \tfrac{1}{\theta^p_l+ \theta^p_k}
      \bigr] \int_{0}^{\Delta t}   g(s) \dd{s}\,.
    \end{split}
  \end{equation}
  The integrand,
  \begin{equation}
    \label{eq:lip-func-naive}
    g(s)
    \defeq 1-e^{-s(\theta^p_l+\theta^p_k)}\,,
  \end{equation}
  is a function of $s \in [0, \Delta t)$ such that $g(0) = 0$ and the
  associated Lipschitz constant $K$ is given by,
  \begin{equation*}
    \label{eq:lip-const-naive}
    K = \frac{\partial}{\partial s}g(s) \Big|_{s = 0}
    = \theta^p_l+\theta^p_k \,.
  \end{equation*}
  Then, since $|g(s) - g(0)| \leq K s$, we have
  \begin{equation*}
    |J_{0}|
    \leq 2 c_H^2 \sum_{l,k=1}^{N_L} \Delta\theta_l \Delta\theta_k
    \left| \E \bigl[ Y^l_{t_i}Y^k_{t_i}(\theta^p_l+\theta^p_k)
      - 1 \bigr] \right|  
    \int_0^{\Delta t} s \dd{s}\,.
  \end{equation*}
  Computing the covariance appearing above,
  \begin{equation*}
    \label{eq:covar-Yl-Yk}
    \E [Y^l_{t_i}Y^k_{t_i}]
    =  \int_{0}^{t_i} e^{-(t_i-r)(\theta^p_l+\theta^p_k)}\dd r
    = \frac{1-e^{-t_i(\theta^p_l+\theta^p_k)}}{\theta^p_l+\theta^p_k}\,,
  \end{equation*}
  we see that
  \begin{equation}
    \label{eq:est-J0-naive}
    |J_{0}| \leq 2 c_H^2 \Bigl( \sum_{l=1}^{N_L}\Delta\theta_l
    e^{-t_i \theta^p_l} \Bigr)^2  \Delta t^2 \leq  \frac{c_H^2}{p^2}
    \Gamma\bigl(\tfrac{1}{p}\bigr)^2  t_i^{-2/p} \Delta t^2 \lesssim O(\Delta t^2)\,,
  \end{equation}
  since
  \begin{equation*}
    \sum_{l=1}^{N_L} \Delta \theta_l e^{-t_i
      \theta^p_l}\leq \int_0^L e^{-t_i \theta^p_l} \dd \theta_l \leq
    \int_0^\infty e^{-t_i \theta^p_l} \dd \theta_l =
    \frac{1}{p}\Gamma\big(\tfrac{1}{p}\big) t_i^{-1/p}\,.
  \end{equation*}
  Importantly, $t^{-2/p}$ is integrable on $[0,T]$ when $p>2$ and
  therefore the $t_i^{-2/p}$ appearing in \cref{eq:est-J0-naive} will
  remain uniformly bounded when summing over $t_i$ in
  \cref{eq:telescoping-sum-value-fcn}.

  Turning now to the telescoping representation of the weak error
  \cref{eq:telescoping-sum-value-fcn} and using
  \cref{eq:est-J0-naive}, we obtain the desired rate
  \begin{align*} %
    \left|\Err(T,\Delta t,\varphi)\right|
    &= \Bigl|\sum_{i=0}^{n-1}\E\bigl[ u(\bar{\bZ}_{t_{i+1}}, t_{i+1}) -
      u(\bar{\bZ}_{t_i}, t_i)\bigr] \Bigr| \\
    &= \Bigl| \sum_{i=0}^{n-1} \E \int_{t_i}^{t_{i+1}}\nu(\bar{\bZ}_{t_i}, s)
      \E\bigl[ \DeltaSYSY \given \mathcal{F}_{t_i} \bigr] \dd{s} \Bigr|\\
    &\leq \frac{c_H^2}{p^2} \Gamma\bigl( \tfrac{1}{p} \bigr)^2
      \sum_{i=0}^{n-1} \Delta t^2 t_i^{-2/p} \\
    &\leq \frac{c_H^2}{p(p-2)} \Gamma\bigl( \tfrac{1}{p} \bigr)^2
       T^{(p-2)/p} \Delta t\\
    &\leq T^{2H} \Delta t \,. \qedhere
  \end{align*}
\end{proof}

Although initially surprising that the rate depends on the payoff, the
Taylor expansion for the weak error provides insight into this
behavior. The expansion coefficients in \cref{eq:taylor-nu} depend on
increasingly higher order derivatives of the payoff function $\varphi$
through derivatives of $\nu$ and $\widetilde{\nu}$ (in
\cref{eq:def-nu,eq:def-nu-tilde}). Unlike for quadratic $\varphi$
where terms in \cref{eq:taylor-nu} vanish, the higher order derivative
terms persist for general payoff functions. Oddly, it is only the next
higher term (compared to those involved in the expansion for quadratic
$\varphi$) that reduces the overall rate for general $\varphi$. In
this context, one might hope to obtain an \emph{effective rate} that
is independent of $H$ for payoff functions well approximated by
quadratic polynomials.
               
\subsection{A simpler proof for weak rate one}%
\label{sec:simple-proof-rate-one}%

Before moving on to the proof of the main result in
\cref{sec:weak-rate-gen}, we first present a simpler proof of weak
rate one for quadratic payoff functions that is also applicable to
nonlinear $\psi$. This proof as well as the weak rate itself was communicated to us by A.~Neuenkirch \cite{neuenkirch-comm}.

\begin{lemma}
  \label{lem:weak-rate-quad-simple}
  Suppose that $\psi \in C^1_{\mathrm{pol}}$, i.e.~$\psi \in C^1$ and
  $\psi, \partial_t \psi, \partial_x \psi$ have polynomial growth, and $\varphi(x) = x^2$. Then
  \begin{equation*}
    \left| \E\bigl[ \varphi( X_T ) -  \varphi(
      \widetilde{X}_T ) \bigr] \right| = O(\Delta t)\,,
  \end{equation*}
  where 
  \begin{equation*}
    \widetilde{X}_T \defeq \int_0^T \psi(s, W^H_{\kappa_s}) \dd W_s\,,
  \end{equation*}
  for $\kappa_s = t_i$ if $s \in [t_i, t_{i+1})$ for each
  $i = 0, \dots, n-1$.
\end{lemma}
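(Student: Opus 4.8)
The plan is to exploit that, for a quadratic payoff, the weak error is computable \emph{exactly} via It\={o}'s isometry, which collapses the rough, non-Markovian problem to a one-dimensional deterministic integral depending only on the \emph{variance} of the driving fBm. Since $\varphi(x)=x^2$ and the two integrands $s\mapsto\psi(s,W^H_s)$, $s\mapsto\psi(s,W^H_{\kappa_s})$ are adapted and square-integrable (polynomial growth of $\psi$ together with the Gaussian moments of $W^H$), It\={o}'s isometry yields
\begin{equation*}
  \E\bigl[\varphi(X_T)-\varphi(\widetilde X_T)\bigr]
  =\int_0^T\Bigl(\E\bigl[\psi(s,W^H_s)^2\bigr]-\E\bigl[\psi(s,W^H_{\kappa_s})^2\bigr]\Bigr)\dd s .
\end{equation*}
Each $W^H_t$ is a mean-zero Gaussian with variance $v(t)\defeq\E[(W^H_t)^2]=t^{2H}$ (a direct computation from \cref{eq:fBm}), so with $\Phi(s,\sigma^2)\defeq\E[\psi(s,\sigma\xi)^2]$ for $\xi$ a standard normal we have $\E[\psi(s,W^H_t)^2]=\Phi(s,v(t))$, and the whole claim reduces to
\begin{equation*}
  \int_0^T\bigl|\Phi(s,v(s))-\Phi(s,v(\kappa_s))\bigr|\dd s=O(\Delta t).
\end{equation*}

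First I would record the single quantitative input, a bound on $\partial_{\sigma^2}\Phi$. Differentiating the scaling identity $\Phi(s,\sigma^2)=\E[\psi(s,\sigma\xi)^2]$ in $\sigma$ (legitimate by dominated convergence, using polynomial growth of $\psi$ and $\partial_x\psi$) gives $\partial_\sigma\Phi=2\E[\psi(s,\sigma\xi)\,\partial_x\psi(s,\sigma\xi)\,\xi]$, hence
\begin{equation*}
  \partial_{\sigma^2}\Phi(s,\sigma^2)=\frac1{2\sigma}\,\partial_\sigma\Phi(s,\sigma^2)=\frac1\sigma\,\E\bigl[\psi(s,\sigma\xi)\,\partial_x\psi(s,\sigma\xi)\,\xi\bigr].
\end{equation*}
Using only $\psi\in C^1_{\mathrm{pol}}$ (polynomial growth uniform on the compact interval $[0,T]$) and finiteness of Gaussian moments, this gives $|\partial_{\sigma^2}\Phi(s,\sigma^2)|\le C\,\sigma^{-1}$ uniformly for $s\in[0,T]$ and $0<\sigma\le T^{H}$. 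I would pair this with the elementary bound on the variance increment: for $s\in[t_i,t_{i+1})$ with $i\ge1$, the mean value theorem together with $2H-1<0$ gives $0\le v(s)-v(\kappa_s)=s^{2H}-t_i^{2H}\le 2H\,t_i^{2H-1}\Delta t$, while on that interval $\sigma^2\ge v(t_i)=t_i^{2H}$.

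Then I would finish by splitting $[0,T]=[0,t_1)\cup\bigcup_{i=1}^{n-1}[t_i,t_{i+1})$. On $[t_i,t_{i+1})$, $i\ge1$, combining the two bounds,
\begin{equation*}
  \bigl|\Phi(s,v(s))-\Phi(s,v(t_i))\bigr|\le\Bigl(\sup_{\sigma\ge t_i^{H}}|\partial_{\sigma^2}\Phi|\Bigr)\,\bigl(v(s)-v(t_i)\bigr)\le 2HC\,t_i^{H-1}\Delta t,
\end{equation*}
so integrating over $s$ and summing, $\sum_{i\ge1}\int_{t_i}^{t_{i+1}}|\Phi(s,v(s))-\Phi(s,v(t_i))|\dd s\le 2HC\,\Delta t\sum_{i\ge1}t_i^{H-1}\Delta t\le 2HC\,\Delta t\int_0^T t^{H-1}\dd t=O(\Delta t)$, the last integral being finite precisely because $H>0$. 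On the first subinterval (of length $t_1=\Delta t$) the derivative bound degenerates at $\sigma=0$, but there both $|\Phi(s,v(s))|$ and $|\Phi(s,0)|=\psi(s,0)^2$ are bounded uniformly on $[0,T]$ by polynomial growth, so this piece contributes at most $2\sup_{[0,T]\times[0,T^{2H}]}\Phi\cdot t_1=O(\Delta t)$. Adding the two pieces gives the claim.

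The one subtlety to get right is the $\sigma^{-1}$ singularity of $\partial_{\sigma^2}\Phi$: it must be absorbed against the integrable singularity $t^{H-1}$ of $v'$ near $t=0$, which is exactly why one needs $H>0$ and why the first time step has to be treated separately by a crude uniform estimate; there is no deeper obstacle, since the quadratic payoff linearises everything. For comparison, if the time argument of $\psi$ is also frozen (the genuine Euler integrand $\psi(\kappa_s,W^H_{\kappa_s})$), the extra contribution is easier still: $\partial_s\Phi(s,\sigma^2)=2\E[\psi\,\partial_t\psi]$ is bounded with no $\sigma^{-1}$ blow-up -- this is where polynomial growth of $\partial_t\psi$ enters -- so it adds a further $O(\Delta t)$.
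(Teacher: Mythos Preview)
Your proof is correct and rests on the same core idea as the paper's: apply It\^o's isometry to collapse the weak error to a deterministic integral $\int_0^T(\cdot)\,\dd s$, then exploit that the relevant derivative blows up only like $t^{H-1}$, which is integrable on $[0,T]$ since $H>0$.

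The execution differs in two respects. First, the paper packages the marginal variance directly into the time variable by setting $g(t)\defeq\E[\psi^2(t,t^{H}V)]$ for $V\sim\mathsf{N}(0,1)$, notes $g'(t)=O(t^{H-1})\in L^1([0,T])$, and then uses a single Fubini swap,
\[
\int_0^T\bigl|g(s)-g(\kappa_s)\bigr|\,\dd s\le\int_0^T\!\!\int_{\kappa_s}^{s}|g'(t)|\,\dd t\,\dd s=\int_0^T(\zeta_t-t)\,|g'(t)|\,\dd t\le\Delta t\,\|g'\|_{L^1([0,T])},
\]
which absorbs the singularity at $t=0$ uniformly and avoids your separate crude estimate on $[0,t_1)$. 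Your route via $\Phi(s,\sigma^2)$, the mean-value bound $|\partial_{\sigma^2}\Phi|\le C\sigma^{-1}$, and a Riemann-sum comparison is equally valid but requires that split. Second, the paper's proof in fact freezes \emph{both} arguments of $\psi$ (it writes $g(\kappa_s)=\E[\psi^2(\kappa_s,\kappa_s^{H}V)]$), which matches the left-point scheme but not quite the lemma as stated; you follow the statement literally (only the fBm argument frozen) and then observe at the end that freezing the time argument too contributes an additional $O(\Delta t)$ with a bounded derivative and no $\sigma^{-1}$ blow-up. In that sense your treatment is the more careful one, while the paper's Fubini argument is the cleaner bookkeeping device.
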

\begin{proof}
  By the It\={o} isometry, we have
  \begin{gather*}%
    \E\left[ \varphi\left( X_T \right) \right] = \int_0^T \E\left[
      \psi^2(s, s^H W^H_1) \right] \dd s = \int_0^T g(s) \dd s\,,\\
    \E \bigl[ \varphi( \widetilde{X}_T ) \bigr] = \int_0^T \E\left[
      \psi^2(\kappa_s, \kappa_s^H W_1^H) \right] \dd s =
    \int_0^T g(\kappa_s) \dd s\,,
  \end{gather*}%
  where
  \begin{equation*}
    g(s) \defeq \E\left[ \psi^2(s,  s^H V ) \right], \quad V \sim
    \mathsf{N}(0,1)\,. 
  \end{equation*}
  Note that $g$ is differentiable with integrable derivative and we assume the time derivative of $\psi$ is bounded. Indeed,
  \begin{equation*}
    g^\prime(t) = 2\E\left[ \partial_t \psi(t, t^H V ) \right]
    + 2\E \left[
      \partial_x \psi(t, t^H V) V \right] t^{H-1}\,,
  \end{equation*}
  which is of order $t^{H-1}$ and, hence, integrable.

  Setting
  \begin{equation*}
    \zeta_t \defeq \min \{t_i \given t_i \ge t\}\,,
  \end{equation*}
  we conclude with
  \begin{align*}
    \abs{\int_0^T g(s) \dd s - \int_0^T g(\kappa_s) \dd s}
    &\le \int_0^T \abs{g(\kappa_s) + \int_{\kappa_s}^s g^\prime(t) \dd t -
      g(\kappa_s)} \dd s\\
    &\le \int_0^T \int_{\kappa_s}^s \abs{g^\prime(t)} \dd t \dd s\\
    &= \int_0^T \int_t^{\zeta_t} \dd s \abs{g^\prime(t)} \dd t\\
    &\le \max_{i=0, \ldots, n-1} \abs{t_{i+1} - t_i}
      \left\lVert g^\prime \right\rVert_{L^1([0,T])}\,. \qedhere
  \end{align*}
\end{proof}

For simplicity, we assumed that
$\psi(s, \widehat{W}_s^H)=\widehat{W}_s^H$ in
\cref{thm:weak-rate-quad}. In contrast
\cref{lem:weak-rate-quad-simple} is applicable to any $\psi$ including
nonlinear functions. However, it is not clear how to extend the
approach of the simple proof for \cref{lem:weak-rate-quad-simple} to
more general payoff functions $\varphi$. In the next section we use
the asymptotic expansions to prove the main result,
\cref{thm:weak-rate-gen}, obtaining the weak rate $H+1/2$ for general
payoff functions for $\psi(s, \widehat{W}_s^H)=\widehat{W}_s^H$.

\section{Proof of \cref{thm:weak-rate-gen}}%
\label{sec:weak-rate-gen}%

Our proof of \cref{thm:weak-rate-gen} follows the expansion approach
used to obtain rate one for quadratic payoffs in
\cref{sec:weak-rate-quad:one}. We derive asymptotic expansions in
powers of $\Delta t$ for increments of the value function in
\cref{eq:increment-value-func}, i.e., for $J$ and $\widetilde{J}$. For the
case of general payoff functions, this requires two rounds of Taylor
expansions. The first round expands $\nu$ and $\widetilde{\nu}$ at
$\bar{\bZ}_{t_i}$ using \cref{eq:taylor-nu}, as was done in
\cref{sec:weak-rate-quad:one}. Then after applying a conditioning
argument, we explicitly deal with correlations by expressing our
expansions in terms of the extended variables and making a second
round of Taylor expansions with respect to select components of
$\bY_{t_i}$. Again, a key point in the proof is that all the terms in
the expansions are controlled with respect to $\theta$.

Returning to the local weak error
\cref{eq:increment-value-func-A-simp}, we use the $\nu$-expansion
\cref{eq:taylor-nu} to find that $J$, the term corresponding to the
increment $\DeltaSYSY$, up to order $\kappa = 3$ is given by,
\begin{align}
  2 J   &= - \E \int_{t_i}^{t_{i+1}} \nu(\bar{\bZ}_{t_i}, s)
            \E\bigl[ \DeltaSYSY \given \mathcal{F}_{t_i} \bigr] \dd{s} \notag \\
          &\quad - \E \int_{t_i}^{t_{i+1}} D_1 \nu(\bar{\bZ}_{t_i}, s)
            \mathcal{S}(\bY_{t_i})
            \E\bigl[\DeltaSYSY W_{t_i,s}
            \given \mathcal{F}_{t_i} \bigr] \dd{s} \notag \\
          &\quad - \sum_{j=2}^{d} \E \int_{t_i}^{t_{i+1}}
            D_j \nu(\bar{\bZ}_{t_i}, s)
            \E\bigl[\DeltaSYSY Y^{j-1}_{t_i,s}
            \given \mathcal{F}_{t_i} \bigr] \dd{s} \notag\\
          &\quad -  \E \int_{t_i}^{t_{i+1}}  D_{11}\nu(\bar{\bZ}_{t_i}, s)
            \mathcal{S}(\bY_{t_i})^2
            \E\bigl[ \DeltaSYSY (W_{t_i,s})^2
            \given \mathcal{F}_{t_i} \bigr] \dd{s} \notag\\
          &\quad - 2 \sum_{j=2}^{d} \E \int_{t_i}^{t_{i+1}}
            D_{j1} \nu(\bar{\bZ}_{t_i}, s)  \mathcal{S}(\bY_{t_i}) 
            \E\bigl[ \DeltaSYSY W_{t_i,s}  Y^{j-1}_{t_i,s}
            \given \mathcal{F}_{t_i} \bigr] \dd{s}
            \notag\\
          &\quad -  \sum_{j,k=2}^{d}\E \int_{t_i}^{t_{i+1}}
            D_{jk} \nu(\bar{\bZ}_{t_i}, s) 
            \bigl[ \DeltaSYSY Y^{j-1}_{t_i,s}  Y^{k-1}_{t_i,s}
            \given \mathcal{F}_{t_i}\bigr] \dd{s} \notag\\
          &\quad - \E \int_{t_i}^{t_{i+1}} \E\bigl[ \DeltaSYSY \mathcal{R}_3(\nu)
            \given \mathcal{F}_{t_i} \bigr] \dd{s}  \notag \\
          &\eqdef J_{0} + J_{1,0} + J_{1,1} + J_{2,0} + J_{2,1} + J_{2,2}
            - \E \int_{t_i}^{t_{i+1}} \E\bigl[ \DeltaSYSY \mathcal{R}_3(\nu)
            \given \mathcal{F}_{t_i} \bigr] \dd{s} \,. \label{eq:J-splitting}
\end{align}
Here $J_{0}$ is as before and the $J_{k,i}$ involve $k$th order
derivatives of $\nu$ that do not necessarily vanish for general payoff
functions $\varphi$ (here the second index $i \leq k$ is the number of
the derivatives that correspond to extended variable directions and
hence the number of sums over extended variable indices). Analogously,
for $\widetilde{J}$, the term corresponding to the increment $\DeltaSY$,
we have,
\begin{equation}
   \label{eq:Jtilde-splitting}
  \widetilde{J} = \widetilde{J}_{0} + \widetilde{J}_{1,0} + \widetilde{J}_{1,1}
                + \widetilde{J}_{2,0} + \widetilde{J}_{2,1} + \widetilde{J}_{2,2}
                -  \E \int_{t_i}^{t_{i+1}} \E\bigl[ \DeltaSY \mathcal{R}_3(\widetilde{\nu})
                \given \mathcal{F}_{t_i} \bigr] \dd{s} \,,
\end{equation}
with $\widetilde{\nu}$ in place of $\nu$ and the increment $\DeltaSY$ in
place of $\DeltaSYSY$ compared to \cref{eq:J-splitting}. In the
sequel, we will simply write
\begin{equation}
  \label{eq:J-Jtilde-notation}
  J_{k} = \sum_{i = 0}^{k} J_{k,i} \qquad \text{and}
  \qquad \widetilde{J}_{k} = \sum_{i = 0}^{k} \widetilde{J}_{k,i}\,,\qquad k>0\,,
\end{equation}
for the sum of all terms involving $k$th order derivatives of $\nu$
and $\widetilde{\nu}$. In what follows, we first take the fBm view and
assume deterministic $\|D\nu\|_\infty = O(1)$ and similarly for
$\widetilde{\nu}$. Since the terms corresponding to $\nu$ and
$\widetilde{\nu}$ contribute only to the constant and not to the rate,
this assumption allows us to easily deduce the order in $\Delta t$,
namely, that terms $J_k$ are at least order $O (\Delta t^{H+3/2})$.
Expressing the $J_k$ in extended variables, as in
\cref{eq:J0-first-term-naive}, it is then possible to carrying out a
second round of Taylor expansions to demonstrates that the constants
are controlled.

\subsection{Estimate for general payoffs: $J_{0}$ is $O(\Delta t^{H+3/2})$}
\label{sec:estimate-J0}%

In \cref{sec:weak-rate-quad:one}, $\nu$ is deterministic and $O(1)$
since it depends on two derivatives of the quadratic payoff $\varphi$.
Continuing as in \cref{eq:J0-cond-expec},
our starting point for the full estimate for $J_{0}$ is
\begin{equation}
  \label{eq:J0}
  \begin{split}
    J_{0}%
    &= - c_H^2 \E \int_{t_i}^{t_{i+1}}\nu(\bar{X}_{t_i}, \bY_{t_i}\,;
    s) \sum_{k,l=1}^{N_L} \Delta\theta_k \Delta\theta_l \bigl[
    Y^k_{t_i} Y^l_{t_i} - \tfrac{1}{\theta^p_k+\theta^p_l}\bigr] g(s)
    \dd{s} \,,
  \end{split}
\end{equation}
where we emphasize the dependence of $\nu$ on $\bY_{t_i}$. We define
an auxiliary function,
\begin{equation}
  \label{eq:J0-f-kl}
  f^{kl}_s(Y^k_{t_i}, Y^l_{t_i})
  \defeq \E\bigl[ \nu(\bar{X}_{t_i}, \bY_{t_i}\,; s)
  \given Y^k_{t_i}, Y^l_{t_i} \bigr]\,,
\end{equation}
and expand $f^{kl}_s$ in a Taylor series at zero,
\begin{equation}
  \label{eq:taylor-f-kl}
  f^{kl}_s(Y^k_{t_i}, Y^l_{t_i}) = f^{kl}_s(\boldsymbol{0})
  + \sum_{\alpha \in \mathcal{J}_{\alpha_{\max}}} \frac{1}{|\alpha|!} \partial^\alpha_{kl}
  f^{kl}_s(\boldsymbol{0}) \bigl(Y^k_{t_i} Y^l_{t_i} \bigr)^{\alpha}
  + R_{\alpha_{\max}}(Y^k_{t_i},Y^l_{t_i})\,,
\end{equation}
for a set of multiindices
$\mathcal{J}_{\alpha_{\max}} = \{ \alpha = (\alpha_1, \alpha_2) \;:\;
1 \leq |\alpha| < \alpha_{\max} \}$ where we use the notation,
\begin{equation*}
  \partial^\alpha_{l_1 \dots l_j}
  = \partial_{l_1}^{\alpha_1}\dots \partial_{l_j}^{\alpha_j}\,,
  \qquad \alpha = (\alpha_1, \dots, \alpha_j)\,,
\end{equation*}
(as opposed to $D$) to emphasize that the derivatives are taken with
respect to $y_l$ directions only. The remainder is given by,
\begin{equation}
  \label{eq:aux-remainder}
  R_{\alpha_{\max}}(Y^k_{t_i}, Y^l_{t_i})
  \defeq \sum_{|\alpha| = \alpha_{\max}} \frac{1}{\alpha_{\max}!}
  \partial^\alpha_{kl} f^{kl}_s(\xi_k, \xi_l)
  (Y^k_{t_i} Y^l_{t_i})^{\alpha}\,,
\end{equation}
for an intermediate point $(\xi_{k}, \xi_{l})$.

Plugging this second round of Taylor expansions \cref{eq:taylor-f-kl}
into \cref{eq:J0}, yields
\begin{equation}
  \label{eq:J0-taylor-f}
  \begin{split}
    J_{0} &= -c_H^2 \Biggl( \sum_{k,l=1}^{N_L} \E\bigl[Y^k_{t_i}
    Y^l_{t_i} - \tfrac{1}{\theta^p_k+\theta^p_l}\bigr]
    \int_{t_i}^{t_{i+1}} f^{kl}_s(\boldsymbol{0}) g(s-t_i) \dd{s}
    \Delta\theta_k\Delta\theta_l\\
    &\quad+ \sum_{\alpha\in\mathcal{J}} \frac{1}{|\alpha|!}
    \sum_{k,l=1}^{N_L} \E\bigl[ (Y^k_{t_i})^{\alpha_1}
    (Y^l_{t_i})^{\alpha_2} \bigl(Y^k_{t_i}Y^l_{t_i} -
    \tfrac{1}{\theta^p_k+\theta^p_l}\bigr)\bigr] \int_{t_i}^{t_{i+1}}
    \partial^\alpha_{kl}f^{kl}_s(\boldsymbol{0}) g(s-t_i) \dd{s}
    \Delta\theta_k\Delta\theta_l\\
    &\quad+ \sum_{k,l=1}^{N_L} \E\bigl[ R_{\alpha_{\max}}
    (Y^k_{t_i},Y^l_{t_i}) \bigl( Y^k_{t_i}Y^l_{t_i} -
    \tfrac{1}{\theta^p_{k}+\theta^p_{l}} \bigr) \bigr] \Delta\theta_k
    \Delta\theta_l \int_{t_i}^{t_{i+1}}g(s-t_i)\dd{s} \Biggr)\,,
  \end{split}
\end{equation}
where $g(s)$ as in \cref{eq:lip-func-naive}. For the remainder term in
\cref{eq:J0-taylor-f}, we use the Hölder regularity of the fBm to
estimate the derivative of the auxiliary function evaluated at an
intermediate point,
\begin{equation*}
  \sum_{k,l=1}^{N_L}\E\bigl[ |R_{\alpha_{\max}}(Y^k_{t_i}, Y^l_{t_i})|
  |Y^k_{t_i}| | Y^l_{t_i}|\bigr] 
  \sim \E\bigl[|W_{t_i,t_{i+1}}^H|^{\alpha_{\max}}\bigr]
  \lesssim  \Delta t^{H \alpha_{\max}}\,.
\end{equation*}
From this last expression, the number of terms in the auxiliary
expansion, $\alpha_{\max}$, is finite and the contribution from the
remainder can be made to be order one by choosing
\begin{equation*}
  \alpha_{\max} \defeq \lceil \tfrac{1}{H}  \rceil \,.
\end{equation*}
For the remaining terms in \cref{eq:J0-taylor-f}, the $f^{kl}$ can be
estimated by the payoff $\varphi$ (see \cref{lem:bound-f-by-phi} in
\cref{sec:bound-f-varphi}) and therefore we write for convenience that
$f^{kl}$ and all derivatives are bounded by a constant $Q$,
\begin{equation}
  \label{eq:assume-f-J0}
  f^{kl} \in C^{\alpha_{\max}}_b
  \qquad \text{and} 
  \qquad \| D^\alpha f^{kl} \|_{\infty}\leq Q \quad \text{a.s.}
\end{equation}

To estimate the first term in \cref{eq:J0-taylor-f}, we use the
Lipschitz argument from the proof of weak rate one for quadratic
payoffs in \cref{eq:lip-func-naive} together with
\cref{eq:assume-f-J0}, and find that,
\begin{equation}
  \label{eq:J0-first-term-full-estimate}
  c_H^2 Q \Delta t^2 \sum_{k,l=1}^{N_L}
  |\E[Y^k_{t_i}Y^l_{t_i}(\theta^p_k+\theta^p_l) - 1] |
  \Delta\theta_k \Delta\theta_l 
  \leq \frac{c_H^2}{2p^2} Q \Gamma\big(\tfrac{1}{p})^2 t_i^{-2/p} \Delta t^2  \,,
\end{equation}
for a constant proportional to $t_i^{-1/p}$ as in
\cref{eq:est-J0-naive}. The key to obtaining the rate in $\Delta t$ for terms of higher order
in $\alpha$ again depends on demonstrating summability in $k$ and
$l$, as in \cref{eq:J0-first-term-full-estimate}. The higher order terms in \cref{eq:J0-taylor-f} are of the form,
\begin{equation*}
  \left\{\E \left[ (Y^k_{t_i})^{\alpha_1+1} (Y^l_{t_i})^{\alpha_2+1}\right] - \E \left[ (Y^k_{t_i})^{\alpha_1} (Y^l_{t_i})^{\alpha_2}\right] \tfrac{1}{\theta_k^p + \theta_l^p} \right\} \int_{0}^{\Delta t} \partial^\alpha_{kl} f_{s-t_i}(\boldsymbol{0}) g(s) \dd s\,,
\end{equation*}
and we use Isserlis' theorem to expand the expectations into products
of covariances of the extended variables. For $|\alpha|$ odd,
e.g.~when $|\alpha| = 1$ as $\alpha_1=1\,, \alpha_2=0$ or
$\alpha_1=0\,, \alpha_2=1$, then the term is zero by Isserlis'. For
$|\alpha|$ even, we use the exact expression for the covariance and
then check the summability in $k$ and $l$. In contrast to
\cref{eq:J0-first-term-full-estimate}, for these higher order terms we
obtain the rate $O(\Delta t^{H+3/2})$.

For example when $|\alpha|=2$, we apply Isserlis' theorem and obtain
a term containing,
\begin{equation}
  \label{eq:J0-second-term-alpha2-full-estimate}
  \begin{split}
    \E[ Y^k_{t_i}Y^k_{t_i}] \E[Y^l_{t_i} Y^l_{t_i}] & + 2 \E[Y^k_{t_i}Y^l_{t_i}]^2 - \E[Y^k_{t_i}Y^l_{t_i}] \tfrac{1}{\theta_k^p+\theta_l^p}\\
   &= \frac{1-e^{-2\theta_k^p t_i}}{2\theta_k^p} \frac{1-e^{-2\theta_l^p t_i}}{2\theta_l^p} + 2 \left(\frac{1-e^{-(\theta_k^p+\theta_l^p)t_i}}{\theta_k^p+\theta_l^p}\right)^2
        - \frac{1-e^{-(\theta_k^p+\theta_l^p) t_i}}{(\theta_k^p + \theta_l^p)^2}\,.
  \end{split}
\end{equation}
Using the Lipschitz argument as in
\cref{eq:J0-first-term-full-estimate}, the final two terms in
\cref{eq:J0-second-term-alpha2-full-estimate}, containing powers of
$(\theta^p_k + \theta^p_l)$ in the denominator, are
summable in $k$ and $l$ yielding the estimate $O(\Delta t^2)$.
Neglecting $\partial^\alpha f_s$, we focus on the contribution to $J_0$
from first term in \cref{eq:J0-second-term-alpha2-full-estimate},
\begin{equation*}
  F(\theta_k^p) F(\theta_l^p) \int_0^{\Delta t} (1-e^{-s(\theta^p_k + \theta^p_l)}) \dd s \Delta\theta_k \Delta\theta_l \leq \underbrace{F(\theta_k^p) F(\theta_l^p) \left(1 - e^{-(\theta_k^p+\theta_l^p)\Delta t} \right)}_{\eqdef G(\theta_k^p, \theta_l^p, \Delta t)}
  \Delta t \Delta \theta_k \Delta \theta_l
\end{equation*}
where we introduce the notation
\begin{equation*}
  F(u) \defeq \frac{1-\exp(-2u t_i)}{2u} \,.
\end{equation*}
We thus consider the sum 
\begin{equation}
  \label{eq:J0-G-sum}
  \sum_{(k,l) \in \mathbf{N}^2} G(\theta_k^p, \theta_l^p, \Delta t) \Delta t \Delta \theta_k \Delta \theta_l
\end{equation}
and obtain an upper bound $\tilde{C}\Delta t^{H+3/2}$, for a constant
independent of $\Delta t$, by partitioning according to the four cases
below; we let $C, C', C'' >0$ and $\alpha \in (0,1)$ be constants that
are independent of $\Delta t$.

For the first case, we consider 
\begin{equation*}
\mathcal{N}_{1} \defeq \{ (k,l) \in \mathbf{N}^2 : \theta_k^p+\theta_l^p \leq C\}\,.
\end{equation*}
Since $1 - \exp(-u) \leq \min(1, u)$, for $u >0$, the summand in \cref{eq:J0-G-sum} is
bounded by
\begin{equation*}
  G(\theta_k^p, \theta_l^p, \Delta t) \leq t_i^2 C \Delta t\,,
\end{equation*}
for $(k,l) \in \mathcal{N}_1$, and thus
\begin{equation*}
  \sum_{(k,l) \in \mathcal{N}_1} G(\theta_k^p, \theta_l^p, \Delta t) \Delta t \Delta \theta_k \Delta \theta_l = O(\Delta t^2)\,.
\end{equation*}
In the second case, we consider
\begin{equation*}
  \mathcal{N}_{2} \defeq \{ (k,l) \in \mathbf{N}^2 : C \leq \theta_k^p + \theta_k^p \leq C' \Delta t^{-\alpha}\}\,.
\end{equation*}
For $(k,l) \in \mathcal{N}_2$, we estimate
\begin{equation*}
  1 - e^{-(\theta_k^p + \theta_l^p) \Delta t} \leq 1 - e^{- C' \Delta t^{1-\alpha}} \leq C' \Delta t^{1-\alpha}
\end{equation*}
and also
\begin{equation*}
  F(\theta^p) = \frac{1-e^{-2\theta^p t_i}}{2\theta^p} \leq \min( t_i, \tfrac{1}{2\theta^p}) \eqdef m(\theta^p)\,,
\end{equation*}
so that the summand in \cref{eq:J0-G-sum} is bounded by
\begin{equation*}
  G(\theta_k^p, \theta_l^p, \Delta t) \leq C' \Delta t^{1-\alpha} m(\theta_k^p) m(\theta_l^p)\,.
\end{equation*}
Although $\mathcal{N}_2$ grows as $\Delta t \to 0$, the order one contribution from $m$ is overtaken by the decay in $\theta$, and thus,
\begin{equation*}
  \sum_{(k,l) \in \mathcal{N}_2} G(\theta_k^p, \theta_l^p, \Delta t) \Delta t \Delta \theta_k \Delta \theta_l  = O(\Delta t^{2-\alpha})\,.
\end{equation*}

In the third case, we let
\begin{equation*}
  \mathcal{N}_3 \defeq \{ (k,l) \in \mathbf{N}^2 : C' \Delta t^{-\alpha} \leq \theta_k^p + \theta_l^p \leq C'' \Delta t^{-1} \}\,;
\end{equation*}
this is the most critical case, where our estimate must be sharpest. In particular, first observe that the function $1 - \exp(-(\theta_k^p + \theta_l^p) \Delta t)$
varies from $O(\Delta t^{1-\alpha})$ for values $\theta_k^p + \theta_l^p = O(\Delta t^{-\alpha})$ up to $O(1)$ for values $\theta_k^p + \theta_l^p = O(\Delta t^{-1})$ and we exploit this variation to achieve our estimate. Recalling that $1 - \exp(-u) \leq u$, for $0\leq u \leq 1$, then for $(k,l) \in \mathcal{N}_3$ (taking $C' = 1$) we bound
\begin{equation*}
  1 - e^{-(\theta_k^p + \theta_l^p) \Delta t} \leq (\theta_k^p + \theta_l^p) \Delta t \,.
\end{equation*}
Thus,
\begin{equation*}
  \sum_{(k,l) \in \mathcal{N}_3} G(\theta_k^p, \theta_l^p, \Delta t) \Delta t \Delta\theta_k \Delta \theta_l \leq \sum_{(k,l) \in \mathcal{N}_3} m(\theta_k^p) m(\theta_l^p) (\theta_k^p + \theta_l^p) \Delta t^2 \Delta \theta_k \Delta \theta_l
\end{equation*}
so that we are left to estimate the quantity on the right-hand side which blows up with a certain rate on $\Delta t$ as $\Delta t \to 0$. We bound the sum by the corresponding integral, i.e., %
\begin{equation*}
    \sum_{(k,l) \in \mathcal{N}_3}
    m(\theta_k^p) m(\theta_l^p) (\theta_k^p + \theta_l^p) \Delta \theta_k \Delta \theta_l \leq \iint \limits_{\substack{\Delta t^{-\alpha} \leq \theta_k^p + \theta_l^p \leq \Delta t^{-1}\\ \theta_j \geq 0}} \frac{\theta_k^p+\theta_l^p}{\max(t_i^{-1}, \theta_k^p) \max(t_i^{-1}, \theta_l^p)} \dd \theta_k \dd \theta_l %
\end{equation*}
and further divide the right-hand side above into three integrals,
$I_1$, $I_2$, and $I_3$, according to the regions
\begin{align*}
   R_1 &\defeq \{ \Delta t^{-\alpha} \leq \theta_j^p \leq \Delta t^{-1}  \quad \text{for } j = k,l \}\,,\\
      R_2 &\defeq \{ \Delta t^{-\alpha} \leq \theta_k^p \leq \Delta t^{-1}\quad \text{and} \quad 0 \leq \theta_l^p \leq \Delta t^{-\alpha} \}\,,\\
     R_3&\defeq \{ \Delta t^{-\alpha} \leq \theta_k^p+\theta_l^p  \leq \Delta t^{-1} \quad \text{and} \quad  0 \leq \theta_j^p \leq \Delta t^{-\alpha} \quad \text{for } j = k,l \}\,,
\end{align*}
respectively. Then,
\begin{align*}
  I_1 &\leq \iint \limits_{R_1} \left(\theta_l^{-p} + \theta_k^{-p} \right) \dd \theta_k \dd \theta_l \\
      & \leq 2 \int \limits_{\Delta t^{-\alpha/p}}^{\Delta t^{-1/p}} \frac{\dd \theta_k}{\theta_k^p} = \left. - \frac{\theta_k^{-(p-1)}}{p-1} \right|_{\Delta t^{-\alpha/p}}^{\Delta t^{-1/p}}
        \lesssim \Delta t^{\alpha(1-1/p)}\,,
\end{align*}
and we note that $\Delta t^{\alpha(1-1/p)} \to 0$ as $\Delta t \to 0$, since $0 < \alpha < 1$ and $p>2$, so this is not the dominant term. For $R_2$,
\begin{align*}
  I_2 &\lesssim \int \limits_{\Delta t^{-\alpha/p}}^{\Delta t^{-1/p}} \int \limits_0^{\Delta t^{-\alpha/p}}
        \frac{(\theta_k^p + \theta_l^p) }{\max(t_i^{-1}, \theta_k^p) \max(t_i^{-1}, \theta_l^p)} \dd \theta_l \dd \theta_k \\
      &\lesssim \int \limits_{\Delta t^{-\alpha/p}}^{\Delta t^{-1/p}} \int \limits_{t_i^{-1/p}}^{\Delta t^{-\alpha/p}} \frac{\theta_k^p + \theta_l^p}{\theta_k^p\theta_l^p} \dd \theta_l \dd \theta_k + \int \limits_0^{t_i^{-1/p}} \int \limits_{\Delta t^{-\alpha/p}}^{\Delta t^{-1/p}} \frac{\theta_k^p+\theta_l^p}{\theta_k^p t_i^{-1}} \dd \theta_k \dd \theta_l \,.
\end{align*}
The first integral is equal to 
\begin{equation*}
  \int \limits_{\Delta t^{-\alpha/p}}^{\Delta t^{-1/p}} \int \limits_{t_i^{-1/p}}^{\Delta t^{-\alpha/p}} (\theta_l^{-p}+\theta_k^{-p}) \dd \theta_k \dd \theta_l = 2 \Delta t^{-\alpha/p} \int \limits_{\Delta t^{-\alpha/p}}^{\Delta t^{-1/p}} \theta^{-p}_k \dd \theta_k = \Delta t^{-\alpha/p}
  \left. \frac{\theta^{-p+1}}{-p+1} \right|_{\Delta t^{-\alpha/p}}^{\Delta t^{-1/p}} %
  = \Delta t^{\alpha(1-2/p)}\,,
\end{equation*}
which we note converges to $0$ as $\Delta t \to 0$ since $p > 2$, $0 < \alpha<1$ and therefore this is not the dominant term. On the other hand, the second integral is equal to
\begin{equation*}
  \int \limits_0^{t_i^{-1/p}} \int \limits_{\Delta t^{-\alpha/p}}^{\Delta t^{-1/p}} \left( \frac{1}{t_i^{-1}} + \frac{\theta_l^p}{\theta_k^p t_i^{-1}} \right) \dd \theta_k \dd \theta_l = O(\Delta t^{-1/p}) + O(1) \left( \int \limits_{\Delta t^{-\alpha/p}}^{\Delta t^{-1/p}} \theta^{-p}_k \dd \theta_k \right)\,,
\end{equation*}
which diverges as $\Delta t \to 0$ and therefore this is the dominant term. For $R_3$, 
\begin{equation*}
  I_3 \lesssim \int \limits_0^{\Delta t^{-\alpha/p}} \int \limits_0^{\Delta t^{-\alpha/p}}  \frac{(\theta_k^p + \theta_l^p) }{\max(t_i^{-1}, \theta_k^p) \max(t_i^{-1}, \theta_l^p)} \dd \theta_l \dd \theta_k  \,,
\end{equation*}
can be seen to converge to zero as $\Delta t \to 0$, following similar
reasoning to $I_2$, and is therefore not the dominant term. In
conclusion, we get that the contribution from
$(k,l) \in \mathcal{N}_3$ (after summing over $t_i$) is
\begin{equation*}
  \Delta t^{1-1/p} = \Delta t^{H+1/2}\,,
\end{equation*}
and it seems that we can take $\alpha$ as close to zero as we want.

In the fourth and final case, we consider
\begin{equation*}
  \mathcal{N}_4 = \{ (k,l) \in N^2 : C'' \Delta t^{-1} \leq \theta_k^p + \theta_l^p\}\,,
\end{equation*}
where we bound the summand in \cref{eq:J0-G-sum},
\begin{equation*}
  G(\theta_k^p, \theta_l^p, \Delta t) \leq m(\theta_k^p) m(\theta_l^p)\,,
\end{equation*}
for $(k,l) \in \mathcal{N}_4$. Now estimating the sum by the corresponding integral yields,
\begin{equation*}
  \sum_{(k,l) \in \mathcal{N}_4} \frac{\Delta \theta_k \Delta \theta_l}{\max(t_i^{-1}, \theta_k^p) \max(t_i^{-1}, \theta_l^p)} \lesssim \iint \limits_{\substack{\theta_k^p + \theta_l^p \geq C''\Delta t^{-1}\\ \theta_j \geq 0}} \frac{\dd \theta_k \dd \theta_l}{\max(t_i^{-1}, \theta_k^{p}) \max(t_i^{-1}, \theta_l^{p})} \,,
\end{equation*}
and we again consider three integrals, $I_1$, $I_2$ and $I_3$, now corresponding to the regions
\begin{align*}
  R'_1 &\defeq \{ \theta_j^p \geq \Delta t^{-1} \quad \text{for } j=k,l\}\,,\\
  R'_2 &\defeq \{ \theta_k^p \geq \Delta t^{-1} \quad \text{and} \quad \theta_l^p \leq \Delta t^{-1}\}\,,\\
  R'_3 &\defeq \{\theta_k^p + \theta_l^p \geq \Delta t^{-1} \quad \text{and} \quad \theta_j^p \leq \Delta t^{-1} \quad \text{for } j=k,l\}\,,
\end{align*}
respectively. Then,
\begin{align*}
  I_1 &= \iint \limits_{R'_1} \theta_{k}^{-p} \theta_l^{-p}\dd \theta_k \dd \theta_l
  = \left(\;\int \limits_{\Delta t^{-1/p}}^{\infty} \theta_k^{-p} \dd \theta_k \right)^2 = \left( - \left. \frac{\theta^{-p+1}}{-p+1} \right|_{\Delta t^{-1/p}}^{\infty} \right)^2 = O( \Delta t^{2(1-1/p)})\,,
\end{align*}
\begin{align*}
  I_2 &= 2 \iint \limits_{R'_2} \frac{\dd \theta_k \dd \theta_l}{\theta_k^p \max(t_i^{-1}, \theta_l^{p})} =
       O(\Delta t^{1 - 1/p}) \underbrace{\left(t_i t_i^{-1/p} + \cdots\right)}_{O(1)} = O(\Delta t^{1 - 1/p})\,,
\end{align*}
and
\begin{align*}
  I_3 &\simeq \iint \limits_{R'_3} \frac{\dd \theta_k \dd \theta_l}{\max(\theta_k^p, t_i^{-1}) \max(\theta_l^p, t_i^{-1})} 
  \lesssim \underbrace{ \int \limits_0^{\Delta t^{-1/p}} \frac{\dd \theta_l}{\max(\theta_l^p, t_i^{-1})}}_{O(1)} \;
  \underbrace{ \int \limits_{\frac{1}{2} \Delta t^{-1/p}}^{\Delta t^{-1/p}} \frac{\dd \theta_k}{\max(\theta_k^p, t_i^{-1})}}_{O(\Delta t^{(p-1)/p})} = O(\Delta t^{1-1/p})\,.
\end{align*}
Then taking the three integrals into account,
\begin{equation*}
  \sum_{(k,l) \in \mathcal{N}_4} G(\theta_k^p, \theta_l^p, \Delta t) \Delta t \Delta \theta_k \Delta \theta_l = O(\Delta t^{1-1/p}) \Delta t\,,
\end{equation*}
and we observe that the estimate in this region does not depend on $\alpha$ and therefore, recalling $p = 2/(1-2H)$, we obtain $\Delta t^{H+1/2}$ as the rate (after summing over $t_i$).

Altogether, our
full estimate for $J_{0}$ is then
\begin{equation}
  \label{eq:est-J0-full}
  |J_{0}| \lesssim C(H, Q, \alpha_{\max}, t_i) \Delta t^{H+3/2} 
  \lesssim O(\Delta t^{H+3/2}) \,,
\end{equation}
for a constant $C$ independent of $\Delta t$. In the next section, we consider estimates for the terms
$J_{1} $ and $\widetilde{J}_{1}$ in
\cref{eq:J-splitting,eq:Jtilde-splitting}. Inspired by the observation
that we obtain the same rate in \cref{eq:J0-first-term-full-estimate} as in
\cref{eq:est-J0-naive} with only the constant changing, we first work
directly with the fBm view to easily ascertain the rate. However, to
obtain the full estimate, we proceed as in this section by utilizing
the structure of the affine approximation to make a second round of
Taylor expansions and subsequently observing that the coefficients are
controlled.

\subsection{Estimate for general payoffs: $J_{1} $ and $\widetilde{J}_{1}$ are also
  $O(\Delta t^{H+3/2})$}

For the term $J_{1,0}$ in \cref{eq:J-splitting},
\begin{equation*}
  J_{1,0} = - \E \int_{t_i}^{t_{i+1}} D_1 \nu(\bar{X}_{t_i}, \bY_{t_i}; s) \mathcal{S}(\bY_{t_i}) \E\bigl[ \DeltaSYSY W_{t_i,s} \given \mathcal{F}_{t_i} \bigr]\,,
\end{equation*}
we first determine the order in $\Delta t$ which arises from the
conditional expectation. Working directly with the increment,
\begin{equation*}
  \DeltaSYSY = \widehat{W}^H_{t_i, s}(\widehat{W}^{H}_s + \widehat{W}^{H}_{t_i})
  = \widehat{W}^{H}_{t_i,s} (\widehat{W}^{H}_{t_i,s} + 2\widehat{W}^{H}_{t_i})\,,
\end{equation*}
we take the fBm view and express
\begin{equation*}
  \widehat{W}^{H}_{t_i,s} \sim W^{H}_{t_i,s}
  = \underbrace{\int_{0}^{t_i} K(s-r)
    - K(t_i-r) \dd{W_r}}_{\defeq V^H_{t_i}(s)}
  + \int_{t_i}^{s} K(s-r) \dd{W}_r \,,
\end{equation*}
using the power law kernel, i.e.~$K(r) = \sqrt{2H} r^{H-1/2}$, (inverse
discrete Laplace transform). We observe
\begin{equation*}
  \E[( W^H_{t_i,s})^2 W_{t_i,s} \given
  \mathcal{F}_{t_i}] = 2 V^H_{t_i}(s) \int_{t_i}^{s} K(s-r) \dd{r}\,, 
\end{equation*}
\begin{equation*}
  \E[W^H_{t_i,s} W_{t_i,s} \given
  \mathcal{F}_{t_i}] = \int_{t_i}^{s} K(s-r) \dd{r}\,, 
\end{equation*}
and
\begin{equation*}
  \bigl| \E[W^H_{t_i} V^H_{t_i}(s)] \bigr|
  = \left| \int_{0}^{t_i}K(t_i - r)K(s-r) - K^2(t_i-r) \dd{r} \right|
  \leq \left| \int_{0}^{t_i} K^2(t_i-r)\dd{r} \right| \sim t_i^{2H} \leq 1\,. 
\end{equation*}
Then,
\begin{align}
  J_{1,0} &= - \E \int_{t_i}^{t_{i+1}}  D_1 \nu(\bar{X}_{t_i}, \bY_{t_i}; s)
            \widehat{W}^H_{t_i} \E\bigl[
            \widehat{W}^{H}_{t_i,s} (\widehat{W}^{H}_{t_i,s}
            + 2\widehat{W}^{H}_{t_i}) W_{t_i,s} \given \mathcal{F}_{t_i} \bigr]
            \dd{s} \notag\\ %
          &\cong - \E \int_{t_i}^{t_{i+1}}  D_1 \nu(\bar{X}_{t_i}, \bY_{t_i}; s)
            W^H_{t_i}
            \bigl(\E[(W^H_{t_i,s})^2 W_{t_i,s} \given
            \mathcal{F}_{t_i}] + 2 W^H_{t_i}
            \E[W^H_{t_i,s} W_{t_i,s} \given
            \mathcal{F}_{t_i}]\bigr)\dd{s} \notag\\
          &\cong - 2 \E \int_{t_i}^{t_{i+1}}  D_1 \nu(\bar{X}_{t_i}, \bY_{t_i}; s)
            W^H_{t_i} (V^H_{t_i}(s)+W^H_{t_i})
            \underbrace{\int_{t_i}^{s}K(s-r) \dd{r}}_{\propto (s-t_i)^{H+1/2}}
            \dd{s}\,, \label{eq:J10-naive}
\end{align}
which suggests $J_{1,0} = O(\Delta t^{H+3/2})$ provided the
coefficient is controlled. The order of $J_{1,0}$ in $\Delta t$ is
asymptotically exact in \cref{eq:J10-naive}, i.e.~there is only
estimation of the constant.

In general, $D_1\nu$ is random and depends on $\bY_{t_i}$. This
changes the coefficient in the estimate but not the order in
$\Delta t$, as was observed by comparing \cref{eq:est-J0-full} to
\cref{eq:est-J0-naive}. Following \cref{sec:estimate-J0}, we proceed
to estimate the coefficient by making a second round of Taylor
expansions by introducing an auxiliary function $f$. In this spirit,
we start directly from \cref{eq:J10-naive}, noting that
\begin{align*}
  V^H_{t_i}(s) \sim \widehat{V}^H_{t_i}(s)
  \defeq c_H \sum_{l=1}^{N_L} Y^l_{t_i}
  (e^{-\theta^p_l(s-t_i)}-1) \Delta\theta_l  \,,
\end{align*}
since
\begin{align*}
  \int_{0}^{t_i} K(s-r) \dd{W_r}
  &= \int_{0}^{t_i} \sqrt{2H} (s-r)^{H-1/2} \dd{W_r} \\
  &= \int_{0}^{t_i}\frac{\sqrt{2H}}{\Gamma(\tfrac{1}{2}-H)}
    \int_{0}^{\infty}\theta^{-(H+1/2)} e^{-\theta(s-r)}\dd{\theta}\dd{W_r}\\
  &= \widetilde{c}_H \int_{0}^{\infty}\theta^{-(H+1/2)}
    e^{-\theta(s-t_i)} \int_{0}^{t_i}
    e^{-\theta(t_i -r)} \dd{W_r} \dd{\theta} \\
  &= \widetilde{c}_H \int_{0}^{\infty}\theta^{-(H+1/2)}
    e^{-\theta(s-t_i)} \widetilde{Y}_{t_i} \dd{\theta}\\
  &\approx c_H \sum_{l=1}^{N_L} Y^l_{t_i}
    e^{-\theta^p_l(s-t_i)} \Delta\theta_l\,.
\end{align*}
Rewriting $J_{1,0}$ directly in terms of the extended variables, we
obtain
\begin{align*}
  J_{1,0}
  &= - 2 \E \int_{t_i}^{t_{i+1}} D_1 \nu(\bar{X}_{t_i}, \bY_{t_i}\,; s)
    \widehat{W}^H_{t_i} (\widehat{V}^H_{t_i} + \widehat{W}^H_{t_i}) g(s) \dd{s}
    \notag\\
  &= -2c_H^2 \sum_{k,l=1}^{N_L} \int_{t_i}^{t_{i+1}} \E\bigl[ D_1
    \nu(\bar{X}_{t_i}, \bY_{t_i}\,; s) Y^k_{t_i} Y^l_{t_i} \bigr]
    e^{-\theta_l^p(s-t_i)} g(s) \dd{s} \Delta \theta_k \Delta \theta_l \,,
    \label{eq:J10-evars}
\end{align*}
where
\begin{equation*}
  g(s) \defeq \int_{t_i}^{s}K(s-r)\dd{r}  \propto (s-t_i)^{H+1/2} \,.
\end{equation*}

Recalling that
$\nu(\bar{\bZ}_{t_i}, s) = \nu(\bar{X}_{t_i}, \bY_{t_i}\,; s) $ is a
deterministic function of $\bar{\bZ}_{t_i}$, we further write
\begin{equation*}
  \E[ D_1\nu(\bar{X}_{t_i}, \bY_{t_i}\,; s) Y^k_{t_i} Y^l_{t_i}]
  = \E\bigl[ \E[ D_1\nu(\bar{X}_{t_i}, \bY_{t_i}\,; s)
  \given Y^k_{t_i}, Y^l_{t_i}]
  Y^k_{t_i}, Y^l_{t_i} \bigr]\,,
\end{equation*}
where we define a new auxiliary function (here $\varphi$ is already
higher order compared to \cref{eq:J0-f-kl})
\begin{equation*}
  \label{eq:J10-f-kl}
  f^{kl}_s(Y^k_{t_i}, Y^l_{t_i})
  \defeq \E\bigl[ D_1 \nu(\bar{X}_{t_i}, \bY_{t_i}\,; s)
  \given Y^k_{t_i}, Y^l_{t_i} \bigr]\,.
\end{equation*}
We Taylor expand $f^{kl}$ at zero, yielding
\begin{equation*}
  \label{eq:J10}
  \begin{split}
    J_{1,0} &= -2 c_H^2\Biggl( \sum_{k,l=1}^{N_L} \E[Y^k_{t_i} Y^l_{t_i}]  \int_{t_i}^{t_{i+1}} f^{kl}_s(\boldsymbol{0}) e^{-\theta^p_l(s-t_i)}g(s) \dd{s} \Delta\theta_k \Delta\theta_l\\
    &\quad + \sum_{\alpha \in \mathcal{J}_{\alpha_{\max}}} \frac{1}{|\alpha|!}
    \sum_{k,l=1}^{N_L} \E\bigl[ (Y^k_{t_i})^{\alpha_1+1}
    (Y^l_{t_i})^{\alpha_2+1} \bigr]
    \int_{t_i}^{t_{i+1}} \partial^\alpha_{kl}f^{kl}_s(\boldsymbol{0}) e^{-\theta^p_l(s-t_i)}g(s)\dd{s} \Delta\theta_k \Delta\theta_l\\
    &\quad + \sum_{k,l=1}^{N_L}
    \E\bigl[R_{\alpha_{\max}}(Y^k_{t_i},Y^l_{t_i})Y^k_{t_i}Y^l_{t_i}\bigr]\int_{t_i}^{t_{i+1}}e^{-\theta^p_l(s-t_i)}g(s)
    \dd{s} \Delta\theta_k \Delta\theta_l \Biggr)
  \end{split}
\end{equation*}
where
$\mathcal{J}_{\alpha_{\max}} = \{\alpha=(\alpha_1, \alpha_2) \,:\, 1 \leq |\alpha| <
\alpha_{\max}\}$ and the remainder $R_{\alpha_{\max}}$ is given in the Lagrange form as in \cref{eq:aux-remainder}. Since $f^{kl}$ is bounded by $\varphi$ (see
\cref{lem:bound-f-by-phi} in \cref{sec:bound-f-varphi}), we write the
bound in terms of $Q$, as in \cref{eq:assume-f-J0}, to obtain
\begin{equation*}
  \label{eq:est-J10-full}
  |J_{1,0}| \lesssim C(H, Q, \alpha_{\max}) t_i^{1-2/p} \Delta t^{H+3/2} 
  \lesssim O(\Delta t^{H+3/2}) \,,
\end{equation*}
where we use Isserlis' theorem and the explicit form of the
covariances, e.g.,
\begin{align*}
  \sum_{k,l=1}^{N_L}\E[Y^k_{t_i}Y^l_{t_i}] \Delta\theta_k\Delta\theta_l
  &= \sum_{k,l=1}^{N_L} \frac{\Delta\theta_k \Delta\theta_l}{\theta^p_k+\theta^p_l}
    (1-e^{-(\theta^p_k+\theta^p_l)t_i})\\
  &\leq \frac{2\pi}{4} \int_{0}^{\infty} \theta^{1-p} (1-e^{-\theta^p t_i}) \dd{\theta}\\
  &\leq \frac{2\pi}{4} \frac{1}{p-1} \Gamma(\tfrac{2}{p}) t_i^{1-2/p}\,,
\end{align*}
to observe that the sums over $\theta_k$ and $\theta_l$ converge
independently of $L, N_L$.

The term $J_{1,1}$ includes increments of OU processes arising from the
extended variables. Taking the fBm view using the power law kernel,
\begin{subequations}
  \begin{align}
    J_{1,1}
    &= - \sum_{j=1}^{N_L} \E \int_{t_i}^{t_{i+1}}
      D_{j+1} \nu(\bar{X}_{t_i}, \bY_{t_i}; s)
      \E\bigl[ \widehat{W}^H_{t_i,s} (\widehat{W}^H_{t_i,s}
      + 2 \widehat{W}^H_{t_i}) Y^j_{t_i,s} \
      \given \mathcal{F}_{t_i} \bigr] \dd{s} \notag \\
    &\cong - \sum_{j=1}^{N_L} \E \int_{t_i}^{t_{i+1}}
      D_{j+1} \nu(\bar{X}_{t_i}, \bY_{t_i}; s)
      \bigl((V^H_{t_i}(s))^2 + 2 W^H_{t_i} V^H_{t_i}(s)\bigr)
      Y^j_{t_i} (e^{-\theta^p_j(s-t_i)}-1)
      \dd{s} \label{eq:J11-naive:rate2}\\
    & \quad - \sum_{j=1}^{N_L}\E \int_{t_i}^{t_{i+1}}
      D_{j+1} \nu(\bar{X}_{t_i}, \bY_{t_i}; s)
      (2V^H_{t_i}(s) + 2W^H_{t_i})
      \int_{t_i}^{s}e^{-\theta^p_j (s-r)}K(s-r) \dd{r} \dd{s} 
      \label{eq:J11-naive:rateHp3half}\\
    & \quad - \sum_{j=1}^{N_L}\E \int_{t_i}^{t_{i+1}}
      D_{j+1} \nu(\bar{X}_{t_i}, \bY_{t_i}; s)
      Y^j_{t_i}(e^{-\theta^p_j(s-t_i)}-1) \int_{t_i}^{s}K^2(s-r)\dd{r}
      \dd{s} \,, \label{eq:J11-naive:rate2Hp2}
  \end{align}
\end{subequations}
since
\begin{equation*}
  \begin{split}
    \E[(W^H_{t_i,s})^2 Y^j_{t_i,s}\given \mathcal{F}_{t_i}] =
    Y^{j}_{t_i} (e^{-\theta^p_j(s-t_i)}-1)
    \int_{t_i}^{s}K^2(s-r)\dd{r}
    + 2V^H_{t_i}(s) \int_{t_i}^{s} K(s-r) e^{-\theta^p_j(s-r)}\dd{r}\\
    + (V^H_{t_i}(s))^2Y^j_{t_i} (e^{-\theta^p_j(s-t_i)}-1)
  \end{split}
\end{equation*}
and
\begin{equation*}
  \E[ W^H_{t_i,s} Y^j_{t_i,s} \given
  \mathcal{F}_{t_i}] =
  \int_{t_i}^{s}e^{-\theta^p_j(s-r)}
  K(s-r)\dd{r} +  V^H_{t_i}(s) Y^j_{t_i}
  (e^{-\theta^p_j(s-t_i)}-1)\,.
\end{equation*}
We examine the contributions to the rate in $\Delta t$ from each of
the terms
\cref{eq:J11-naive:rate2,eq:J11-naive:rateHp3half,eq:J11-naive:rate2Hp2};
after integrating over $s$ we find that \cref{eq:J11-naive:rate2}
yields $O(\Delta t^2)$, \cref{eq:J11-naive:rateHp3half} yields
$O(\Delta t^{H+3/2})$, and \cref{eq:J11-naive:rate2Hp2} yields
$O(\Delta t^{2H+2})$. Since
\cref{eq:J11-naive:rate2,eq:J11-naive:rate2Hp2} are higher order in
$\Delta t$ we examine only \cref{eq:J11-naive:rateHp3half}, where we
note
\begin{equation}
  \label{eq:J11-rateHp3half-key-term-est}
\begin{split}
  \int_{t_i}^{s}e^{\theta^p_j(s-t_i)}K(s-r)\dd{r}
  &= e^{-\theta^p_j s}\sqrt{2H} \int_{t_i}^{s}
    e^{\theta^p_j r} (s-r)^{H-1/2} \dd{r}\\
  &= \sqrt{2H} \theta^{-(H+1/2)p}_j \Bigl[
    \Gamma\bigl(H+\tfrac{1}{2}, (s-r)\theta^p_j\bigr)
    \Bigr]_{r=t_i}^{r=s}\\
  &= \sqrt{2H} \theta^{-(H+1/2)p}_j \Bigl[ e^{s-r}
    (s-r)^{H+1/2} \Gamma\bigl( H+1/2, \theta^p_j \bigr)
    \Bigr]_{r=t_i}^{r=s}\\
  &= - \sqrt{2H} \theta^{-(H+1/2)p}_j \Gamma\bigl( H+\tfrac{1}{2}, \theta^p_j \bigr)
    e^{s-t_i} (s-t_i)^{H+1/2}\
  \end{split}
\end{equation}
by a change of variable in the argument of the incomplete gamma
function where
\begin{equation*}
  q \defeq (H+1/2)p = \frac{2H+1}{1-2H} > 1\,, \quad H\in(0,1/2)\,.
\end{equation*}
This suggests $J_{1,1} = O(\Delta t^{H+3/2})$ since
the sum over $\theta_j$ converges (also helpful to note
$\Gamma\bigl( H+\tfrac{1}{2}, \theta^p_j \bigr) \to 0$ as
$\theta_j \to \infty$).

For the full estimate of $J_{1,1}$, we note
\begin{equation*}
  D_{j+1} \nu(\bar{X}_{t_i} \bY_{t_i}\,; s)
  = c_H \Delta \theta_j \E\bigl[ \varphi^{(3)}(\widehat{X}_T) M^j_{s,T}
  \given (\widehat{X}_s,\bY_s) = (\bar{X}_{t_i}, \bY_{t_i}) \bigr]\,,
\end{equation*}
where we assume that $M^j_{s,T} < \infty$ (see
\cref{eq:increment-value-func-condexp} for definition of $M$).
Omitting the higher order terms in $\Delta t$ in
\cref{eq:J11-naive:rate2,eq:J11-naive:rate2Hp2}, we return directly to
\cref{eq:J11-naive:rateHp3half},
\begin{equation*}
  J_{1,1} = - 2 c_H  \sum_{j=1}^{N_L} \E \int_{t_i}^{t_{i+1}}
  \E\bigl[ \varphi^{(3)}(\widehat{X}_T) M^j_{s,T}
  \given (\widehat{X}_s,\bY_s) = (\bar{X}_{t_i}, \bY_{t_i}) \bigr]
  (\widehat{V}^H_{t_i}(s)
  + \widehat{W}^H_{t_i}) g(s) h(\theta_j) \dd{s}  \Delta \theta_j \,,
\end{equation*}
where we let
\begin{equation*}
  g(s) \defeq - \sqrt{2H} e^{s-t_i} (s-t_i)^{H+1/2} 
\end{equation*}
and
\begin{equation*}
  h(\theta_j) \defeq \theta^{-q}_j
  \Gamma \bigl( H+\tfrac{1}{2}, \theta^p_j\bigr) \,.
\end{equation*}
We define a new auxiliary function
\begin{equation*}
  \label{eq:J11-f-k}
  f^k_s(Y^k_{t_i}) \defeq \E\bigl[
  \E[\varphi^{(3)}(\widehat{X}_T) M^j_{s,T} \given
  (\widehat{X}_s,\bY_s) = (\bar{X}_{t_i}, \bY_{t_i})]
  \given Y^k_{t_i} \bigr]\,, 
\end{equation*}
where, although the inner expectation depends implicitly on $j$, here
the index $k$ refers to the general component $Y^k_{t_i}$ that we are
conditioning against. Expanding $f^k$ at zero, we find
\begin{equation*}
  \begin{split}
    J_{1,1} &= - 2 c_H^2 \sum_{j,k=1}^{N_L} h(\theta_j)
    \int_{t_i}^{t_{i+1}} \E\bigl[ f^k_s(Y^k_{t_i}) Y^k_{t_i} \bigr]
    e^{-\theta^p_k (s-t_i)}
    g(s) \dd{s} \Delta\theta_j \Delta\theta_k\\
    &= - 2 c_H^2 \sum_{\alpha_1 = 1}^{\alpha_{\max} - 1}
    \frac{1}{\alpha_1 !} \sum_{j,k=1}^{N_L} h(\theta_j) \E\bigl[
    (Y^k_{t_i})^{\alpha_1+1} \bigr] e^{-\theta^p_k(s-t_i)}
    \int_{t_i}^{t_{i+1}}\partial^{\alpha_1}_k
    f^k_{s}(\boldsymbol{0}) g(s) \dd{s} \Delta\theta_j \Delta\theta_k\\
    &\quad - 2c_H^2 \sum_{j,k=1}^{N_L} h(\theta_j) \E\bigl[
    R_{\alpha_{\max}}(Y^k_{t_i}) Y^k_{t_i} \bigr]
    e^{-\theta^p_k(s-t_i)} \int_{t_i}^{t_{i+1}} g(s)\dd{s}
    \Delta\theta_j \Delta\theta_k \,.
  \end{split}
\end{equation*}
Following from the boundedness of $f$ and its derivatives (as in
\cref{eq:assume-f-J0}), we obtain the full estimate for $J_{1,1}$,
\begin{equation*}
  \label{eq:est-J11-full}
  |J_{1,1}| \lesssim C(H, Q, \alpha_{\max})  t_i^{1-2/p}  \Delta t^{H+3/2}
  \lesssim O(\Delta t^{H+3/2})\,,
\end{equation*}
where we use Isserlis' theorem and the representation of the
covariance for the extended variables to determine the coefficient
depending on $t_i$ (which we note is also summable over $i$).

The estimation of $\widetilde{J}_{1,0}$ and $\widetilde{J}_{1,1}$ follows the
program above. For $\widetilde{J}_{1,0}$ we begin by taking the fBm view
with the kernel $K$,
\begin{equation*}
  \begin{split}
    |\widetilde{J}_{1,0}| &= \Bigl| \E \int_{t_i}^{t_{i+1}}
    D_1\widetilde{\nu}(\bar{X}_{t_i}, \bY_{t_i} \,; s) W^H_{t_i}
    \E\bigl[ W^H_{t_i,s} W_{t_i,s} \given \mathcal{F}_{t_i} \bigr]\dd{s} \Bigr|\\
    &= c_H \Bigl| \sum_{k=1}^{N_L}\int_{t_i}^{t_{i+1}}
    D_1\widetilde{\nu}(\bar{X}_{t_i}, \bY_{t_i} \,; s) W^H_{t_i}
    \int_{t_i}^{s}\underbrace{K(s-r) \dd{r}}_{(s-t_i)^{H+1/2}} \dd{s} \Bigr|\\
  \end{split}
\end{equation*}
which suggests $\widetilde{J}_{1,0} = O(\Delta t^{H+3/2})$. We recall that
\begin{equation*}
  D_1 \widetilde{\nu}(\bar{\bZ}_{t_i}, s)
  = \E \bigl[ \varphi^{(3)}(\widehat{X}_T)
  \textstyle \bigl( c_H \sum_l M^l_{s,T} \Delta\theta_l \bigr)
  \given \bZ_s = \bar{\bZ}_{t_i}\bigr]\,,
\end{equation*}
where we assume that
\begin{equation*}
  c_H \sum_{l=1}^{N_L} M^l_{s,T} \Delta\theta_l < \infty \,,
\end{equation*}
and then define an auxiliary function
\begin{equation*}
  f^k_s(Y^k_{t_i}) \defeq \E\Bigl[ \E \bigl[ \varphi^{(3)}(\widehat{X}_T)
  \textstyle \bigl( c_H \sum_l M^l_{s,T} \Delta\theta_l \bigr)
  \given (\widehat{X}_s, \bY_s) = (\bar{X}_{t_i}, \bY_{t_i}) \bigr]
  \given Y^k_{t_i}\Bigr]\,.
\end{equation*}
Finally, Taylor expanding $f^k_s$ at zero we encounter terms similar
to those estimated previously yielding the full estimate,
\begin{equation*}
  \label{eq:est-Jtilde10-full}
  |\widetilde{J}_{1,0}| \lesssim O(\Delta t^{H+3/2})\,.
\end{equation*}

Likewise for $\widetilde{J}_{1,1}$, taking the fBm view with the
conditional expectation term
$\E[ \widehat{W}^H_{t_i,s} Y^j_{t_i,s} \given \mathcal{F}_{t_i}]$ yields
\begin{equation*}
  |\widetilde{J}_{1,1}| = \Bigl| \sum_{j=1}^{N_L} \E \int_{t_i}^{t_{i+1}}
  D_{j+1} \widetilde{\nu} (\bar{X}_{t_i}, \bY_{t_i}; s)
  \int_{t_i}^{s}e^{-\theta^p_j(s-r)}K(s-r)\dd{r} \dd{s}  \Bigr| \,,
\end{equation*}
where the estimate \cref{eq:J11-rateHp3half-key-term-est} (encountered
in $J_{1,1}$) suggest the rate $J_{1,1} = O(\Delta t^{H+3/2})$. For
the full estimate, we recall that
\begin{equation*}
  D_{j+1} \widetilde{\nu}(\bar{\bZ}_{t_i}, s)
  = \E \bigl[ \varphi^{(3)} (\widehat{X}_T)
  \textstyle \bigl( c_H \sum_l M^l_{s,T} \Delta\theta_l \bigr)^2
  \given \bZ_s = \bar{\bZ}_{t_i}\bigr] 
\end{equation*}
and expand in a second round of Taylor expansions for an appropriate
auxiliary function thereby obtaining
\begin{equation*}
  \label{eq:est-Jtilde11-full}
  | \widetilde{J}_{1,1} | \lesssim O(\Delta t^{H+3/2})\,.
\end{equation*}

Taken together, the estimates
\cref{eq:est-J10-full,eq:est-J11-full,eq:est-Jtilde10-full,eq:est-Jtilde11-full}
imply that the terms corresponding to first order derivatives of $\nu$
and $\widetilde{\nu}$ in \cref{eq:J-splitting,eq:Jtilde-splitting},
respectively, yields
\begin{equation*}
  \label{eq:est-summary-J1}
  |J_1| + |\widetilde{J}_1| = O(\Delta t^{H+3/2})\,,
\end{equation*}
using the notation in \cref{eq:J-Jtilde-notation}.

\subsection{Estimate  for general payoffs: remaining terms are higher order}

Additional terms \cref{eq:J-splitting,eq:Jtilde-splitting} appearing
in the expansion are higher order than ${H+3/2}$. For example, We find
$J_{2,0} = O(\Delta t^{2+2H})$ which can be seen by once again taking
the fractional view,
\begin{align*}
  \E\bigl[\DeltaSYSY  (W_{t_i,s})^2 \given \mathcal{F}_{t_i}\bigr]
  &= \E\bigl[\widehat{W}^H_{t_i,s}(\widehat{W}^H_{t_i,s} + 2\widehat{W}^H_{t_i}) (W_{t_i,s})^2\bigr]\\
  &= \E\bigl[ (\widehat{W}^{H}_{t_i,s})^2 (W_{t_i,s})^2 \bigr]
    + 2 \widehat{W}^H_{t_i}
    \underbrace{\E\bigl[\widehat{W}^H_{t_i,s} (W_{t_i,s})^2 \bigr]}_{\text{Isserlis'} \implies 0}\\
  &= \E[(\widehat{W}^H_{t_i,s})^2] \E[(W_{t_i,s})^2]
    + 2 (\E[\widehat{W}^H_{t_i,s} W_{t_i,s}])^2 \\
  &= \Delta t^{2H}\Delta t -  2(\Delta t^{H+1/2})^2 \eqdef g(\Delta t) \,,
\end{align*}
where $g (s) \propto (s-t_i)^{2H+1}$ and then expanding in a second
round of Taylor expansions for a suitable auxiliary function and
checking the control of the coefficient with respect to $\theta$,
\begin{equation*}
  |J_{2,0}| \lesssim  \frac{1}{2} \Bigl| \E \Bigl[  (W^H_{t_i})^2 
  \int_{0}^{\Delta t} D_{11} \nu(\bar{\bZ}_{t_i}, s+t_i) s^{2H+1} \dd{s}\Bigr] \Bigr|
  = O(\Delta t^{2+2H})\,.
\end{equation*}
The term $\widetilde{J}_{2,0}$ vanishes,
\begin{equation*}
  |\widetilde{J}_{2,0}| = \Bigl| \frac{1}{2} \E \Bigl[ \int_{t_i}^{t_{i+1}}D_{11}\widetilde{\nu}(\bar{\bZ}_{t_i}, s) (\widehat{W}^H_{t_i})^2
  \underbrace{\E \bigl[ \widehat{W}^H_{t_i,s} (W_{t_i,s})^2
    \bigr]}_{\text{Isserlis'} \implies 0} \dd{s} \Bigr] \Bigr| = 0 \,,
\end{equation*}
by Isserlis' theorem (although one could argue that the integrand is
formally contributing rate $H+2$ here for $\widetilde{J}_{2,0}$). The
terms involving increments of the OU processes yield similar results
(constants obtained would be better behaved owing to the additional
decay).

\subsection{Closure argument to finish proof of
  \cref{thm:weak-rate-gen}}

Thus far we have estimated the first few terms
\cref{eq:J-splitting,eq:Jtilde-splitting} arising from the Taylor
expansion in powers of $\Delta t$. Returning to the telescoping sum
\cref{eq:telescoping-sum-value-fcn}, we summarize our estimates up to
order $\kappa$,
\begin{align}
  \Err(T, \Delta t)
  &= \E\bigl[ \varphi(\widehat{X}_T) - \varphi(\bar{X}_{T}) \bigr] \notag\\
  &= \sum_{i=0}^{n-1}\left( \sum_{k=0}^{\kappa - 1}\left(  J_{k} + \widetilde{J}_{k} \right) + \mathcal{R}_\kappa(\nu) + \mathcal{R}_{\kappa} (\widetilde{\nu}) \right)\notag\\
  &\leq \sum_{i=0}^{n-1}\left( C_1(t_i)
    \Delta t^{H+3/2} + C_0(t_i) \Delta t^2 + O(\Delta t^{H+2}) + \mathcal{R}_\kappa(\nu) + \mathcal{R}_{\kappa} (\widetilde{\nu}) 
    \right) \,, \label{eq:close-expan-general}
\end{align}
using the notation in \cref{eq:J-Jtilde-notation}. Here the constants
$C_k$, for $k=0,\dots,\kappa-1$, depending on $t_i$, $H$, and $Q$, are
the coefficients that appear in the estimates
\cref{eq:est-J0-full,eq:est-J10-full,eq:est-J11-full}, etc., for
$J_{\cdot}$ and likewise for $\widetilde{J}_{\cdot}$. Importantly, each of
these constants was obtained independently of $N_L$, $L$, that is,
independently of the choice of parameter $\theta$ arising in the
Markovian extended variable state space formulation. Moreover, each
constant was summable in $t_i$. Interchanging the order of summation
in \cref{eq:close-expan-general} we obtain,
\begin{equation}
  \label{eq:close-remainder}
  \begin{split}
  \Err(T,\Delta t,\varphi)
  &\lesssim C_1' \Delta t^{H+1/2} + C_0' \Delta t
  + O(\Delta t^{H+1}) \\
  &\quad - \sum_{i=1}^{n-1} \E \int_{t_i}^{t_{i+1}} \Bigl(  \E\bigl[ \DeltaSYSY
    \mathcal{R}_\kappa(\nu) \given \mathcal{F}_{t_i} \bigr]
    + \E\bigl[ \DeltaSY
    \mathcal{R}_\kappa(\widetilde{\nu}) \given \mathcal{F}_{t_i} \bigr] \Bigr)\dd{s}\,,
  \end{split}
\end{equation}
with new constants
\begin{equation*}
  C_j' = \sum_{i=1}^{n-1} C_j(t_i) \Delta t\,. 
\end{equation*}
 
The remainders have a integral form \cref{eq:taylor-nu-remainder}, and
thus the conditional expectations in \cref{eq:close-remainder} are,
\begin{equation*}
  \begin{split}
    \E\bigl[ \DeltaSYSY \mathcal{R}_\kappa(\nu) \given
    \mathcal{F}_{t_i} \bigr] &= \frac{1}{\kappa !}
    \sum_{|\beta|=\kappa} (\widehat{W}^H_{t_i})^{\beta_1} \E\left[
    (\widehat{W}^H_{t_i,s})^2 (W_{t_i,s})^{\beta_1}
    (\bY_{t_i,s})^{\hat{\beta}} \int_0^1 D^\beta \nu (\bxi_\tau, s)
    \dd{\tau} \right]
  \end{split}
\end{equation*}
and
\begin{equation*}
  \begin{split}
    \E\bigl[ \DeltaSY \mathcal{R}_\kappa(\widetilde{\nu}) \given
    \mathcal{F}_{t_i} \bigr] &= \frac{1}{\kappa!}
    \sum_{|\beta|=\kappa} (\widehat{W}^H_{t_i})^{\beta_1}
    \E\left[\widehat{W}^H_{t_i,s} (W_{t_i,s})^{\beta_1}
    (\bY_{t_i,s})^{\hat{\beta}} \int_0^1 D^\beta \widetilde{\nu}
    (\bxi_\tau, s) \dd{\tau} \right]\,,
  \end{split}
\end{equation*}
where
\begin{equation*}
  \E\bigl[ \DeltaSYSY \mathcal{R}_\kappa(\nu) \given
  \mathcal{F}_{t_i} \bigr] \sim (\widehat{W}_{t_i,s}^H)^\kappa
  \qquad \text{and} \qquad
  \E\bigl[ \DeltaSY \mathcal{R}_\kappa(\widetilde{\nu}) \given
  \mathcal{F}_{t_i} \bigr] \sim (\widehat{W}_{t_i,s}^H)^\kappa\,.
\end{equation*}
We recall that 
\begin{equation*}
  \E | W^H_{t_i, t_{i+1}}|^\gamma
  \lesssim \Delta t^{\gamma H}\,,
\end{equation*}
by the Hölder continuity of the sample paths. Then by applying
Cauchy--Schwarz, we obtain in \cref{eq:close-remainder} that the
remainder terms yield $O(\Delta t^{\kappa H})$. Thus, $\kappa$ can be
chosen such that $\kappa > \frac{1}{H}$ yields a large but finite
expansion for general payoff functions $\varphi$. Then the error is is
given by
\begin{equation*}
  \Err(T, \Delta t,\varphi) \lesssim  C_1' \Delta t^{H+1/2} + C_0' \Delta t 
  +  O(\Delta t^{H+1})\,,
\end{equation*}
with weak error rate $H+1/2$ where all the coefficients are
controlled.

\begin{remark}[Kernel]
  In the proof of the main result \cref{thm:weak-rate-gen} and of
  \cref{thm:weak-rate-quad}, the specific form of $K$ in
  \cref{eq:kernel-explicit,eq:fBm} is not relevant and the spirit of
  the proof follows with any relevant $L^2$ kernel where the
  integrability conditions need to be checked.
\end{remark}

\section{Conclusions and outlook}

Rough stochastic volatility models are increasingly popular for option
pricing in quantitative finance. On the one hand,
the %
rough stochastic volatility overcomes empirical challenges to deliver
predictions consistent with observed market data. On the other hand,
the non-Markovian nature of
the fractional Brownian motion (fBm) driver is an impediment to both
theory and numerics. Despite the widespread use of
discretization-based simulation methods for option pricing under the
rough Bergomi model and the rough Stein--Stein model, few works have
studied the weak convergence rates that underpin this practice.  

For the rough Stein--Stein model, which treats the volatility as a
linear function of the driving fractional Brownian motion, we prove
that the weak convergence of the Euler scheme depends on the Hurst
parameter $H$ of the fBm driver and is weak rate $H+1/2$ for general
payoff functions (see \cref{thm:weak-rate-gen,thr:main}). Strong
numerical evidence is provided to support our theory. Our proof relies
on Taylor expansions for an extended variable system that is derived
from an affine Markovian approximation of the fBm drive. Our novel
approach also yields insights into unexpected behavior (that we
suspect has contributed to the consternation among experts regarding
the weak rate, as remarked in the footnote in
\cref{sec:introduction}). In particular, the expansions (see
\cref{thm:weak-rate-quad}) easily explain the better weak rate $1$
obtained for quadratic payoffs (see \cref{lem:weak-rate-quad-simple}).
This last point leads us to conjecture that the rate of convergence
for payoff functions well approximated by quadratic polynomials, as
seen from the law of the solution, may be hard to distinguish from
rate $1$ as illustrated in \cref{fig:shifted-cubic}. As stated in
\cref{rem:assumptions-main-theorem}, we do not doubt that the proof of
\cref{thm:weak-rate-gen} can be extended to nonlinear rough volatility
models such as the rough Bergomi model and this is the subject of
ongoing work.

\section{Grant information}

This work was supported by the KAUST Office of Sponsored Research
(OSR) under Award No.~OSR-2019-CRG8-4033 and the Alexander von Humboldt
Foundation. R.~Tempone is a member of the KAUST SRI Center for
Uncertainty Quantification in Computational Science and Engineering. A
portion of this work was carried out while E.~Hall was a Postdoctoral
Research Scientist in the Chair of Mathematics for Uncertainty
Quantification at RWTH Aachen University. C.~Bayer gratefully acknowledges
support by the German Research Council (DFG) via the Research Unit FOR 2402.

\section{Acknowledgements}

We are grateful to Andreas Neuenkirch for pointing out the simpler
proof in the case of quadratic payoffs presented in
\cref{sec:simple-proof-rate-one}.

\section{Declaration of interest statement}%
The authors report there are no competing interests to declare.

\printbibliography

\appendix

\section{Details of numerical implementation}
\label{sec:numerics}

We compute the weak error (here we assume $T=1$)
\begin{equation*}
  \bigl| \E[\varphi(X^{ref}_T) - \varphi(X^{\Delta t}_T)]\bigr| \,, 
\end{equation*}
for various payoff functions, $\varphi$, by the left-point scheme 
\begin{equation}
  \label{eq:numexp-forward-euler}
  \bar{X}_T(n) = \sum_{i=0}^{n} W^H_{t_{i}} (W_{t_{i+1}} - W_{t_i})\,,
\end{equation}
using a reference solution
$\bar{X}^{ref}_T \defeq \bar{X}_T(2^{12})$ and computations
$X^{\Delta t}_T = \bar{X}_T(n)$ for $\log_2n=\{6, \dots, 1\}$.
We sample paths $(W^H_{t_i}, W_{t_i})_{i\in[0 : n]}$ required for the
Monte Carlo approximation of \cref{eq:numexp-forward-euler} at points
of the reference mesh using the Cholesky decomposition method. For
each $\{H,T,n\}$ we first form the eigenvalue decomposition $(L,D)$ of
the full covariance matrix
\begin{equation*}
  \Sigma =
  \begin{pmatrix} \Sigma_{11} & \Sigma_{12}
    \\ \Sigma_{21} &\Sigma_{22}
  \end{pmatrix}\,,
\end{equation*}
with blocks $(\Sigma_{11})_{ij} = \cov(W^H_{t_i},W^H_{t_j})$,
$(\Sigma_{12})_{ij} = \cov(W^H_{t_i},W_{t_j})$,
$(\Sigma_{22})_{ij} = \cov(W_{t_i},W_{t_j})$. We refer to, e.g., Lemma
4.1 of \cite{BayerEtAl:2019st}, for the covariance function of the
Riemann-Liouville fBm and to \verb|pfq.m| from \cite{Huntley:2020rv}
to compute hypergeometric functions:
\begin{minted}[linenos=true,bgcolor=bg]{matlab}
t = T/n:T/n:T;
%
G = @(x) 2.0*H*( x.^(-gam)/(1.0-gam) + (gam*x.^(-1.0-gam)./(1.0-gam)) .* ...
    pfq([1.0, 1.0+gam], 3.0-gam, x.^(-1.0))/(2.0-gam) );
disp('Computing blocks S11 S12 S22')
[X,Y] = meshgrid(t,t);
Gmat = G((tril(Y./X)+tril(Y./X)') - eye(size(Y,1)).*diag(Y./X));
Gmat = Gmat.*~eye(size(Gmat)) + eye(size(Gmat)); %
S11 = ((tril(X)+tril(X)') - eye(size(X,1)).*diag(X)).^(2*H) .* Gmat;
S12 = sqrt(2*H)*(Y.^(H+0.5)-(Y-min(Y,X)).^(H+0.5))./(H+0.5);
S22 = min(X,Y);
disp('Finished blocks S11 S12 S22')
%
disp('Computing L D via eig')
[L,D] = eig([S11 S12; S12.' S22]);
\end{minted}
We then generate $M$ samples using the $LDL$ eigenvalue decomposition:
\begin{minted}[linenos=true,bgcolor=bg,escapeinside=!!]{matlab}
z = randn(2*n,M);
X = real(L*(D^0.5)*z);
WH = [zeros(1,M); X(1:n, :)];
W = [zeros(1,M); X(n+1:end, :)];
\end{minted}
and note that for $H=0.5$ the sample paths of $W^H$ generated by this
method is identical to $W$. One sample of $X^{ref}_T$ and
$X^{\Delta t}_T$ at the final time can then be computed by summing the
appropriate terms using the reference paths:
\begin{minted}[linenos=true,bgcolor=bg,escapeinside=!!]{matlab}
XTref(1,:) = sum(WH(1:end-1,:) .* diff(W(1:end,:)));
XTdt = nan(numDt,M);
for j=1:numDt
    XTdt(j,:) =  sum(WH(1:2^(j+gap-1):end-1,:) .* diff(W(1:2^(j+gap-1):end,:)));
end
\end{minted}
The code referenced above can be found at the git repository:\\
\url{https://bitbucket.org/datainformeduq/rbwc_code}.

\section{Bound auxiliary functions $f$ by $\varphi$}%
\label{sec:bound-f-varphi}%

\begin{lemma}
  \label{lem:bound-f-by-phi}
  Let $\varphi(x)$ be the payoff function, and define
  \begin{equation*}
    \begin{split}
      f_s(Y^{l_1}_{t_i}, \dots, Y^{l_k}_{t_i}) \defeq
      \E\bigl[\nu(\bar{\bZ}_{t_i}, s)\given Y^{l_1}_{t_i}, \dots,
      Y^{l_k}_{t_i} \bigr] = \E\bigl[
      \E\bigl[\varphi^{(m)}(\widehat{X}_T) \given \bZ_{s} =
      (\bar{X}_{t_i}, \bY_{t_i})\bigr] \given Y^{l_1}_{t_i}, \dots,
      Y^{l_k}_{t_i}\bigr]\,,
    \end{split}
  \end{equation*}
  for components
  $(Y^{l_1}_{t_i}, \dots, Y^{l_k}_{t_i}) \subset \bY_{t_i}$. Then for
  a multiindex $\alpha = (\alpha_1, \dots, \alpha_k)$,
  \begin{equation*}
    |\partial^\alpha f^{kl}_s(\boldsymbol{0})| \lesssim
    \sum_{j = 0}^{|\alpha|}\| \varphi^{(m+j)} \|_\infty \,.
  \end{equation*}
\end{lemma}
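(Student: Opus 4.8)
The plan is to exploit two facts: that $\widehat{X}_T$ is an \emph{affine} function of the value of the extended state at time $s$, with coefficients independent of $\mathcal F_{t_i}$, and that conditioning a Gaussian family on finitely many linear functionals is again a Gaussian shift. First I would record, exactly as in the proof of \cref{lem:fluxes}, that the process started at time $s$ from $(\bar X_{t_i},\bY_{t_i})$ satisfies
\begin{equation*}
  \widehat X_T = \bar X_{t_i}
  + c_H\sum_{l=1}^{N_L}\Delta\theta_l\, Y^l_{t_i}\, M^l_{s,T}
  + \Xi_s,\qquad
  \Xi_s \defeq c_H\sum_{l=1}^{N_L}\Delta\theta_l\!\int_s^T\!\!\int_s^r e^{-(r-u)\theta^p_l}\dd W_u\,\dd W_r,
\end{equation*}
with $M^l_{s,T}$ as in \cref{lem:fluxes}. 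Since $s\ge t_i$, the quantities $(M^l_{s,T})_l$ and $\Xi_s$ depend only on the increments of $W$ on $[s,T]$ and are therefore independent of $\mathcal F_{t_i}$; plugging this into the definitions of $\nu$ and $f_s$ and using the tower property yields
\begin{equation*}
  f_s(y_{l_1},\dots,y_{l_k})
  = \E\!\left[\varphi^{(m)}\!\Big(\bar X_{t_i}
  + c_H\textstyle\sum_l\Delta\theta_l Y^l_{t_i}M^l_{s,T}+\Xi_s\Big)
  \;\Big|\; Y^{l_1}_{t_i}=y_{l_1},\dots,Y^{l_k}_{t_i}=y_{l_k}\right].
\end{equation*}

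Next I would apply Gaussian conditioning. Because $\bZ_0=0$ and $\big((W_r)_{r\le t_i},(Y^{l_j}_{t_i})_j\big)$ is jointly centred Gaussian, conditionally on $\{Y^{l_j}_{t_i}=y_{l_j}\}_j$ the path $(W_r)_{r\le t_i}$ has the law of $m_y+\widetilde W$, where $m_y(r)=\sum_{j=1}^k y_{l_j}\,\psi_j(r)$ with deterministic $\psi_j$ (obtained from the $k\times k$ covariance matrix of the $Y^{l_j}_{t_i}$) and $\widetilde W$ is a centred Gaussian process independent of $y$; moreover $\widetilde W$, $(M^l_{s,T})_l$ and $\Xi_s$ are jointly independent. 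Substituting $W=m_y+\widetilde W$ on $[0,t_i]$ and using that $\bar X_{t_i}=\sum_{j<i}\widehat W^H_{t_j}\Delta W_{t_j}$ is a \emph{bilinear} functional of the path while each $Y^l_{t_i}$ is a \emph{linear} one, the argument of $\varphi^{(m)}$ becomes a polynomial of degree at most two in $y$,
\begin{equation*}
  \Lambda(y)=\Lambda_0+\sum_{j}\Lambda^{(1)}_j y_{l_j}
  +\sum_{j,j'}\Lambda^{(2)}_{jj'} y_{l_j}y_{l_{j'}},
\end{equation*}
in which $\Lambda^{(2)}_{jj'}$ is \emph{deterministic} (it is the $m_y\!\cdot\! m_y$ part of $\bar X_{t_i}$; the $Y^l_{t_i}M^l_{s,T}$ terms are only linear in $y$), $\Lambda^{(1)}_j$ is a centred Gaussian (a linear functional of $\widetilde W$ plus a $\Delta\theta_l$-weighted linear combination of the $M^l_{s,T}$), and $\Lambda_0$ is some quadratic functional of $(\widetilde W,(M^l_{s,T})_l,\Xi_s)$. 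Differentiating $f_s(y)=\E\big[\varphi^{(m)}(\Lambda(y))\big]$ under the integral sign, and using that all $y$-derivatives of $\Lambda$ of order $\ge 3$ vanish, Fa\`{a} di Bruno's formula gives
\begin{equation*}
  \partial^\alpha f_s(\boldsymbol 0)
  = \sum_{\pi}\E\Big[\varphi^{(m+|\pi|)}(\Lambda_0)\prod_{B\in\pi}\partial^B\Lambda(\boldsymbol 0)\Big],
\end{equation*}
the sum running over partitions $\pi$ of the index set of $\alpha$ into blocks of size $1$ or $2$, with $\partial^B\Lambda(\boldsymbol 0)$ equal to $\Lambda^{(1)}_j$ or $2\Lambda^{(2)}_{jj'}$ accordingly, so that $|\pi|\le|\alpha|$.

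To finish I would estimate $\big|\partial^\alpha f_s(\boldsymbol 0)\big|\le\sum_{r=0}^{|\alpha|}\|\varphi^{(m+r)}\|_\infty\sum_{\pi:\,|\pi|=r}\E\big|\prod_{B\in\pi}\partial^B\Lambda(\boldsymbol 0)\big|$; note that the complicated functional $\Lambda_0$ never needs to be controlled, since $\varphi^{(m+r)}$ enters only through its supremum norm. The remaining expectations are finite by Cauchy--Schwarz and Gaussian hypercontractivity once one bounds, uniformly in $N_L$ and $L$, the variances of the Gaussian coefficients $\Lambda^{(1)}_j$ (the $\Lambda^{(2)}_{jj'}$ being deterministic and trivially bounded). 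Those variances reduce to $\Delta\theta_l$-weighted double sums of $\cov(M^l_{s,T},M^{l'}_{s,T})=\frac{1-e^{-(T-s)(\theta^p_l+\theta^p_{l'})}}{\theta^p_l+\theta^p_{l'}}$ together with variances of linear functionals of $\widetilde W$, which are dominated by their unconditional counterparts since conditioning only decreases variance; these are controlled by integrals of the type $\iint\theta^{1-p}\,\dd\theta\,\dd\eta$ exactly as in the proof of \cref{lem:approx-affine} and in \cref{sec:weak-rate-gen}, hence are finite because $p>2$ and independent of $N_L,L$. This gives the stated bound. The variants occurring elsewhere in \cref{sec:weak-rate-gen}, where an extra factor $M^j_{s,T}$ or $c_H\sum_l\Delta\theta_l M^l_{s,T}$ sits inside the inner conditional expectation, are handled identically: that factor contributes one more bounded centred-Gaussian term to each block product. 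I expect the only genuinely delicate point to be the bookkeeping that keeps the $N_L$-uniformity visible through the Fa\`{a} di Bruno expansion; the individual moment estimates are routine repetitions of the ``sum $\le$ integral, $p>2$'' arguments already used throughout the paper.
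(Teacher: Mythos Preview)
Your proof is correct and takes a genuinely different route from the paper's. The paper writes $f_s(\by)=\int \nu(\bar X_{t_i},\widetilde\bxi,\by;s)\,\varrho(\widetilde\bxi\mid\by)\,\dd\widetilde\bxi$ and differentiates the product under the integral sign: derivatives landing on $\nu$ produce higher derivatives of $\varphi$ via the flux computation of \cref{lem:fluxes}, while derivatives landing on the conditional Gaussian density $\varrho(\cdot\mid\by)$ produce polynomial-in-$\by$ factors (since the conditional mean is linear in $\by$ and the conditional covariance is $\by$-independent). You instead first absorb the conditioning into a Gaussian mean shift $W\mapsto m_y+\widetilde W$ on $[0,t_i]$, so that all $y$-dependence sits in the argument of $\varphi^{(m)}$ as an explicit quadratic polynomial $\Lambda(y)$ with random coefficients, and then apply Fa\`a di Bruno.

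Your approach buys transparency: because $\Lambda$ is quadratic, only partitions into blocks of size $\le 2$ survive, which makes the range $0\le j\le|\alpha|$ in the final bound immediate and renders the moment bookkeeping (and the claimed uniformity in $N_L,L$) more visible. The paper's approach is shorter to state but sketchier for $|\alpha|>1$, where the Leibniz rule produces mixed terms $(\partial^{\alpha'}\nu)(\partial^{\alpha-\alpha'}\varrho)$ that are not written out explicitly. One small caveat: your ``trivially bounded'' for the deterministic $\Lambda^{(2)}_{jj'}$ and the variance bounds for $\Lambda^{(1)}_j$ both pass through the regression functions $\psi_j$, hence through the inverse of the $k\times k$ covariance matrix of the conditioned components; this is harmless for the small $k$ actually used in \cref{sec:weak-rate-gen} (namely $k\le 2$), but the uniformity-in-$N_L,L$ you assert would need that inverse controlled if one wanted the statement for arbitrary $k$.
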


\begin{proof}(\cref{lem:bound-f-by-phi}) Recall that
  $\nu(\bar{\bZ}_{t_i}, s)$ is a deterministic function of the jointly
  Gaussian $\eta$-dimensional vector (with
  $\eta \defeq (N_L + 1)\times (i+1)$),
  \begin{equation*}
    \label{eq:lemma-xi}
    \bxi = (\bY_{\tau}, \Delta W_\tau)_{\tau = t_0, \dots, t_i} \,,
  \end{equation*}
  (cf.~\cref{eq:direct-sampling-joint}). Denoting the density as
  $\varrho(\bxi)$, we note $\bxi$ is mean zero and has
  variance-covariance $\bSigma$ given by the known quantities
  $\cov(Y^l_{t_i}, Y^k_{t_j})$, $\cov(Y^l_{t_i}, \Delta W_{t_j})$, and
  $\cov(\Delta W_{t_i}, \Delta W_{t_j})$, which have closed form
  expressions. For each $s$, the variable $\nu(\bar{\bZ}_{t_i}, s)$
  has a density proportional to $\varrho(\bxi)$,
  \begin{equation*}
    \dd{P_{\nu}} \propto \varrho(\bxi) \dd{\bxi}\,.
  \end{equation*}
  Writing $\by = (Y^{l_1}_{t_i}, \dots, Y^{l_k}_{t_i})$, a subset of
  components of $\bY_{t_i}$ of size $k = |\by|$ such that
  $k \leq N_L$, the function
  \begin{equation*}
    f_s(\by) = \E[\nu(\bar{\bZ}_{t_i}, s)\given \by]\,,
  \end{equation*}
  a deterministic function of $\by$, can be expressed in terms of a
  conditional Gaussian density. Partitioning
  $\bxi = (\widetilde{\bxi}, \by)$ and, likewise, the covariance matrix
  $\bSigma$ into components $\bSigma_{11}$ corresponding to
  $\widetilde{\bxi}$, $\bSigma_{22}$ to $\by$, and $\bSigma_{12}$ to the
  mixed terms, the the conditional Gaussian density
  $\varrho(\widetilde{\bxi} \given \by)$ has conditional mean
  \begin{equation*}
    \widetilde{\bmu} = \bSigma_{12}\bSigma_{22}^{-1} \by\,,
  \end{equation*}
  which is linear in $\by$, and conditional variance-covariance matrix
  \begin{equation*}
    \widetilde{\bSigma}
    = \bSigma_{22} - \bSigma_{12}^\top \bSigma_{11}^{-1} \bSigma_{12}\,,
  \end{equation*}
  that does not depend on $\by$.

  Rewritten in terms of this conditional density, $f$ is given by
  \begin{equation*}
    \label{eq:f-wrt-gauss-density}
    f_s(\by) = \int_{\rset^{\eta-k}}
    \nu( \bar{X}_{t_i}, \widetilde{\bxi}, \by; s)
    \varrho(\widetilde{\bxi} \given \by) \dd{\widetilde{\bxi}}\,, 
  \end{equation*}
  where
  \begin{equation*}
    \nu(\bar{X}_{t_i}, \widetilde{\bxi}, \by; s)
    = \E \bigl[ \varphi^{(m)}(\widehat{X}_T) \given \bZ_{s}
    = (\bar{X}_{t_i}, \widetilde{\bxi}, \by) \bigr]\,.
  \end{equation*}
  For a multiindex $\alpha = (\alpha_1, \dots, \alpha_k)$
  corresponding to the components of $\by$,
  \begin{equation*}
    \partial^\alpha = \frac{\partial^\alpha}{\partial \by^{\alpha}}
    = \frac{\partial^{\alpha_1}}{\partial y_{l_1}}
    \cdots \frac{\partial^{\alpha_k}}{\partial y_{l_k}}\,,
  \end{equation*}
  taking the derivative inside the integral we obtain
  \begin{equation}
    \label{eq:lemma-deriv-f}
    \begin{split}
      \partial^\alpha f_s(\by) &= \int_{\rset^{\eta-k}}
      \left\{\bigl(\partial^\alpha \nu(\bar{X}_{t_i}, \widetilde{\bxi},
        \by)\bigr) \varrho(\widetilde{\bxi} \given \by) +
        \nu(\bar{X}_{t_i}, \widetilde{\bxi}, \by) \partial^\alpha
        \varrho(\widetilde{\bxi}
        \given \by) \right\} \dd{\widetilde{\bxi}} \\
      &= \int_{\rset^{\eta-k}} \left\{ \partial^\alpha
        \nu(\bar{X}_{t_i}, \widetilde{\bxi}, \by) + \nu(\bar{X}_{t_i},
        \widetilde{\bxi}, \by) P_k(\by) \right\} \varrho(\widetilde{\bxi}
      \given \by) \dd{\widetilde{\bxi}}\,,
    \end{split}
  \end{equation}
  where
  \begin{equation*}
    \partial^\alpha \varrho(\widetilde{\bxi} \given \by)
    = P_{k} (\by) \varrho(\widetilde{\bxi} \given \by)\,,
  \end{equation*}
  for $P_{k}$ a polynomial of degree $k$ since
  \begin{equation*}
    \varrho(\widetilde{\bxi}\given \by)
    \propto \exp[-\tfrac{1}{2} (\widetilde{\bxi}-\widetilde{\bmu})^\top
    \widetilde{\bSigma}
    (\widetilde{\bxi} - \widetilde{\bmu})]
    = \exp[-\tfrac{1}{2} (\widetilde{\bxi}-\bSigma_{12}\bSigma_{22}^{-1}\by)^\top
    \widetilde{\bSigma}
    (\widetilde{\bxi} - \bSigma_{12}\bSigma_{22}^{-1} \by)]\,.
  \end{equation*}
  The remaining derivative in \cref{eq:lemma-deriv-f} follows
  similarly to the computation of the fluxes in
  \cref{lem:fluxes}, %
  \begin{equation*}
    \partial^{\alpha} \nu(\bar{X}_{t_i}, \widetilde{\bxi}, \by; s)
    = c_{H}^{|\alpha|} \E \Bigl[ \varphi^{(|\alpha|+m)}(\widehat{X}_{T})
    \prod_{j=1}^{|\alpha|}\bigl(\Delta\theta_{l_j} M^{l_j}_{s,T}\bigr)^{\alpha_j} 
    \given \bZ_s =  (\bar{X}_{t_i}, \widetilde{\bxi}, \by) \Bigr]\,,
  \end{equation*}
  and the estimate follows.
\end{proof}

 \end{document}